\numberwithin{equation}{section}
\newtheorem{theorem}{Theorem}[section]
\newtheorem{lemma}[theorem]{Lemma}
\newtheorem{fact}{Fact}[subsection]
\theoremstyle{definition}
\newtheorem{example}[theorem]{Example}
\newtheorem{remark}[theorem]{Remark}
\newtheorem{definition}[theorem]{Definition}
\numberwithin{equation}{section}
\numberwithin{equation}{subsection}
\renewcommand*{\theequation}{%
  \ifnum\value{subsection}=0 %
    \thesection
  \else
    \thesubsection
  \fi
  .\arabic{equation}%
}
\begin{document}

\title{Spacetime diffeomorphisms as matter fields}

\author{
Matteo Capoferri\thanks{MC:
Department of Mathematics,
University College London,
Gower Street,
London WC1E~6BT,
UK;
matteo.capoferri@gmail.com.
Current address: School of Mathematics, Cardiff University, Senghennydd Road, Cardiff CF24 4AG, UK.
}
\and
Dmitri Vassiliev\thanks{DV:
Department of Mathematics,
University College London,
Gower Street,
London WC1E~6BT,
UK;
d.vassiliev@ucl.ac.uk,
\url{http://www.ucl.ac.uk/\~ucahdva/};
DV was supported by EPSRC grant EP/M000079/1.
}}


\renewcommand\footnotemark{}


\maketitle
\begin{abstract}
We work on a 4-manifold equipped with Lorentzian metric $g$ and consider a volume-preserving diffeomorphism
which is the unknown quantity of our mathematical model.
The diffeomorphism defines a second Lorentzian metric $h$, the pullback of $g$.
Motivated by elasticity theory,
we introduce a Lagrangian
expressed algebraically (without differentiations) via our pair of metrics.
Analysis of the resulting nonlinear field equations
produces three main results.
Firstly, we show that for Ricci-flat manifolds our linearised
field equations are Maxwell's equations in the Lorenz gauge with exact current.
Secondly, for Minkowski space we construct explicit massless solutions of our nonlinear field equations;
these come in two distinct types,
right-handed and left-handed.
Thirdly, for Minkowski space we construct explicit massive solutions of our nonlinear field equations;
these contain a positive parameter
which has the geometric meaning of quantum mechanical mass and a real
parameter which may be interpreted as electric charge.
In constructing explicit solutions of nonlinear field equations
we resort to group-theoretic ideas: we identify special 4-dimensional subgroups
of the Poincar\'e group and seek diffeomorphisms compatible with their action, in a suitable sense.

\

{\bf Keywords: }
Lorentzian geometry, Maxwell equations, Dirac equation

\

{\bf MSC classes: }
primary
53C50
; secondary 
22E43
, 35Q41
, 35Q61
, 74B20.

\end{abstract}


\tableofcontents


\section{Introduction}

In this paper we propose a new mathematical model for a class of field theories in the Lorentzian setting. Inspired by the classical theory of elasticity (see, e.g., \cite{ciarlet1994,ciarlet2005}), we construct a Lagrangian out of a pair of metrics related by a spacetime diffeomorphism, which, in turn, represents the unknown of our model. The variation of our Lagrangian under the volume preservation condition produces a system of nonlinear partial differential equations, the field equations, whose analysis constitutes the main goal of the paper.

Our work possesses several elements of novelty. Firstly, in spite of relying on ideas from Riemannian elasticity, our theory is fully Lorentzian in that it deals with diffeomorphisms of the whole spacetime into itself, giving detailed account of the issues arising due to the indefinite signature. Secondly, our model incorporates a volume preservation condition into a theory of elasticity, leading to interesting mathematical consequences 
--- see, e.g., \eqref{incompressibility 2} and \eqref{linearisation of the volume preservation condition}. 
Thirdly, we suggest new techniques for solving nonlinear PDEs, ones of possibly broader relevance. Lastly, our construction gives rise to solutions that appear to be physically meaningful, with potential applications in the realm of theoretical and particle physics.

For the case of Minkowski spacetime, we provide two classes of explicit solutions,
massless and massive,
which, at least at a formal level, offer a natural physical interpretation
in terms of elementary particles, namely, neutrino/antineutrino
and electron/positron.
Our massive solution contains two free parameters.
Even though these parameters can be interpreted
as quantum mechanical mass and electric charge,
our model does not allow for their values to be determined.
We attribute this to the large number of symmetries implicitly present in our theory.
One would hope that appropriate symmetry breaking could
overcome this shortcoming of our mathematical model.

Our model is, effectively, a nonlinear version of Maxwell's theory.
The only dimensional parameter is the speed of light: it is encoded in the Minkowski
metric when we consider the case of flat spacetime.
All other parameters are dimensionless
and are contained in our Largrangian.

We develop our theory in dimension four and for pseudo-Riemannian manifolds of Lorentzian 
signature. In principle, neither assumption is necessary for its formulation.
However, the physical conclusions we derive are specific to dimension $3+1$.
In particular, dimension $3+1$ appears to be the lowest in which one observes
propagating massless solutions.

The paper is structured as follows.
In Section~\ref{Mathematical model} we present the mathematical formulation of our model.
In Section~\ref{Nonlinear field equations}
we derive the corresponding nonlinear field equations,
accounting for the volume preservation condition.
Section~\ref{Displacements and rotations}
is devoted to discussing the role of displacements and rotations;
in particular, we perform a detailed analysis of the deformation gradient
in terms of its Lorentzian polar decomposition.
Section~\ref{Linearised field equations} contains our first main result:
the linearised field equations and their connection with Maxwell's equations.
For Ricci-flat Lorentzian manifolds our model gives, in the linear approximation,
Maxwell's equations in the Lorenz gauge with exact current.
In Sections~\ref{Homogeneous diffeomorphisms}
and \ref{Special subgroups of the Poincare group}
we introduce the concept of homogeneous diffeomorphism
and special subgroups of the Poincar\'e group respectively.
These represent the group-theoretic tools
which lie at the foundation of our construction of  solutions to nonlinear PDEs.
Explicit solutions for Minkowski spacetime are presented in
Sections~\ref{Explicit massless solutions of nonlinear field equations}
and \ref{Explicit massive solutions of nonlinear field equations}.
Massless solutions described in
Section~\ref{Explicit massless solutions of nonlinear field equations}
come into two types:
right-handed and left-handed.
Massive solutions described in
Section~\ref{Explicit massive solutions of nonlinear field equations}
contain two free parameters:
a positive parameter which has the geometric meaning of quantum mechanical mass
and a real parameter which may be interpreted as electric charge.
Finally, in Section~\ref{Massless Dirac equation} and Section~\ref{Massive Dirac equation}
we present a formal argument, showing that our massless and massive solutions
can be associated with spinors satisfying
the massless and massive Dirac equations respectively.
This constitutes the first step towards possible future applications
of our model in theoretical and particle physics,
e.g.~in quantum electrodynamics,
see also \cite{sesqui} where a volume preservation condition appears in a somewhat different setting.
The paper is complemented by four appendices
dealing with notation and auxiliary technical results.

\section{Mathematical model}
\label{Mathematical model}

Let $M$ be a connected 4-manifold.
Local coordinates on $M$ will be denoted by $x=(x^1,x^2,x^3,x^4)$ or $y=(y^1,y^2,y^3,y^4)$.

We assume that our manifold $M$ is equipped with Lorentzian metric $g$
with signature
\linebreak
$+++-\,$.
Throughout this paper the metric $g$ is assumed to be prescribed.

The unknown quantity in  our mathematical model
is a diffeomorphism $\varphi:M\to M$.
We will denote the group of diffeomorphisms by $\mathrm{Diff}(M)$.

Let us introduce a new (perturbed) Lorentzian metric $h$ defined as the pullback of $g$ via $\varphi$,
$h:=\varphi^*g$.
In local coordinates this new metric is written as follows.
Take an arbitrary point $P\in M$ and choose local coordinates $x$ and $y$
in the neighbourhoods of $P$ and $\varphi(P)$ respectively.
Our diffeomorphism $\varphi$ can then be written locally as
\begin{equation}
\label{local representation of diffeomorphism}
y=\varphi(x).
\end{equation}
The new metric tensor reads
\begin{equation}
\label{definition of perturbed metric}
h_{\alpha\beta}(x):=
g_{\mu\nu}(\varphi(x))
\frac{\partial\varphi^\mu}{\partial x^\alpha}
\frac{\partial\varphi^\nu}{\partial x^\beta}\,.
\end{equation}
The $g_{\mu\nu}$ in the RHS of \eqref{definition of perturbed metric}
is the representation of the metric tensor $g$ in local coordinates $y$.

The following non-rigorous physical argument along
the lines of
\cite{MR0106584}
explains the geometric meaning of the tensor
\eqref{definition of perturbed metric}.
Consider two points, $x$~and $x+\Delta x$.
The interval
(Lorentzian analogue of `distance squared')
between these two points is $g_{\alpha\beta}(x)\,\Delta x^\alpha\Delta x^\beta$.
Our diffeomorphism maps $x$~and $x+\Delta x$ to
$\varphi(x)$ and $\varphi(x+\Delta x)\approx\varphi(x)+\frac{\partial\varphi}{\partial x^\alpha}\Delta x^\alpha$
respectively.
The interval between
$\varphi(x)$ and $\varphi(x)+\frac{\partial\varphi}{\partial x^\alpha}\Delta x^\alpha$
is
$
g_{\mu\nu}(\varphi(x))
\left(\frac{\partial\varphi^\mu}{\partial x^\alpha}\Delta x^\alpha\right)
\left(\frac{\partial\varphi^\nu}{\partial x^\beta}\Delta x^\beta\right)
$,
which, in view of \eqref{definition of perturbed metric},
can be rewritten concisely as $h_{\alpha\beta}(x)\,\Delta x^\alpha\Delta x^\beta$.
Therefore, the metric $h$ describes the interval between points of the
deformed continuum.

Having at our disposal two Lorentzian metrics, $g$ and $h$,
we can now write down an action.
To this end, let us first introduce some definitions.

\begin{definition}
The tensor
\begin{equation}
\label{definition of strain formula}
S^\alpha{}_\beta(x):=
[g^{\alpha\gamma}(x)]\,[h_{\gamma\beta}(x)]-\delta^\alpha{}_\beta
\end{equation}
is called \emph{strain}.
\end{definition}

The concept of strain tensor originates from the papers of
Cauchy \cite{cauchy1, cauchy2}.

The strain tensor describes, pointwise, a linear map in the fibres of the tangent bundle,
\begin{equation}
\label{strain as a linear map}
v^\alpha\mapsto S^\alpha{}_\beta\,v^\beta.
\end{equation}
The algebraic motivation for the introduction of the map
\eqref{strain as a linear map} is explained in
Appendix~\ref{Linear algebra involving a pair of quadratic forms}.

Let us now construct scalars out of a strain tensor.
This can be done in many different ways but only four, at most, will be
independent. An arbitrary scalar can be expressed, possibly in a nonlinear fashion,
via the four chosen independent scalars.
The obvious way of choosing four independent scalars is
$\operatorname{tr}(S^k)$, $k=1,2,3,4$,
but such a choice is inconvenient as it would make subsequent calculations cumbersome.
The most convenient choice of four scalar invariants is
\begin{subequations}\label{scalar invariants}
\begin{align}
\label{scalar invariant 1}
e_1(\varphi)&:=
\operatorname{tr}S,
\\
\label{scalar invariant 2}
e_2(\varphi)&:=
\frac12
\left[
(\operatorname{tr}S)^2-\operatorname{tr}(S^2)
\right],
\\
\label{scalar invariant 3}
e_3(\varphi)&:=
\operatorname{tr}\operatorname{adj}S,
\\
\label{scalar invariant 4}
e_4(\varphi)&:=
\det S.
\end{align}
\end{subequations}
Here $\,\operatorname{tr}\,$ is the matrix trace and $\,\operatorname{adj}\,$
is the operator of matrix adjugation from linear algebra.

The reasoning behind the particular choice
\eqref{scalar invariant 1}--\eqref{scalar invariant 4}
becomes clear if we rewrite these invariants in terms of the eigenvalues of strain.
The strain tensor \eqref{definition of strain formula},
viewed as a linear operator
\eqref{strain as a linear map}
acting in $\mathbb{C}^4$
has eigenvalues
$\lambda_k$, $k=1,2,3,4$,
enumerated with account of their algebraic multiplicity.
Note that some eigenvalues may be complex, in which case
they come in complex conjugate pairs. It is easy to see that formulae
\eqref{scalar invariant 1}--\eqref{scalar invariant 4} can be rewritten as
\begin{subequations}
\begin{align}
\label{scalar invariant lambda 1}
e_1(\varphi)&=
\lambda_1+\lambda_2+\lambda_3+\lambda_4\,,
\\
\label{scalar invariant lambda 2}
e_2(\varphi)&=
\lambda_1\lambda_2+\lambda_1\lambda_3+\lambda_1\lambda_4
+\lambda_2\lambda_3+\lambda_2\lambda_4+\lambda_3\lambda_4\,,
\\
\label{scalar invariant lambda 3}
e_3(\varphi)&=
\lambda_1\lambda_2\lambda_3
+\lambda_1\lambda_2\lambda_4
+\lambda_1\lambda_3\lambda_4
+\lambda_2\lambda_3\lambda_4\,,
\\
\label{scalar invariant lambda 4}
e_4(\varphi)&=
\lambda_1\lambda_2\lambda_3\lambda_4\,.
\end{align}
\end{subequations}
The advantage of choosing scalar invariants in this particular way
is that the polynomials appearing in the right-hand sides of formulae
\eqref{scalar invariant lambda 1}--\eqref{scalar invariant lambda 4}
are elementary symmetric polynomials.

Note that our scalars $\,e_k\,$
are spectral invariants:
we are looking at quantities that are determined by the spectrum of the linear map
\eqref{strain as a linear map}.
Our definition of scalar invariants is similar to that in
\cite[(3.56)]{spencer}, the only difference being that we have four scalar invariants
instead of three --- a consequence of us adopting a 4-dimensional relativistic approach.

Our action then is
\begin{equation}
\label{action 1}
\mathcal{J}(\varphi):=
\int_M\mathcal{L}\bigl(e_1(\varphi),e_2(\varphi),e_3(\varphi),e_4(\varphi)\bigr)
\,\sqrt{-\det g_{\mu\nu}(x)}
\ dx\,,
\end{equation}
where
$\mathcal{L}$ is some prescribed smooth real-valued function of four real variables
such that
\linebreak
$\mathcal{L}(0,0,0,0)=0$
and $dx:=dx^1dx^2dx^3dx^4$.
Variation of \eqref{action 1} with respect to the unknown diffeormorphism
$\varphi\in\mathrm{Diff}(M)$ generates field equations
which can be thought of as a Lorentzian version
of nonlinear elasticity.

The physical assumptions underlying our choice of action
\eqref{action 1} are isotropy and homo\-geneity of our 4-dimensional continuum.
Isotropy is expressed mathematically in that the integrand
$\mathcal{L}$ in \eqref{action 1} is a symmetric function of the eigenvalues
of the map \eqref{strain as a linear map}.
Homogeneity is expressed mathematically in that the integrand
$\mathcal{L}$ in \eqref{action 1} does not depend explicitly on $x$.

Two important examples of Lagrangians are given below.

\begin{example}[Linear Lagrangian]
\label{example of Linear Lagrangian}
The unique, up to rescaling, Lagrangian linear in strain is
\begin{equation}
\label{harmonic map action}
\mathcal{L}(e_1,e_2,e_3,e_4)=e_1.
\end{equation}
This is the action of a harmonic map,
see \cite{eells,wood},
the only difference being that
in our paper the metric is assumed to have Lorentzian signature.
\end{example}

\begin{example}[Quadratic Lagrangian]
\label{example of Quadratic Lagrangian}
The general form of a Lagrangian quadratic
(homogeneous of degree two) in strain is
\begin{equation}
\label{L mathcal quadratic}
\mathcal{L}(e_1,e_2,e_3,e_4)=\alpha(e_1)^2+\beta\,e_2\,,
\end{equation}
where $\alpha,\beta\in\mathbb{R}$ are parameters.
In the 3-dimensional Riemannian setting the above Lagrangian
is used in the theory of elasticity:
it describes a static isotropic homogeneous elastic continuum that
is physically linear but geometrically nonlinear.
The standard assumption in elasticity theory is
\begin{equation}
\label{beta nonzero}
\beta\ne0.
\end{equation}
Under the assumption \eqref{beta nonzero}
the Lagrangian \eqref{L mathcal quadratic} contains, up to rescaling,
only one dimensionless parameter, $\alpha/\beta$.
In elasticity theory the parameters
$\lambda=2\alpha+\beta$
and
$\mu=-\beta/2$
are called \emph{Lam\'e parameters}
and the parameter
$
\,\nu=\frac{2\alpha+\beta}{4\alpha+\beta}\,
$
is called \emph{Poisson's ratio}.
\end{example}

\begin{remark}
Our mathematical model does not involve the concepts of connection and curvature.
Moreover, it is easy to see that if the unperturbed metric $g$ is flat
then the perturbed metric $h$ is flat as well.
Our model is different from those commonly used in theories
of bimetric gravity
\cite{rosen,derham,hassan,schmidt},
even though the mathematical formalism is quite similar.
\end{remark}

Field equations
for the action \eqref{action 1}
are not the equations that we will be studying.
We choose to impose, in addition, the volume preservation constraint
\begin{equation}
\label{incompressibility 1}
\det g_{\alpha\beta}(x)=\det h_{\mu\nu}(x).
\end{equation}
In other words, we choose to restrict our
analysis to the subgroup of volume-preserving diffeomorphisms
$\mathrm{Diff}_\rho(M)\subset\mathrm{Diff}(M)$.
Here
\begin{equation}
\label{defiition of Lorentzian density}
\rho(x):=\sqrt{-\det g_{\alpha\beta}(x)}
\end{equation}
is the Lorentzian density of the unperturbed metric.

The condition for a diffeomorphism to be 
volume preserving reads, locally,
\begin{equation}
\label{definition of volume preservation}
\rho(x)=
\rho(\varphi(x))
\left|
\det\left(\frac{\partial\varphi^\alpha}{\partial x^\beta}\right)
\right|
.
\end{equation}
The $\rho$ in the LHS of \eqref{definition of volume preservation}
is the representation of the density $\rho$ in local coordinates $x$,
whereas
the $\rho$ in the RHS of \eqref{definition of volume preservation}
is the representation of the density $\rho$ in local coordinates~$y$,
see \eqref{local representation of diffeomorphism}.


In spectral-theoretic fashion,
the volume preservation constraint
\eqref{incompressibility 1} can be equivalently rewritten as
\begin{equation}
\label{incompressibility 2}
e_1(\varphi)+e_2(\varphi)+e_3(\varphi)+e_4(\varphi)=0\,.
\end{equation}
Indeed, formula \eqref{definition of strain formula} implies that condition \eqref{incompressibility 1} is equivalent to $\det(S+\mathrm{Id})=1$. The elementary identity from linear algebra
\begin{equation*}
\det(S+\mathrm{Id})=1+\operatorname{tr}S+\frac12
\left[
(\operatorname{tr}S)^2-\operatorname{tr}(S^2)
\right]
+
\operatorname{tr}\operatorname{adj}S
+
\det S
\end{equation*}
and \eqref{scalar invariant 1}--\eqref{scalar invariant 4} then give us \eqref{incompressibility 2}.

Formula \eqref{incompressibility 2}
allows us to express one of the four scalar invariants via the other three.
It is convenient to express $e_1$ via $e_2$, $e_3$ and $e_4$.
Then our action \eqref{action 1} takes the form
\begin{equation}
\label{action 2}
J(\varphi)=
\int_ML\bigl(e_2(\varphi),e_3(\varphi),e_4(\varphi)\bigr)
\,\rho(x)
\,dx\,,
\end{equation}
where
\begin{equation}
\label{relation between two Ls}
L(e_2,e_3,e_4)=\mathcal{L}(-e_2-e_3-e_4,e_2,e_3,e_4).
\end{equation}

Our mathematical model is formulated as follows:
vary the action \eqref{action 2} over volume preserving diffeomorphisms
$\mathrm{Diff}_\rho(M)$
and seek critical points.
The $L$ appearing in formula \eqref{action 2}
is some prescribed smooth real-valued function of three real variables
which characterises the physical properties
of our 4-dimensional isotropic homogeneous continuum.

We shall now impose two conditions
on the choice of the Lagrangian $L$.

\begin{itemize}
\item[]
\textbf{Condition 1\ }
We assume that
\begin{subequations}
\begin{equation}
\label{condition 1 formula initial}
\left.
\frac{\partial L}{\partial e_2}
\right|_{e_2=e_3=e_4=0}
\ne0,
\end{equation}
which is the minimal non-degeneracy condition.
This will be required
in Section~\ref{Linearised field equations}
where we will show that in a Ricci-flat spacetime our linearised field equations reduce
to Maxwell's equations.
Without loss of generality we assume further on that
\begin{equation}
\label{condition 1 formula 1}
\left.
\frac{\partial L}{\partial e_2}
\right|_{e_2=e_3=e_4=0}
=-1,
\end{equation}
\end{subequations}
which can always be achieved by rescaling.

\item[]
\textbf{Condition 2\ }
We assume that the function of one variable
$L(e_2,0,0)$ has a critical point on the positive real axis:
\begin{equation}
\label{condition 2 formula 1}
\left.
\frac{\partial L}{\partial e_2}
\right|_{e_2=c,\ e_3=e_4=0}
=0
\quad\text{for some}\quad c>0.
\end{equation}
This will be required
in Section~\ref{Explicit massive solutions of nonlinear field equations}
where we will construct explicit massive solutions
of our nonlinear field equations in Minkowski spacetime.
\end{itemize}

\begin{example}
[Examples
\ref{example of Linear Lagrangian}
and
\ref{example of Quadratic Lagrangian}
continued]
For the Lagrangian
\eqref{relation between two Ls}, \eqref{harmonic map action}
we get precisely \eqref{condition 1 formula 1},
whereas for the Lagrangian
\eqref{relation between two Ls}, \eqref{L mathcal quadratic}
we get
\[
\left.
\frac{\partial L}{\partial e_2}
\right|_{e_2=e_3=e_4=0}
=\beta,
\]
so condition
\eqref{condition 1 formula initial} is satisfied
when we have \eqref{beta nonzero}.

As to condition
\eqref{condition 2 formula 1},
it is not satisfied for the Lagrangian
\eqref{relation between two Ls}, \eqref{harmonic map action},
whereas for the Lagrangian
\eqref{relation between two Ls}, \eqref{L mathcal quadratic}
it is satisfied if and only if $\alpha\beta<0$.
\end{example}

\section{Nonlinear field equations}
\label{Nonlinear field equations}

Recall that the action in our mathematical model is defined by formula
\eqref{action 2}. Our field equations are obtained by varying
this action with respect to the unknown diffeomorphism $\varphi$
subject to the volume preservation constraint
\eqref{incompressibility 1}.

In order to write down the field equations let us initially
disregard the constraint \eqref{incompressibility 1}
and argue along the lines of \cite[Chapter 8]{jost}.
In local coordinates our action \eqref{action 2} can be written as
\begin{equation}
\label{temp1}
J(\varphi)=
\int
f
\left(
x^\alpha,\varphi^\beta,\frac{\partial\varphi^\gamma}{\partial x^\kappa}
\right)
\rho(x)\,
dx\,,
\end{equation}
where $\varphi^\beta$ is
the local representation \eqref{local representation of diffeomorphism}
of our diffeomorphism
and $f$ is some function of $x$, $\varphi(x)$ and the first partial derivatives
of $\varphi(x)$.
We vary $\varphi(x)$ as
\begin{equation}
\label{temp2}
\varphi^\beta(x)\mapsto\varphi^\beta(x)+\Delta\varphi^\beta(x),
\end{equation}
where $\Delta\varphi^\beta(x)$ is a small
smooth perturbation with small compact support.
Standard variational arguments
involving integration by parts
give us the variation of \eqref{temp1} in the form
\begin{equation}
\label{temp3}
\Delta J(\varphi)=
\int
E_\lambda
\left(
x^\alpha,\varphi^\beta,\frac{\partial\varphi^\gamma}{\partial x^\kappa}\,,
\frac{\partial^2\varphi^\sigma}{\partial x^\mu\partial x^\nu}
\right)
\Delta\varphi^\lambda
\,\rho(x)\,
\,dx\,.
\end{equation}

The quantity $E_\lambda$ appearing in the RHS of \eqref{temp3} is a two-point tensor:
it behaves as a scalar under changes of local coordinates $x$
and as a covector under changes of local coordinates $y$.
Hence,
\begin{equation}
\label{temp4}
\varphi\mapsto
E_\lambda
\left(
x^\alpha,\varphi^\beta,\frac{\partial\varphi^\gamma}{\partial x^\kappa}\,,
\frac{\partial^2\varphi^\sigma}{\partial x^\mu\partial x^\nu}
\right)
\end{equation}
is an invariantly defined map from diffeomorphisms to
covector fields.

We write the RHS of \eqref{temp4} in concise form as $E(\varphi)$. Thus, the field equations for the unconstrained
action \eqref{action 2} read
\begin{equation}
\label{temp5}
E(\varphi)=0.
\end{equation}
This is a system of four nonlinear second order partial differential equations for four unknowns,
the functions $\varphi^\alpha(x)$, $\alpha=1,2,3,4$, appearing in
the local representation \eqref{local representation of diffeomorphism}
of our diffeo\-morphism $\varphi$.

A detailed algorithm for the construction of the nonlinear differential operator $E$
is provided in
Appendix~\ref{Explicit field equations of nonlinear elasticity}.
However, 
we do not need the explicit form of $E$ for our purposes.
Even when we will be writing particular solutions of our nonlinear field equations,
see Sections
\ref{Explicit massless solutions of nonlinear field equations}
and
\ref{Explicit massive solutions of nonlinear field equations},
we will do this without using the explicit form of the operator $E$.

\begin{remark}
Straightforward analysis shows that the identity map is a solution
of \eqref{temp5}. Furthermore, any isometry from
$(M,g)$ to itself is a solution.
\end{remark}

Let us now incorporate the volume preservation constraint \eqref{incompressibility 1}
by adding to our original action \eqref{action 2} the term
\begin{equation}
\label{temp6}
K(\varphi,p):=
\int
\bigl[p(\varphi(x))\bigr]\,\bigl[\rho_\varphi(x)-\rho(x)\bigr]
\,dx\,,
\end{equation}
where
$\,\rho_\varphi(x):=
\sqrt{-\det h_{\alpha\beta}(x)}\,$
and $p:M\to\mathbb{R}$ is an additional unknown scalar
function playing the role of a Lagrange multiplier.
The function $p$ can be interpreted as pressure, cf.~\cite{ozanski}.

We will now vary our diffeomorphism as in \eqref{temp2}.

\begin{lemma}
\label{lemma about pressure}
The formula for the variation of the functional \eqref{temp6} reads
\begin{equation}
\label{variation of the volume preservation functional}
\Delta K(\varphi,p)=
-
\int
\left[
\frac{\partial p}{\partial y^\alpha}
(\varphi(x))
\right]
\bigl[
\Delta\varphi^\alpha(x)
\bigr]
\,\bigl[\rho(x)\bigr]
\,dx\,.
\end{equation}
\end{lemma}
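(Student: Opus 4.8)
The plan is to split the functional \eqref{temp6} into the two summands
\[
K(\varphi,p)=\int[p(\varphi(x))]\,\rho_\varphi(x)\,dx-\int[p(\varphi(x))]\,\rho(x)\,dx
\]
and to show that the first summand is in fact independent of $\varphi$, so that the whole variation is carried by the second summand. The first step I would take is purely algebraic: directly from the definition \eqref{definition of perturbed metric} one has, in matrix form, $\det h_{\alpha\beta}=\det g_{\mu\nu}(\varphi(x))\,\bigl(\det(\partial\varphi^\mu/\partial x^\alpha)\bigr)^2$, whence, taking square roots and using \eqref{defiition of Lorentzian density}, the identity $\rho_\varphi(x)=\rho(\varphi(x))\,\bigl|\det(\partial\varphi^\alpha/\partial x^\beta)\bigr|$ holds for \emph{every} diffeomorphism (not merely the volume-preserving ones, for which it would reduce to \eqref{definition of volume preservation}).

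Next I would treat the first summand by recognising it as a change of variables. Substituting $y=\varphi(x)$ turns $\bigl|\det(\partial\varphi/\partial x)\bigr|\,dx$ into $dy$ and $\rho(\varphi(x))$ into $\rho(y)$, so that $\int[p(\varphi(x))]\,\rho_\varphi(x)\,dx=\int[p(y)]\,\rho(y)\,dy$. Since $\varphi$ maps $M$ diffeomorphically onto $M$, the right-hand side is a constant that does not depend on $\varphi$ at all; consequently its variation under \eqref{temp2} vanishes identically.

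It then remains to vary the second summand, in which only the factor $p(\varphi(x))$ depends on $\varphi$, the density $\rho(x)$ being prescribed. Expanding by the chain rule, $p(\varphi(x)+\Delta\varphi(x))=p(\varphi(x))+\bigl[(\partial p/\partial y^\alpha)(\varphi(x))\bigr]\,\Delta\varphi^\alpha(x)+o(\Delta\varphi)$, and retaining the first-order term, gives exactly $-\int\bigl[(\partial p/\partial y^\alpha)(\varphi(x))\bigr]\,[\Delta\varphi^\alpha(x)]\,[\rho(x)]\,dx$, which is the asserted formula \eqref{variation of the volume preservation functional}.

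The step requiring the most care is the vanishing of the variation of the first summand. A naive approach would differentiate $\rho_\varphi(x)=\rho(\varphi(x))\bigl|\det(\partial\varphi/\partial x)\bigr|$ directly in $\varphi$, forcing one to vary a Jacobian determinant and producing several terms requiring integration by parts; the change-of-variables observation bypasses this entirely, since once the first summand is identified with the $\varphi$-independent constant $\int p(y)\,\rho(y)\,dy$ its variation is zero with no further computation. The only hypothesis one must invoke is that the perturbed map $\varphi+\Delta\varphi$ remain a diffeomorphism of $M$ onto $M$ of the same orientation as $\varphi$, so that the Jacobian keeps its sign and the absolute value may be dropped consistently; this is guaranteed because $\Delta\varphi$ is a small perturbation of compact support.
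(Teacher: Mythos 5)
Your proof is correct, but it follows a genuinely different route from the paper's. The paper decomposes the variation by which occurrence of $\varphi$ is varied (a product rule): one contribution where $p\circ\varphi$ is varied with the densities frozen, and one where $\rho_\varphi$ is varied with $p\circ\varphi$ frozen; the second contribution is computed by passing to coordinates $y=\boldsymbol\varphi(x)$, invoking the linearised transport formula $\Delta\mu_\varphi(y)=\partial_{y^\alpha}\bigl(\mu_{\boldsymbol\varphi}(y)\,\Delta\varphi^\alpha(\boldsymbol\varphi^{-1}(y))\bigr)$ and integrating by parts, after which the terms proportional to $\rho_{\boldsymbol\varphi}$ cancel between the two contributions. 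You instead split the integrand of \eqref{temp6} into two integrals, note the pointwise identity $\rho_\varphi(x)=\rho(\varphi(x))\,\bigl|\det(\partial\varphi^\alpha/\partial x^\beta)\bigr|$ valid for \emph{every} diffeomorphism (immediate from \eqref{definition of perturbed metric} and \eqref{defiition of Lorentzian density}), and conclude by change of variables that $\int p(\varphi(x))\,\rho_\varphi(x)\,dx=\int p(y)\,\rho(y)\,dy$ is independent of $\varphi$, so the entire variation is carried by the elementary term $-\int p(\varphi(x))\,\rho(x)\,dx$, which needs only the chain rule. This is slicker: no transport formula, no integration by parts, and it explains conceptually why the $\rho_{\boldsymbol\varphi}$-terms in the paper's computation must cancel --- the diffeomorphism-invariant integral they come from is constant. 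One caveat deserves mention: since neither $p$ nor $\rho_\varphi-\rho$ is assumed integrable, your two summands may separately diverge, so the splitting should strictly be read at the level of increments; this is easily repaired, because for a perturbation \eqref{temp2} of compact support $S$ the perturbed map agrees with $\varphi$ outside $S$ and both are bijections of $M$, hence they map $S$ onto the same set and the increment of your first summand vanishes exactly, not merely formally. With that reading your argument is at the same level of rigour as the paper's own (boundary-term-free) integration by parts, so this is a cosmetic point rather than a gap.
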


\begin{proof}
Observe that the diffeomorphism $\varphi$ appears in formula \eqref{temp6} twice,
so
\begin{equation}
\label{sum of two variations}
\Delta K(\varphi,p)
=
\Delta K_1(\varphi,p)
+
\Delta K_2(\varphi,p),
\end{equation}
where
\begin{equation}
\label{temp6 subscript 1}
K_1(\varphi,p):=
\int
\bigl[p(\varphi(x))\bigr]\,\bigl[\rho_{\boldsymbol\varphi}(x)-\rho(x)\bigr]
\,dx\,,
\end{equation}
\begin{equation}
\label{temp6 subscript 2}
K_2(\varphi,p):=
\int
\bigl[p(\boldsymbol\varphi(x))\bigr]\,\bigl[\rho_\varphi(x)\bigr]
\,dx\,,
\end{equation}
the bold script indicating that this particular occurrence of $\varphi$ is not
subject to variation \eqref{temp2}.

Variation of \eqref{temp6 subscript 1} gives us
\begin{equation}
\label{temp6 subscript 1 variation 1}
\Delta K_1(\varphi,p)=
\int
\left[
\frac{\partial p}{\partial y^\alpha}
(\boldsymbol\varphi(x))
\right]
\bigl[
\Delta\varphi^\alpha(x)
\bigr]
\,\bigl[\rho_{\boldsymbol\varphi}(x)-\rho(x)\bigr]
\,dx\,.
\end{equation}

In order to calculate the variation of \eqref{temp6 subscript 2} we switch from local coordinates $x$
to local coordinates $y$ in accordance with
$\,y=\boldsymbol\varphi(x)\,$.
Formula \eqref{temp6 subscript 2} now reads
\begin{equation}
\label{temp6 subscript 2 variation 1}
K_2(\varphi,p)=
\int
\bigl[p(y)\bigr]\,\bigl[\mu_\varphi(y)\bigr]
\,dy^1dy^2dy^3dy^4,
\end{equation}
where $\mu_\varphi$ is the representation of the density $\rho_\varphi$
in local coordinates $y$.
A straightforward calculation,
see also
\eqref{incompressibility condition linearised}
and
\eqref{without covariant derivatives 2},
shows that
\begin{equation}
\label{temp6 subscript 2 variation 2}
\Delta\mu_\varphi(y)=
\frac
{
\partial
\bigl(
\bigl[\mu_{\boldsymbol\varphi}(y)\bigr]
\bigl[
\Delta\varphi^\alpha(\boldsymbol\varphi^{-1}(y))
\bigr]
\bigr)
}
{\partial y^\alpha}\,.
\end{equation}
{
Indeed, Jacobi's formula tells us that for any invertible matrix $A$ we have
\begin{equation}
\label{Jacobi formula}
\Delta(\det A)= (\det A)\operatorname{tr} \left(A^{-1}\, \Delta A\right).
\end{equation}
Applying \eqref{Jacobi formula} to the density $\rho_\varphi(x)$ we obtain
\begin{equation}
\label{variation of density basic}
\Delta \rho_\varphi(x)=\frac12 \rho_{\boldsymbol\varphi}(x)\, h^{\alpha\beta}(x)\, \Delta h_{\alpha\beta}(x).
\end{equation}
The variation of \eqref{definition of perturbed metric} reads
\begin{equation}
\label{variation of h}
\Delta h_{\alpha\beta}(x)=
\frac{\partial g_{\mu\nu}}{\partial y^\gamma}(\boldsymbol\varphi(x)) \Delta \varphi^\gamma
\frac{\partial\boldsymbol\varphi^\mu}{\partial x^\alpha}
\frac{\partial\boldsymbol\varphi^\nu}{\partial x^\beta}+
g_{\mu\nu}(\boldsymbol\varphi(x)) 
\frac{\partial\Delta\varphi^\mu}{\partial x^\alpha}
\frac{\partial\boldsymbol\varphi^\nu}{\partial x^\beta}
+
g_{\mu\nu}(\boldsymbol\varphi(x)) 
\frac{\partial\boldsymbol\varphi^\mu}{\partial x^\alpha}
\frac{\partial\Delta\varphi^\nu}{\partial x^\beta}.
\end{equation}
Formula \eqref{definition of perturbed metric} implies
that the inverse metric tensor $h^{\alpha\beta}$ can be written as
\begin{equation}
\label{h inverse in terms of psi}
h^{\alpha\beta}(x)=g^{\mu\nu}(\varphi(x))\, \psi_\mu{}^\alpha\,\psi_\nu{}^\beta,
\end{equation}
where the two-point tensor $\psi_\alpha{}^\beta$ is defined by the identity
\begin{equation}
\label{identity for psi}
\psi_\alpha{}^\beta(x)
\ \frac{\partial\varphi^\gamma}{\partial x^\beta}
(x)
=\delta_\alpha{}^\gamma.
\end{equation}
Substituting  \eqref{variation of h} and \eqref{h inverse in terms of psi} into \eqref{variation of density basic}
and using \eqref{identity for psi}, we get
\[
\Delta \rho_\varphi(x)
=
\rho_{\boldsymbol\varphi}(x)
\left[
\frac12
g^{\mu\nu}(\boldsymbol\varphi(x))
\frac{\partial g_{\mu\nu}}{\partial y^\gamma}(\boldsymbol\varphi(x))
\Delta \varphi^\gamma
+
\frac{\partial\Delta\varphi^\mu}{\partial y^\mu}(x)
\right].
\]
The above formula and the fact that
\[
\frac{\Delta\mu_\varphi}{\mu_{\boldsymbol\varphi}}(\boldsymbol\varphi(x))
=
\frac{\Delta\rho_\varphi}{\rho_{\boldsymbol\varphi}}(x)
\]
imply \eqref{temp6 subscript 2 variation 2}.
}

Substituting
\eqref{temp6 subscript 2 variation 2}
into
\eqref{temp6 subscript 2 variation 1}
and integrating by parts, we get
\begin{equation}
\label{temp6 subscript 2 variation 3}
\Delta K_2(\varphi,p)=
-
\int
\left[
\frac{\partial p}{\partial y^\alpha}
(y)
\right]
\bigl[
\Delta\varphi^\alpha(\boldsymbol\varphi^{-1}(y))
\bigr]
\,\bigl[\mu_{\boldsymbol\varphi}(y)\bigr]
\,dy^1dy^2dy^3dy^4.
\end{equation}
It remains only to switch back
from local coordinates $y$
to local coordinates $x$.
Formula \eqref{temp6 subscript 2 variation 3} becomes
\begin{equation}
\label{temp6 subscript 2 variation 4}
\Delta K_2(\varphi,p)=
-
\int
\left[
\frac{\partial p}{\partial y^\alpha}
(\boldsymbol\varphi(x))
\right]
\bigl[
\Delta\varphi^\alpha(x)
\bigr]
\,\bigl[\rho_{\boldsymbol\varphi}(x)\bigr]
\,dx\,.
\end{equation}

Substituting
\eqref{temp6 subscript 1 variation 1}
and
\eqref{temp6 subscript 2 variation 4}
into
\eqref{sum of two variations}
we arrive at
\eqref{variation of the volume preservation functional}.
\end{proof}

Lemma \ref{lemma about pressure} tells us that
the field equations for the constrained
action \eqref{action 2} read
\begin{equation}
\label{temp5 constrained}
E(\varphi)-\mathrm{d}p=0,
\end{equation}
where $\mathrm{d}p$ is the gradient of pressure $p$.
Equations \eqref{temp5 constrained} have to be supplemented by
the volume preservation condition \eqref{incompressibility 1}.

The term $\mathrm{d}p$ appearing in formula \eqref{temp5 constrained}
can be written in local coordinates as
\begin{equation}
\label{dp with Jacobian}
(\mathrm{d}p)_\alpha(x)=\psi_\alpha{}^\beta(x)
\ \frac{\partial(p\circ\varphi)}{\partial x^\beta}(x)\,,
\end{equation}
where the two-point tensor $\psi_\alpha{}^\beta$ was defined in \eqref{identity for psi}.

Formulae \eqref{temp5 constrained} and \eqref{incompressibility 1}
give us a system of five partial differential equations for five unknowns,
the functions $\varphi^\alpha(x)$, $\alpha=1,2,3,4$, appearing in
the local representation \eqref{local representation of diffeomorphism}
of our diffeomorphism $\varphi$ and the scalar field $(p\circ\varphi)(x)$.

\section{Displacements and rotations}
\label{Displacements and rotations}

Suppose that our diffeomorphism $\varphi:M\to M$
is sufficiently close to the identity map.
Then it can be described by a vector field of displacements $A$.
This vector field can be equivalently defined in two different ways.

Take an arbitrary point $P\in M$ and let $\Omega\subset M$
be a normal, with respect to $g$, neighbourhood of $P$.
As $\varphi$ is close to the identity map we can assume, without loss of generality,
that $\varphi(P)\in\Omega$.
Let $\gamma:[0,1]\to\Omega$ be the unique shortest geodesic, with respect to $g$,
connecting $P$ and $\varphi(P)$, so that $\gamma(0)=P$ and $\gamma(1)=\varphi(P)$.
Furthermore, let us parameterize our geodesic in such a way that
$\gamma(\tau)$ is a solution of the equation
\begin{equation*}
\ddot\gamma^\lambda+
\Gamma^\lambda{}_{\mu\nu}
\dot\gamma^\mu\dot\gamma^\nu=0,
\end{equation*}
where the $\Gamma^\lambda{}_{\mu\nu}$ are Christoffel symbols
and the dot stands for differentiation in $\tau$.
Then
\begin{equation}
\label{definition of displacement 1}
A(P):=\dot\gamma(0).
\end{equation}

Alternatively, let $W(P,Q)$ be the Ruse--Synge world function
\cite[Chapter II, \textsection 1]{synge} with respect to $g$.
Here $P,Q\in M$ are assumed to be sufficiently close.
Let $W'(P,Q):=\left. \operatorname{grad}_xW(x,Q)\right|_{x=P}$
be the gradient of the world function with respect to the first variable.
Then
\begin{equation}
\label{definition of displacement 2}
A^\flat(P):=-W'(P,\varphi(P)).
\end{equation}

In formula \eqref{definition of displacement 1} $A$ is a vector,
whereas in formula \eqref{definition of displacement 2} $A^\flat$ is a covector.
Raising and lowering tensor indices via the metric $g$
turns one into the other, see Appendix \ref{Exterior calculus notation}
for notation.

Working with a vector field of displacements $A$
rather than an abstract diffeomorphism $\varphi$
makes the physical interpretation clearer.

The field of displacements generates rotations. Describing these rotations mathematically
is the subject of finite strain theory in continuum mechanics
\cite[Section 23]{truesdell}. In what follows we present this construction
in a version adapted to Lorentzian signature and curved spacetime.

Consider the quantity
\begin{equation}
\label{deformation gradient 1}
\frac{\partial\varphi^\nu}{\partial x^\beta}(x)\,.
\end{equation}
The quantity \eqref{deformation gradient 1} is a two-point tensor:
it transforms as a covector under changes of local coordinates $x$
and as a vector under changes of local coordinates $y$.
The two-point tensor \eqref{deformation gradient 1}
describes a linear map from $T_PM$ to $T_{\varphi(P)}M$,
\[
v^\alpha\mapsto \frac{\partial\varphi^\nu}{\partial x^\beta}\,v^\beta.
\]

Let us now parallel transport \eqref{deformation gradient 1},
in the upper tensor index and with respect to the Levi-Civita connection associated with $g$,
along the unique shortest
geodesic from $\varphi(P)$ to $P$.
This gives us a (one-point) (1,1)-tensor $D^\nu{}_\beta(x)$
known in continuum mechanics as the \emph{deformation gradient}.
In local coordinates, the deformation gradient is explicitly obtained as follows. Let us choose the same coordinates $x$ and $y$ and let
\[
\gamma:[0,1]\to M, \quad \gamma(0)=\varphi(P), \quad \gamma(1)=P,
\]
be the unique geodesic connecting $\varphi(P)$ and $P$.
This is, of course, the same geodesic $\gamma$ as in the beginning of this section,
only now we use a different parameterization, $\tau\mapsto s:=1-\tau$.
Then $D^\nu{}_\beta:=X^\nu{}_\beta(1)$, where $X^\nu{}_{\beta}(s)$ is the unique solution to the ordinary differential equation
\begin{equation}
\label{parallel transport equation}
\dfrac{dX^\nu{}_\beta(s)}{ds}+X^\alpha{}_\beta(s)\,\Gamma^\nu{}_{\alpha\mu}(\gamma(s)) \dfrac{d\gamma^\mu(s)}{ds}=0
\end{equation}
subject to the initial condition $X^\nu{}_\beta(0)=\frac{\partial\varphi^\nu}{\partial x^\beta}$.
The deformation gradient describes, pointwise, a non-degenerate linear map in the fibres of the tangent bundle,
\begin{equation}
\label{deformation gradient as a linear map}
v^\alpha\mapsto D^\nu{}_\beta\,v^\beta.
\end{equation}
Moreover, formula \eqref{definition of perturbed metric}
can now be rewritten as
\begin{equation}
\label{perturbed metric via deformation gradient}
h_{\alpha\beta}(x)=
[D^\mu{}_\alpha(x)]\,[g_{\mu\nu}(x)]\,[D^\nu{}_\beta(x)]\,.
\end{equation}

\begin{remark}
Even though, in general, there is no explicit closed formula relating the vector field of displacements $A$ and the deformation gradient $D$, there is a connection between the two expressed by formula \eqref{deformation gradient linearised}. In special circumstances such a connection can be made sharper, for instance when working in Minkowski space, see formula \eqref{deformation gradient in Minkowski space}.
\end{remark}

Further on we assume that the linear map
\eqref{deformation gradient as a linear map}
is sufficiently close to the identity.
The issue at hand is to decompose
\eqref{deformation gradient as a linear map}
into a composition of a stretching map and a rotation map.
This is achieved by means of the polar decomposition.
The concept of polar decomposition is standard in linear algebra, 
only now it has to be adapted to Lorentzian signature.
Some work in this direction was done in \cite{bolshakov,mehl}.

\begin{definition}
We call a linear map $v^\alpha\mapsto B^\alpha{}_\beta\,v^\beta$
\emph{Lorentz--symmetric} if
$g_{\alpha\gamma}B^\gamma{}_\beta=g_{\beta\gamma}B^\gamma{}_\alpha$,
\emph{Lorentz--antisymmetric} if
$g_{\alpha\gamma}B^\gamma{}_\beta=-g_{\beta\gamma}B^\gamma{}_\alpha$
and
\emph{Lorentz--orthogonal} if
$B^\mu{}_\alpha\,g_{\mu\nu}\,B^\nu{}_\beta=g_{\alpha\beta}\,$.
\end{definition}

Any linear map
\eqref{deformation gradient as a linear map}
sufficiently close to the identity can be uniquely decomposed as
\begin{equation}
\label{polar decomposition}
D^\alpha{}_\beta=U^\alpha{}_\gamma\,V^\gamma{}_\beta\,,
\end{equation}
where $U$ is Lorentz--orthogonal
and $V$ is Lorentz--symmetric and close to the identity.

Recall that the standard polar decomposition theorem tells us that an invertible real matrix $K$ can be written as the product of an orthogonal matrix $U_K$ and a symmetric matrix $V_K$, $K=U_KV_K$, where these matrices are given by the explicit formulae
\[
V_K=(K^TK)^{1/2} \quad \text{and}\quad U_K=K (K^TK)^{-1/2}.
\]
The existence of polar decomposition
\eqref{polar decomposition}
can be established, for example, 
by using the power series
expansion for the function $\sqrt{1+z}$
with
$\,z=g^{\alpha\gamma}\,D^\mu{}_\gamma\,g_{\mu\nu}\,D^\nu{}_\beta
-\delta^\alpha{}_\beta\,$ and arguing along the lines of the classical proof.
Indeed, the quantity $D^\mu{}_\gamma\,g_{\mu\nu}\,D^\nu{}_\beta$ plays now the role of $K^TK$ --- see also \eqref{perturbed metric via deformation gradient} --- and the fact that \eqref{deformation gradient as a linear map} is sufficiently close to the identity ensures that the series converges and that the outcome is invertible.

In the setting of classical elasticity theory (Riemannian signature)
the tensor $V$ appearing in formula \eqref{polar decomposition}
is called the \emph{right stretch tensor}, see \cite[p.~53]{truesdell}.

Formula \eqref{polar decomposition}
and the fact that $D$ and $V$ are close to the identity
imply that $U$ is close to the identity as well.
Therefore, $U$ can be uniquely represented as
\begin{equation}
\label{U via F}
U=e^F,
\end{equation}
where $F$ is Lorentz--antisymmetric and small.
The tensor $F$ can be recovered from the tensor $U$ by using
the power series expansion for the function $\ln(1+z)$
with $z=U^\alpha{}_\beta-\delta^\alpha{}_\beta$.

Applying the above procedure to the deformation gradient we arrive at
a Lorentz--antisymmetric (1,1)-tensor $F^\alpha{}_\beta(x)$.
Lowering the first tensor index via $g$,
we get a covariant antisymmetric tensor $F_{\alpha\beta}(x)$
which can be viewed as a 2-form. We call it the \emph{rotation 2-form}.

Substituting \eqref{polar decomposition}
into \eqref{perturbed metric via deformation gradient}
we get
\begin{equation}
\label{perturbed metric via V}
h_{\alpha\beta}(x)=
[V^\mu{}_\alpha(x)]\,[g_{\mu\nu}(x)]\,[V^\nu{}_\beta(x)]\,.
\end{equation}

\begin{remark}
The order of indices in our polar decomposition \eqref{polar decomposition}
is important. Had we done the polar decomposition the other way round,
i.e.~as $\,D^\alpha{}_\beta=V^\alpha{}_\gamma\,U^\gamma{}_\beta\,$,
we wouldn't have gotten~\eqref{perturbed metric via V}.
\end{remark}

\begin{remark}
The assumption that our map \eqref{deformation gradient as a linear map} is sufficiently close to the identity warrants further clarification. As explained earlier in this section, this assumption is sufficient to ensure the existence of a Lorentzian polar decomposition. However, it is in general not necessary, as one can see by considering
the linear map \eqref{deformation gradient as a linear map} with
\[
D^\nu{}_\beta
=
\begin{pmatrix}
9& 0 \\
0 &-99
\end{pmatrix}
\]
in 2-dimensional Krein space ($g_{\kappa\lambda}=\operatorname{diag}(1,-1)$),
which clearly admits a Lorentzian polar decomposition.
The issue at hand is that necessary \emph{and} sufficient conditions for the existence of polar decomposition of matrices in indefinite inner product spaces are not available in the literature. While this is a very interesting problem in linear algebra, addressing it in the current paper would lead us astray. Therefore, we decided to impose what we view as a reasonable sufficient condition, with the plan to deal with the general theory of Lorentzian polar decomposition elsewhere.
\end{remark}

Formula \eqref{perturbed metric via V} tells us that
rotations do not appear explicitly in our mathematical model.
In other words, the physics described by our action
\eqref{action 2}
does not feel rotations.
However, we will still have to consider rotations later on in the paper
because they do not have a
life of their own: rotations are generated by displacements,
cf.~Sections
\ref{Massless Dirac equation}
and
\ref{Massive Dirac equation}.

The following lemma provides a list of
formulae obtained by linearising in $A$.
Some of them will be used
in Section \ref{Linearised field equations}.

\begin{lemma}
\label{trivial lemma}
We have
\begin{subequations}
\begin{align}
\label{deformation gradient linearised}
&D_{\alpha\beta}=g_{\alpha\beta}+\nabla_\beta A_\alpha+O(|A|^2),
\\
\label{V linearised}
&V_{\alpha\beta}=g_{\alpha\beta}+\frac12\,(\nabla_\alpha A_\beta+\nabla_\beta A_\alpha)+O(|A|^2),
\\
\label{U linearised}
&U_{\alpha\beta}=g_{\alpha\beta}-\frac12\,(\nabla_\alpha A_\beta-\nabla_\beta A_\alpha)+O(|A|^2),\\
\label{F linearised}
&F_{\alpha\beta}
=-\frac12\,(\nabla_\alpha A_\beta-\nabla_\beta A_\alpha)+O(|A|^2),
\\
\label{strain linearised}
&S_{\alpha\beta}=\nabla_\alpha A_\beta+\nabla_\beta A_\alpha+O(|A|^2),
\\
\label{incompressibility condition linearised}
&\dfrac{\det h_{\kappa\lambda}}{\det g_{\mu\nu}}=1+2\,\nabla_\alpha A^\alpha+O(|A|^2).
\end{align}
\end{subequations}
\end{lemma}

In the above lemma
and further on $\nabla$ is the Levi-Civita connection associated with $g$
and tensor indices are raised and lowered using the metric $g$.
In particular, the tensor in the LHS of formula \eqref{strain linearised}
is our original strain tensor \eqref{definition of strain formula}
but with the first tensor index lowered.
Of course, we have $S_{\alpha\beta}=h_{\alpha\beta}-g_{\alpha\beta}$.

Note that formulae
\eqref{F linearised}
and
\eqref{incompressibility condition linearised}
can be equivalently rewritten without
covariant derivatives using the identities
\begin{equation}
\label{without covariant derivatives 1}
\nabla_\alpha A_\beta-\nabla_\beta A_\alpha
=\partial_\alpha A_\beta-\partial_\beta A_\alpha
=(\mathrm{d}A^\flat)_{\alpha\beta}\,,
\end{equation}
\begin{equation}
\label{without covariant derivatives 2}
\nabla_\alpha A^\alpha=\rho^{-1}\partial_\alpha(\rho A^\alpha)=-\delta A^\flat,
\end{equation}
where $\rho$ is our Lorentzian density \eqref{defiition of Lorentzian density}.
See Appendix \ref{Exterior calculus notation} for exterior calculus notation.

\begin{proof}[Proof of Lemma~\ref{trivial lemma}]
Working in the neighbourhood of a point $x$ and in chosen local coordinates, our diffeomorphism $\varphi$ can be written as
\begin{equation}
\label{proof trivial lemma equation 1}
\varphi^\alpha(x)=x^\alpha + A^\alpha(x)+O(|A|^2)
\end{equation}
(here we made use of \eqref{definition of displacement 1}),
so that we have
\begin{equation}
\label{proof trivial lemma equation 2}
\dfrac{\partial \varphi^\alpha}{\partial x^\beta}=\delta^\alpha{}_\beta+ \dfrac{\partial A^\alpha}{\partial x^\beta} +O(|A^2|).
\end{equation}
Solving \eqref{parallel transport equation} at first order in $A$ with initial condition \eqref{proof trivial lemma equation 2}
and using the fact that $\frac{d\gamma^\alpha}{ds}=-A^\alpha+O(|A|^2)$
we get \eqref{deformation gradient linearised}.

Formula \eqref{deformation gradient linearised} implies
\begin{equation*}
\label{proof trivial lemma equation 3}
g^{\alpha\gamma}\,D^\mu{}_\gamma\,g_{\mu\nu}\,D^\nu{}_\beta=\delta^\alpha{}_\beta+\nabla_\beta A^\alpha+\nabla^\alpha A_\beta+O(|A|^2),
\end{equation*}
which gives us \eqref{V linearised} upon extracting the square root.

Formula \eqref{U linearised} is now obtained by combining \eqref{deformation gradient linearised} and \eqref{V linearised} with \eqref{polar decomposition}; in turn, it implies \eqref{F linearised} upon extracting the logarithm.

Substituting \eqref{proof trivial lemma equation 2} into \eqref{definition of perturbed metric} and, in turn, \eqref{definition of perturbed metric} into \eqref{definition of strain formula}, we get \eqref{strain linearised}.

Finally, \eqref{strain linearised} and the identity
\[
\det h_{\alpha\beta}=\det g_{\alpha\beta} \det(\operatorname{Id}+S),
\]
cf.~\eqref{definition of strain formula}, give us \eqref{incompressibility condition linearised}.
\end{proof}

\begin{remark}
There is an alternative way of describing a diffeomorphism in terms of a vector field.
This alternative approach is in the spirit of fluid mechanics and is based
on Lie-algebraic considerations. Namely, consider a smooth vector field
$u^\alpha(x)$, a field of `velocities', and
the autonomous system of ordinary differential equations
\begin{equation}
\label{integral curve 1}
\begin{cases}
\dot{y}=u(y),\\
\left.y\right|_{\tau=0}=x,
\end{cases}
\end{equation}
that it generates.
Here $\tau\in[0,1]$ is a parameter and the dot stands for differentiation in $\tau$.
We denote the solution of \eqref{integral curve 1} by $y(\tau;x)$.
For $u$ small enough the map $x\mapsto y(1;x)$ realises a diffeomorphism close to the identity.
At a formal level one would hope to generate an arbitrary diffeomorphism close to the identity
by a suitable choice of vector field $u$.
Furthermore, if we choose a divergence-free vector field, i.e.~a vector field satisfying
$\rho^{-1}\partial_\alpha(\rho u^\alpha)=0$
(compare with
\eqref{incompressibility condition linearised}
and
\eqref{without covariant derivatives 2}),
then for $u$ small enough the map $x\mapsto y(1;x)$
realises a volume-preserving diffeomorphism close to the identity.
Unfortunately, this approach doesn't work:
it is known \cite[p.~163]{marsden}
that there does not exist a neighbourhood of the
identity where the exponential map $\exp:\mathrm{Vect}(M)\to\mathrm{Diff}(M)$,
from vector fields $u$ to diffeomorphisms,
is surjective.
There are simple explicit examples of diffeomorphisms
of $\mathbb{S}^1$
arbitrarily close to the identity
that cannot be represented in terms of the above flow, see, for example,
\cite[p.~1017]{milnor},
\cite[p.~8--9]{banyaga},
\cite[p.~456--457]{kriegl}.
The description of a diffeomorphism in terms of a
vector field of displacements $A$ (see beginning of this section)
does not suffer from the deficiencies of the fluid mechanics description
\eqref{integral curve 1}. The fundamental difference between the
two approaches is that the concept of displacement relies on the
use of the metric structure.
\end{remark}

\section{Linearised field equations}
\label{Linearised field equations}

Carrying on from Section \ref{Displacements and rotations},
we assume that our diffeomorphism $\varphi:M\to M$
is sufficiently close to the identity map,
so that it
can be described by a vector field of displacements $A(x)$.
Furthermore, we can choose the local coordinates $y$
to be the same as $x$. 
Our aim in the current section is to linearise the field equations
\eqref{temp5 constrained}, \eqref{incompressibility 1}
in $A(x)$ and $p(x)$.

Formulae
\eqref{incompressibility condition linearised}
and
\eqref{without covariant derivatives 2}
give us the linearisation of the volume preservation
condition~\eqref{incompressibility 1}:
\begin{equation}
\label{linearisation of the volume preservation condition}
\delta A^\flat=0.
\end{equation}

Formula \eqref{dp with Jacobian} now reads
\[
(\mathrm{d}p)_\alpha(x)=
\frac{\partial p}{\partial x^\alpha}(\varphi(x))\,,
\]
and its linearisation is the usual gradient
\[
\frac{\partial p}{\partial x^\alpha}(x)\,.
\]

The issue at hand is the linearisation of $E(\varphi)$.

Inspection of formulae
\eqref{scalar invariant 2}--\eqref{scalar invariant 4},
\eqref{condition 1 formula 1}
and
\eqref{strain linearised}
shows that the expansion of our Lagrangian
$L\bigl(e_2(A),e_3(A),e_4(A)\bigr)$ in terms homogeneous in $A$
starts with the quadratic expression
\begin{equation}
\label{quadratised L}
L^{(2)}(A)=
-2\,
(\nabla_\alpha A^\alpha)^2
+\frac12\,
(\nabla_\alpha A_\beta+\nabla_\beta A_\alpha)
(\nabla^\alpha A^\beta+\nabla^\beta A^\alpha)
\,,
\end{equation}
so that $L\bigl(e_2(A),e_3(A),e_4(A)\bigr)=L^{(2)}(A)+O(|A|^3)$.
Variation of the quadratic action
\begin{equation*}
J^{(2)}(A)=
\int_M
L^{(2)}(A)
\,\rho(x)
\,dx
\end{equation*}
generates the linearisation $E^{(1)}(A)$ of $E(\varphi)$:
\begin{equation*}
\Delta J^{(2)}(A)=
\int
E_\lambda^{(1)}(A)
\ \Delta A^\lambda\,
\,\rho(x)\,
\,dx\,.
\end{equation*}
However, prior to variation it is useful to rewrite
\eqref{quadratised L}
as in the following lemma, whose proof is a straightforward computation.

\begin{lemma}
The Lagrangian \eqref{quadratised L}
can be equivalently rewritten as
\begin{equation}
\label{lemma about appearance of Maxwell Lagrangian equation 1}
L^{(2)}(A)=
\frac12\,
(\nabla_\alpha A_\beta-\nabla_\beta A_\alpha)
(\nabla^\alpha A^\beta-\nabla^\beta A^\alpha)
-2\,\mathrm{Ric}_{\mu\nu}\,A^\mu A^\nu
+\nabla_\kappa B^\kappa
\,,
\end{equation}
where $\mathrm{Ric}$ is the Ricci tensor associated with $g$ and
\[
B^\kappa=
-2
\left[
A^\kappa(\nabla_\gamma A^\gamma)-A^\gamma(\nabla_\gamma A^\kappa)
\right].
\]
\end{lemma}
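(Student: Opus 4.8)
The plan is to reduce everything to the two basic quadratic contractions $\nabla_\alpha A_\beta\,\nabla^\alpha A^\beta$ and $\nabla_\alpha A_\beta\,\nabla^\beta A^\alpha$, and then to trade the leftover second derivatives for curvature via the Ricci identity. First I would expand the symmetric and antisymmetric quadratic forms appearing in \eqref{quadratised L} and in the claimed identity \eqref{lemma about appearance of Maxwell Lagrangian equation 1}. Expanding the products and relabelling dummy indices gives
\begin{equation*}
\frac12(\nabla_\alpha A_\beta\pm\nabla_\beta A_\alpha)(\nabla^\alpha A^\beta\pm\nabla^\beta A^\alpha)
=\nabla_\alpha A_\beta\,\nabla^\alpha A^\beta
\pm\nabla_\alpha A_\beta\,\nabla^\beta A^\alpha,
\end{equation*}
so the symmetric form exceeds the antisymmetric one by exactly $2\,\nabla_\alpha A_\beta\,\nabla^\beta A^\alpha$. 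Substituting this into \eqref{quadratised L}, the whole claim reduces to the pointwise identity
\begin{equation*}
2\,\nabla_\alpha A_\beta\,\nabla^\beta A^\alpha-2\,(\nabla_\alpha A^\alpha)^2
=-2\,\mathrm{Ric}_{\mu\nu}A^\mu A^\nu+\nabla_\kappa B^\kappa.
\end{equation*}

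Next I would compute $\nabla_\kappa B^\kappa$ directly. Applying the Leibniz rule to $B^\kappa=-2[A^\kappa(\nabla_\gamma A^\gamma)-A^\gamma(\nabla_\gamma A^\kappa)]$ produces four terms, naturally splitting into a first-derivative pair (where the outer derivative falls on an undifferentiated $A$) and a second-derivative pair. The first-derivative pair reproduces precisely the combination $2\,\nabla_\alpha A_\beta\,\nabla^\beta A^\alpha-2\,(\nabla_\alpha A^\alpha)^2$ that we need on the left-hand side, so after cancellation the identity collapses to the statement that the remaining second-derivative terms equal $-2\,\mathrm{Ric}_{\mu\nu}A^\mu A^\nu$. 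Relabelling the dummy indices in one of those two terms, the remainder organises itself into the single commutator expression $2A^\kappa(\nabla_\kappa\nabla_\gamma-\nabla_\gamma\nabla_\kappa)A^\gamma$.

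Finally I would invoke the Ricci identity for the commutator of covariant derivatives, which in the paper's curvature convention reads $(\nabla_\kappa\nabla_\gamma-\nabla_\gamma\nabla_\kappa)A^\gamma=R^\gamma{}_{\mu\kappa\gamma}A^\mu$, and then contract the Riemann tensor down to the Ricci tensor. Because the summed index occupies the slot that is antisymmetric with respect to the free Ricci index, this contraction carries a sign, $R^\gamma{}_{\mu\kappa\gamma}=-\mathrm{Ric}_{\mu\kappa}$, whence $2A^\kappa(\nabla_\kappa\nabla_\gamma-\nabla_\gamma\nabla_\kappa)A^\gamma=-2\,\mathrm{Ric}_{\mu\nu}A^\mu A^\nu$ and the identity closes. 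The only genuine point of care is this last step: one must track the Riemann symmetries and the chosen sign convention so that the Ricci term emerges with the correct sign and with the factor $2$, since a slip here is invisible at the level of the preceding algebraic rearrangement but would corrupt the curvature contribution. Everything else is routine bookkeeping with dummy indices and raising and lowering with $g$.
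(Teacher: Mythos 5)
Your proposal is correct and is precisely the ``straightforward computation'' that the paper invokes without writing out: expand the symmetric and antisymmetric quadratic forms to reduce the claim to an identity for the cross terms, compute $\nabla_\kappa B^\kappa$ by the Leibniz rule so that its first-derivative part cancels, and convert the leftover second-derivative terms into the Ricci term via the contracted commutator identity $A^\kappa(\nabla_\kappa\nabla_\gamma-\nabla_\gamma\nabla_\kappa)A^\gamma=-\mathrm{Ric}_{\mu\nu}A^\mu A^\nu$. Your sign bookkeeping in the final step is consistent and the curvature convention is handled correctly, so the argument closes as claimed.
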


The divergence term $\nabla_\kappa B^\kappa$
in formula \eqref{lemma about appearance of Maxwell Lagrangian equation 1}
does not contribute to the
field equations, so we can replace our
Lagrangian \eqref{quadratised L} with
\begin{equation}
\label{quadratised L nice}
\tilde L^{(2)}(A)=
\|\mathrm{d}A^\flat\|_g^2-2\,\mathrm{Ric}(A,A),
\end{equation}
see Appendix \ref{Exterior calculus notation} for exterior calculus notation.
The advantage of writing our quadratic Lagrangian in the form
\eqref{quadratised L nice} is that this representation does not
involve covariant derivatives.

Formula \eqref{quadratised L nice} implies that
the linearised operator generated by our action
\eqref{action 2} reads
\begin{equation}
\label{quadratised L nice after variation}
E^{(1)}=2\delta\mathrm{d}-4\,\mathrm{Ric}.
\end{equation}

In formulae
\eqref{quadratised L nice}
and
\eqref{quadratised L nice after variation}
we abuse notation by using the symbol $\mathrm{Ric}$
for two different objects,
the quadratic form on vectors $\mathrm{Ric}(u,u):=\mathrm{Ric}_{\alpha\beta}u^\alpha u^\beta$
and the linear map on covectors
$\mathrm{Ric}:v_\alpha\mapsto\mathrm{Ric}_{\alpha\beta}v^\beta$.

Hence, our linearised field equations
\eqref{temp5 constrained}, \eqref{incompressibility 1}
read
\[
\begin{pmatrix}
\delta\mathrm{d}-2\,\mathrm{Ric}&-\frac12\mathrm{d}\\
\delta&0
\end{pmatrix}
\begin{pmatrix}
A^\flat\\
p
\end{pmatrix}
=0.
\]
If we introduce a new scalar field
\begin{equation}
\label{definition of p tilde}
\tilde p:=-\frac12p
\end{equation}
the above system takes the form
\begin{equation}
\label{Linearised field equations main formula}
\begin{pmatrix}
\delta\mathrm{d}-2\,\mathrm{Ric}&\mathrm{d}\\
\delta&0
\end{pmatrix}
\begin{pmatrix}
A^\flat\\
\tilde p
\end{pmatrix}
=0.
\end{equation}

Let us now briefly discuss the analytic properties of the $5\times5$ matrix linear partial differential operator
\begin{equation}
\label{Definition of operator Lin}
\operatorname{Lin}:
\Omega^1(M)\oplus\Omega^0(M)
\to
\Omega^1(M)\oplus\Omega^0(M),
\qquad
\begin{pmatrix}
v\\
f
\end{pmatrix}
\mapsto
\begin{pmatrix}
\delta\mathrm{d}-2\,\mathrm{Ric}&\mathrm{d}\\
\delta&0
\end{pmatrix}
\begin{pmatrix}
v\\
f
\end{pmatrix}.
\end{equation}

We start with the observation that the operator $\operatorname{Lin}$ is formally self-adjoint (symmetric) with respect
to the $L^2$ inner product defined as in
Appendix~\ref{Exterior calculus notation}.

The more specific properties of a linear differential operator
are determined by its principal symbol.
In local coordinates, the principal symbol is obtained
by leaving only the leading (higher order)
derivatives and replacing each partial differentiation
$\partial/\partial x^\alpha$ by $i\xi_\alpha$,
where $\xi$ is the dual variable (momentum), see
\cite[subsection 1.1.3]{mybook}.
This gives a (matrix-)function on the cotangent bundle the properties of which determine the basic features
of the differential operator such as ellipticity or hyperbolicity. However, for our operator $\operatorname{Lin}$
matters are slightly more complicated because it has a block structure
\[
\begin{pmatrix}
2^\text{nd}\ \text{order operator}&1^\text{st}\ \text{order operator}\\
1^\text{st}\ \text{order operator}&0\ \text{order operator}
\end{pmatrix}
\]
with operators of different order in different blocks.
Matrix operators with this particular structure are called Agmon--Douglis--Nirenberg type operators
\cite{Agmon-Douglis-Nirenberg original paper}.
Application of the Agmon--Douglis--Nirenberg construction
gives the principal symbol of
$\operatorname{Lin}$ as the linear map
\begin{equation}
\label{Principal symbol of operator Lin}
\begin{pmatrix}
v\\
f
\end{pmatrix}
\mapsto
\begin{pmatrix}
\|\xi\|_g^2\,v-\langle\xi,v\rangle_g\,\xi+if\xi\\
-i\,\langle\xi,v\rangle_g
\end{pmatrix}.
\end{equation}
The determinant of the linear map \eqref{Principal symbol of operator Lin} is
\begin{equation}
\label{power 8}
-\|\xi\|_g^8\,.
\end{equation}
Now, if our metric $g$ were Riemannian then the quantity
\eqref{power 8} would not vanish on $T^*M\setminus\{0\}$
and, hence, our operator $\operatorname{Lin}$
would be elliptic in the Agmon--Douglis--Nirenberg sense.
However, for Lorentzian metric $g$ the quantity \eqref{power 8} vanishes
on light cones, 
which suggests that our operator $\operatorname{Lin}$ is hyperbolic.
There is extensive literature dealing with Agmon--Douglis--Nirenberg type
operators in the elliptic
setting but we are unaware of similar results for the hyperbolic case.
A rigorous investigation of well-posedness issues for the operator
$\operatorname{Lin}$ is, though, outside the scope of our paper.
For a review of different notions of hyperbolicity in a setting
similar to ours see \cite[Section 4]{wong}.

Note that if we replace the $5\times5$ matrix operator \eqref{Definition of operator Lin}
with the $4\times4$ matrix operator $\delta\mathrm{d}$, then the principal symbol
will be a degenerate matrix whose
determinant is identically zero.

Let us now assume that our spacetime $(M,g)$ is Ricci-flat,
\begin{equation}
\label{Ricci-flat}
\mathrm{Ric}=0.
\end{equation}
Note that condition \eqref{Ricci-flat} is the accepted relativistic definition of vacuum.
Moreover, it is easy to see that if $(M,g)$ is Ricci-flat, then so is $(M,h)$.

Under condition \eqref{Ricci-flat} equation
\eqref{Linearised field equations main formula}
implies
\begin{equation*}
\delta\mathrm{d}\tilde p=\square_g\,\tilde p=0.
\end{equation*}
We see that we have a separate equation for the scalar field $\tilde p$, the wave equation.
This observation allows us to collect solutions of our system
\eqref{Linearised field equations main formula}
into equivalence classes corresponding to particular choices of $\tilde p$:
we say that two solutions,
$\begin{pmatrix}
A^\flat\\
\tilde p
\end{pmatrix}$
and
$\begin{pmatrix}
{A^\flat}'\\
\tilde p'
\end{pmatrix}$,
are equivalent if $\tilde p=\tilde p'$.

Let us now fix a particular solution $\tilde p$ of the wave equation and work within
the corresponding equivalence class. Then
the first four equations from
our system \eqref{Linearised field equations main formula}
can be rewritten as
\begin{equation*}
\delta\mathrm{d}A^\flat=J,
\end{equation*}
where $J:=-\mathrm{d}\tilde p$.
We have arrived at Maxwell's equations in the Lorenz gauge
\eqref{linearisation of the volume preservation condition}
and with exact current $J\in\mathrm{d}\Omega^0(M)$.
Recovering Maxwell's equations in the Lorenz gauge is not a factitious artefact of our theory, but, in a sense, a natural thing to have: this is what one obtains when looking at irreducible representations of the Poincar\'e group in the spirit of Wigner's classification, cfr.~\cite[Chapter 21]{barut}.

\section{Homogeneous diffeomorphisms}
\label{Homogeneous diffeomorphisms}

In the remainder of this paper we will construct explicit solutions of the nonlinear field
equations~\eqref{temp5 constrained}.
Namely, we will write down explicitly volume preserving diffeomorphisms $\varphi$ satisfying
\eqref{temp5 constrained} with $p=0$.
In other words, we will present volume preserving solutions of the unconstrained nonlinear
field equations \eqref{temp5}.

Seeking such solutions constitutes an overdetermined problem: we are looking at a system of
five nonlinear partial differential equations \eqref{temp5}, \eqref{incompressibility 1}
for four unknowns,
the functions $\varphi^\alpha(x)$, $\alpha=1,2,3,4$, appearing in
the local representation \eqref{local representation of diffeomorphism}
of our diffeomorphism $\varphi$.
We will base our construction on group-theoretic ideas, the essence of which is explained
below.

Further on $\mathrm{Isom}(M,g)$ denotes the finite-dimensional subgroup of $\mathrm{Diff}(M)$
comprising diffeomorphisms that are isometries.

\begin{definition}\label{Definition of homogeneous diffeo}
Let $\varphi\in\mathrm{Diff}(M)$.
We say that $\varphi$ is \emph{homogeneous} if there exists
a subgroup $H\subset\mathrm{Isom}(M,g)$
acting transitively on $M$ and satisfying
\begin{equation}
\label{Homogeneous diffeomorphisms equation 1}
H\circ\varphi=\varphi\circ H.
\end{equation}
If we have the stronger property
\begin{equation}
\xi\circ\varphi=\varphi\circ\xi,
\qquad\forall\xi\in H,
\end{equation}
we say that $\varphi$ is \emph{equivariant}.
\end{definition}

In other words, condition \eqref{Homogeneous diffeomorphisms equation 1}
can be rewritten as follows: for any
$\xi\in H$ there exists a $\eta\in H$ such that the diagram
\begin{equation*}
\SelectTips{cm}{}
\xymatrix{
 M \ar[d]_-{\varphi} \ar[r]^-{\xi}& M \ar[d]^-{\varphi}\\
 M \ar[r]_-{\eta}  & M
}
\end{equation*}
is commutative.

\begin{theorem}
\label{theorem about homogeneous diffeomorphisms}
Let $\varphi$ be a homogeneous diffeomorphism.
Then the scalar invariants \eqref{scalar invariants} are constant.
Furthermore, if the covector field $E(\varphi)$
defined in accordance with formula \eqref{temp4}
vanishes at a point then it vanishes identically.
\end{theorem}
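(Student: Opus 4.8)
The plan is to deduce both assertions from a single principle: any quantity built naturally from the pair $(g,\varphi)$ inherits the $H$-symmetry of the configuration, and since $H$ acts transitively on $M$, an $H$-invariant function is constant while an $H$-invariant subset is either empty or all of $M$. The starting point is to unravel the homogeneity condition \eqref{Homogeneous diffeomorphisms equation 1}: given $\eta\in H$, the inclusion $\varphi\circ\eta\in\varphi\circ H=H\circ\varphi$ produces a $\xi\in H$ with
\begin{equation*}
\varphi\circ\eta=\xi\circ\varphi.
\end{equation*}
Both $\xi$ and $\eta$ are $g$-isometries, so $\xi^*g=\eta^*g=g$.

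For the first assertion I would show that the strain tensor \eqref{definition of strain formula} is $H$-invariant. Indeed, using $h=\varphi^*g$ and the relation above,
\begin{equation*}
\eta^*h=\eta^*\varphi^*g=(\varphi\circ\eta)^*g=(\xi\circ\varphi)^*g=\varphi^*\xi^*g=\varphi^*g=h,
\end{equation*}
so $h$ is $H$-invariant, as is $g$. Consequently the $(1,1)$-tensor $S$ satisfies $\eta^*S=S$ for every $\eta\in H$. Since pulling back a $(1,1)$-tensor conjugates it fibrewise, $S_{\eta(x)}$ and $S_x$ are similar and share their spectrum; hence each spectral invariant \eqref{scalar invariants} is constant along $H$-orbits. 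Transitivity of $H$ then forces $e_1,\dots,e_4$ to be globally constant.

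For the second assertion the key is that the Euler--Lagrange map \eqref{temp4} is equivariant under isometries, a consequence of the action \eqref{action 2} being constructed invariantly from $g$ and $\varphi$. Concretely, I would first record the two naturality identities
\begin{equation*}
E(\varphi\circ\eta)(x)=E(\varphi)(\eta(x)),
\qquad
E(\xi\circ\varphi)(x)=(d\xi_{\varphi(x)})^{-\mathsf T}\,E(\varphi)(x),
\end{equation*}
valid for $g$-isometries $\eta,\xi$; these follow by a change of variables in the variation formula \eqref{temp3}, using that isometries preserve both the metric (hence the invariants $e_k$) and the density $\rho$. Feeding $\varphi\circ\eta=\xi\circ\varphi$ into these identities yields
\begin{equation*}
E(\varphi)(\eta(x))=(d\xi_{\varphi(x)})^{-\mathsf T}\,E(\varphi)(x).
\end{equation*}
Because $\xi$ is a diffeomorphism, the linear map $(d\xi_{\varphi(x)})^{-\mathsf T}$ is invertible, so $E(\varphi)$ vanishes at $\eta(x)$ precisely when it vanishes at $x$. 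Thus the zero set of $E(\varphi)$ is $H$-invariant, and transitivity makes it either empty or the whole of $M$; vanishing at a single point therefore gives vanishing everywhere.

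The main obstacle is the rigorous justification of the two naturality identities for $E$, since we deliberately avoid the explicit form of this second-order operator. The cleanest route is to argue at the level of the functional: for isometries $\eta,\xi$ one has $J(\xi\circ\varphi\circ\eta)=J(\varphi)$ (invariance of the $e_k$ under post-composition and of $\rho\,dx$ under the change of variables $\eta$), and then to track how an admissible variation $\Delta\varphi$ of the composed diffeomorphism corresponds to a variation of $\varphi$, reading off the transformation of $E$ from \eqref{temp3}. Care is needed with the two-point tensor character of $E$ --- scalar in the source index $x$, covector in the target index --- so that the differential $d\xi$ enters only through the target slot, which is exactly what makes the relation in the displayed formula invertible.
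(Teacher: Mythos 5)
Your proof is correct, but it takes a genuinely different route from the paper's. The paper argues entirely in adapted local coordinates: it picks charts near $x$ and $\varphi(x)$ in which $\varphi$ is the identity, transports them to neighbourhoods of $y=\xi(x)$ and $\varphi(y)=\eta(\varphi(x))$ via the isometries supplied by homogeneity, and observes that in these charts the local representations of the metric, of $\varphi$ and of its derivatives coincide near the two points; since both the invariants \eqref{scalar invariants} and the covector field \eqref{temp4} are built from exactly these data, their values coincide, which yields both assertions at once without ever discussing how $E$ transforms. You instead argue invariantly: for the first assertion you show $\eta^*h=h$, so that the strain endomorphisms at $x$ and $\eta(x)$ are conjugate and hence isospectral --- arguably cleaner than the paper's treatment; for the second you establish equivariance (naturality) of the Euler--Lagrange map under pre- and post-composition with isometries and conclude that the zero set of $E(\varphi)$ is a union of $H$-orbits. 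What your approach buys is coordinate-independence and a transparent mechanism; what it costs is precisely the two naturality identities, which the paper's coordinate construction sidesteps. Your sketch of their derivation (change of variables in the first variation) is the right argument and does go through, but to make it airtight you should phrase it with compactly supported perturbations rather than with the global identity $J(\xi\circ\varphi\circ\eta)=J(\varphi)$: the action \eqref{action 2} is an integral over all of $M$ and need not converge, so that identity is only formal, whereas the difference $J(\psi+\Delta\psi)-J(\psi)$ for compactly supported $\Delta\psi$ is a finite integral and the change-of-variables computation applied to it, read off against \eqref{temp3}, is rigorous. This local phrasing also keeps your argument consistent with the paper's remark following the theorem, which cautions against treating the global action as invariant under conjugation by isometries: your proof in fact only uses the first-variation version of that invariance, localised to the support of the perturbation, and that is unproblematic.
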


\begin{proof}
Let us prove the second statement first.
Let $\varphi$ be a homogeneous diffeomorphism and $x,y\in M$ two arbitrary points. We will assume that $\left. E(\varphi)\right|_{x}=0$ and we will show that $\left. E(\varphi)\right|_{y}=0$. In view of Definition \ref{Definition of homogeneous diffeo}, there exist isometries
$\xi$ and $\eta$ such that
\begin{equation}
\label{Matteo 1}
y=\xi(x), \qquad \varphi(y)=\eta(\varphi(x))\,,
\end{equation}
and
\begin{equation}
\label{Matteo 2}
\eta\circ\varphi =\varphi\circ \xi.
\end{equation}
 Note that in writing \eqref{Matteo 1} we only used the transitivity
of the action of $H$ on $M$,
whereas \eqref{Matteo 2} required the use of the additional condition
\eqref{Homogeneous diffeomorphisms equation 1}.

It is possible to choose coordinates in some neighbourhoods $\mathcal{U}(x)$ and $\mathcal{U}(\varphi(x))$ of $x$ and $\varphi(x)$ respectively in such a way that $\varphi$ is locally the identity map:
\[
\varphi|_{\mathcal{U}(x)}\simeq \mathrm{id}: \mathcal{U}(x)\to \mathcal{U}(\varphi(x)).
\]
We can then prescribe coordinates in some neighbourhood $\mathcal{U}(y)$ of $y$ (resp.~$\mathcal{U}(\varphi(y))$ of $\varphi(y)$) via the isometry $\xi$ (resp.~$\eta$). This has two consequences. Firstly, the map 
\[
\varphi|_{\mathcal{U}(y)}: \mathcal{U}(y)\to \mathcal{U}(\varphi(y))
\]
is the identity in our local coordinates.
Secondly, in this coordinate representation the components of the metric tensor
are the same near $x$ and $y$ and near $\varphi(x)$ and $\varphi(y)$.
This can be easily seen by explicitly imposing the isometry conditions
$\xi^*g=g$ and $\eta^*g=g$
locally, after observing that $\left.\xi\right|_{\mathcal{U}(x)}\simeq \mathrm{id}$
and $\left.\eta\right|_{\mathcal{U}(\varphi(x))}\simeq \mathrm{id}$ for our choice of coordinates. In particular, the Jacobian of the change of coordinates from coordinates centred at $x$ (resp.~$\varphi(x)$) to coordinates centred at $y$ (resp.~$\varphi(y)$) is $1$. The local expression \eqref{temp4} of $E(\varphi)$  depends only on the (local representation of the) metric, $\varphi$ and its derivatives. Since such local representations are the same in neighbourhoods of $x$ and $y$, $E(\varphi)|_x=0$ implies $E(\varphi)|_{y}=0$.

Finally, let us prove that the scalar invariants are constant. If we compute the scalar invariants in local coordinates, we realise that they only depend on the local representation of the metric, of $\varphi$ and of its first derivatives, see \eqref{definition of strain formula} and \eqref{scalar invariants}. Since such representations can be made the same in the neighbourhood of any pair of points $x$ and $y$, as described above, it ensues that the scalar invariants take the same value everywhere, namely, they are constant.
\end{proof}

Theorem \ref{theorem about homogeneous diffeomorphisms}
tells us that if we seek a solution of nonlinear field equations \eqref{temp5}
in the form of a homogeneous diffeomorphism then it is sufficient to satisfy
these field equations at a single point.

\begin{remark}
Note that our mathematical model is invariant under the action of the group of isometries in the following sense.
Let $\varphi\in\mathrm{Diff}_\rho(M)$ and $p:M\to\mathbb{R}$  be a solution of
our field equations \eqref{temp5 constrained},
and let $\xi\in\mathrm{Isom}(M,g)$ be an arbitrary isometry.
Then $\xi\circ\varphi$ and $p\circ\xi^{-1}$ is also a solution.
\end{remark}

\section{Special subgroups of the Poincar\'e group}
\label{Special subgroups of the Poincare group}

In the remainder of this paper we work in Minkowski space $\mathbb{M}$ where
the metric is $g_{\alpha\beta}=\operatorname{diag}(1,1,1,-1)$.
Further on $\mathrm{Poinc}(\mathbb{M}):=\mathrm{Isom}(\mathbb{R}^4,g)$
denotes the 10-dimensional group of isometries of $\mathbb{M}$,
commonly known as the Poincar\'e group.
Clearly, $\mathrm{Poinc}(\mathbb{M})=\mathbb{R}^4\rtimes\mathrm{O}(3,1)$.

In fact, we will be working with the identity component of the Poincar\'e group,
$\mathrm{ISO}^+(3,1)$. This is known to be the fundamental symmetry group of physics, in that it turns inertial frames into one another.

The Poincar\'e group can be realised as a subgroup of the matrix group
$\mathrm{SL}(5,\mathbb{R})$ as follows:
\[
\mathbb{R}^4\rtimes\mathrm{O}(3,1) \ni
(v,\Lambda)
\mapsto
\begin{pmatrix}
\Lambda & v\\
\begin{array}{cccc}0 & 0 & 0 & 0\end{array}&1
\end{pmatrix}
\in\mathrm{SL}(5,\mathbb{R}).
\]
Here the $5\times5$ matrix acts on $x\in\mathbb{M}$
by matrix vector multiplication
after complementing it with 1,
\[
\begin{pmatrix}
x\\
1
\end{pmatrix}
\mapsto
\begin{pmatrix}
\Lambda & v\\
\begin{array}{cccc}0 & 0 & 0 & 0\end{array}&1
\end{pmatrix}
\begin{pmatrix}
x\\
1
\end{pmatrix}.
\]

We now introduce special subgroups of the restricted Poincar\'e group
$\mathrm{ISO}^+(3,1)$
which will be used later in
Sections
\ref{Explicit massless solutions of nonlinear field equations}
and
\ref{Explicit massive solutions of nonlinear field equations}. 

\begin{definition}
\label{massless screw groups}
The
\emph{right-handed massless screw group} $\mathrm{SG}_0^+$
and
\emph{left-handed massless screw group} $\mathrm{SG}_0^-$
are the subgroups of $\mathrm{ISO}^+(3,1)$
realised in matrix representation by
\begin{equation}
\label{massless screw group equation 1}
\mathrm{SG}_0^\pm:=
\left\{
{
\left.
\begin{pmatrix}
\cos(q^3+q^4)&\mp\sin(q^3+q^4)&0&0&q^1\\
\pm\sin(q^3+q^4)&\cos(q^3+q^4)&0&0&q^2\\
0&0&1&0&q^3\\
0&0&0&1&q^4\\
0&0&0&0&1
\end{pmatrix}
\quad
\right|
\quad q\in\mathbb{R}^4
}
\right\}.
\end{equation}
\end{definition}

\begin{definition}
Let $m$ be a positive real number.
The \emph{massive screw group} $\mathrm{SG}_m$
is the subgroup of $\mathrm{ISO}^+(3,1)$
realised in matrix representation by
\begin{equation}
\label{massive screw group equation 1}
\mathrm{SG}_m:=
\left\{
{
\left.
\begin{pmatrix}
\cos(2mq^4)&-\sin(2mq^4)&0&0&q^1\\
\sin(2mq^4)&\cos(2mq^4)&0&0&q^2\\
0&0&1&0&q^3\\
0&0&0&1&q^4\\
0&0&0&0&1
\end{pmatrix}
\quad
\right|
\quad q\in\mathbb{R}^4
}
\right\}.
\end{equation}
\end{definition}

It is easy to see that
$\mathrm{SG}_0^+$,
$\mathrm{SG}_0^-$
and
$\mathrm{SG}_m$
are indeed subgroups of
$\mathrm{ISO}^+(3,1)$
and act transitively on $\mathbb{M}$.
Each of these groups is isomorphic to the direct product of $\mathbb{R}$
with a 3-dimensional group of Bianchi type $\mathrm{Bi}(\mathrm{VII}_0)$.

\

Let $\xi\in\mathrm{ISO}^+(3,1)$.
Then
$\,\xi^{-1}\,\mathrm{SG}_0^+\,\xi\,$,
$\,\xi^{-1}\,\mathrm{SG}_0^-\,\xi\,$
and
$\,\xi^{-1}\,\mathrm{SG}_m\,\xi\,$
are also subgroups of
$\mathrm{ISO}^+(3,1)$.
The question we want to address is what happens under conjugation.

\begin{lemma}
\label{Lemma about difference under conjugation of massless screw groups}
There does not exist a
$\xi\in\mathrm{ISO}^+(3,1)$ such that
$\,\xi^{-1}\,\mathrm{SG}_0^+\,\xi\,=\,\mathrm{SG}_0^-\,$.
\end{lemma}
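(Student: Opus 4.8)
The plan is to descend to the Lie-algebra level and manufacture a conjugation invariant that detects handedness. Writing $J$ for the generator of rotations in the $e_1e_2$-plane (so that the top sign in \eqref{massless screw group equation 1} corresponds to $+J$) and $P_1,\dots,P_4$ for the translation generators, differentiation of \eqref{massless screw group equation 1} at $q=0$ gives $\mathfrak{sg}_0^\pm=\operatorname{span}\{P_1,P_2,P_3\pm J,P_4\pm J\}\subset\mathfrak{iso}(3,1)$. A conjugation $\xi^{-1}\,\mathrm{SG}_0^+\,\xi=\mathrm{SG}_0^-$ would force $\mathrm{Ad}_{\xi^{-1}}\mathfrak{sg}_0^+=\mathfrak{sg}_0^-$, and for $\xi=(v,\Lambda)$ a direct computation in the matrix realisation yields the adjoint formula $\mathrm{Ad}_{\xi^{-1}}(w,M)=(\Lambda^{-1}(w+Mv),\,\Lambda^{-1}M\Lambda)$ on a Lie-algebra element with translation part $w$ and Lorentz part $M$. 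I would argue by contradiction from here.

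First I would isolate the subspaces that any such $\Lambda$ must preserve. The derived algebra is $[\mathfrak{sg}_0^\pm,\mathfrak{sg}_0^\pm]=\operatorname{span}\{P_1,P_2\}=:T$, and the pure translations in the algebra form $W:=\mathfrak{sg}_0^\pm\cap\mathbb{R}^4=\operatorname{span}\{e_1,e_2,e_3-e_4\}$; crucially these coincide for the two groups, so none of the obvious invariants distinguishes them. Comparing images under the projection $\pi:\mathfrak{iso}(3,1)\to\mathfrak{so}(3,1)$ gives $\Lambda^{-1}J\Lambda\in\pi(\mathfrak{sg}_0^-)=\mathbb{R}J$, and since conjugation preserves the eigenvalues $\{\pm i,0,0\}$ of $J$ this forces $\Lambda^{-1}J\Lambda=\epsilon J$ with $\epsilon=\pm1$. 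Consequently $\Lambda$ preserves $T=\operatorname{range}J$ and its $g$-orthogonal complement $T^\perp=\operatorname{span}\{e_3,e_4\}$ (of signature $(+,-)$), and from $\Lambda W=W$ it also preserves the null line $W\cap T^\perp=\mathbb{R}(e_3-e_4)$.

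The decisive step is to feed in that $\Lambda$ is proper and orthochronous. Since $T$ is spacelike, orthochronicity is carried by the block $\Lambda|_{T^\perp}\in\mathrm{O}(1,1)$; the only orthochronous elements of $\mathrm{O}(1,1)$ preserving the null line $\mathbb{R}(e_3-e_4)$ are the proper boosts, which have determinant $+1$ and scale $e_3-e_4$ by a positive factor $e^{-\beta}$. Hence $\det(\Lambda|_{T^\perp})=1$, forcing $\det(\Lambda|_T)=1$, so $\Lambda|_T\in\mathrm{SO}(2)$ commutes with $J$ and $\epsilon=+1$. Finally I would compare translation parts: applying the adjoint formula to $X_3=(e_3,J)$ and using that every element of $\mathfrak{sg}_0^-$ with Lorentz part $+J$ has translation part $\equiv-e_3\pmod W$, I get $\Lambda^{-1}e_3\equiv-e_3\pmod W$ (the correction $\Lambda^{-1}Jv\in\Lambda^{-1}T=T\subset W$ drops out). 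But the boost computation gives $\Lambda^{-1}e_3\equiv e^{-\beta}e_3\pmod W$ with $e^{-\beta}>0$, a contradiction. The heart of the matter, and the main obstacle, is exactly this last sign: all the natural invariants ($T$, $W$, $T^\perp$, the rotation line $\mathbb{R}J$) agree for the two groups, and handedness only surfaces as the incompatibility between the negative coefficient demanded by the screw structure and the positive scaling of the axis direction $e_3\bmod W$ that properness together with orthochronicity imposes.
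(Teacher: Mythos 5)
Your proof is correct, and it takes a genuinely different route from the paper's. The paper disposes of the lemma in two lines by invoking Lemma~\ref{lemma about torsion}: each screw group canonically determines a Weitzenb\"ock (teleparallel) connection, and the Hodge duals of the axial torsions of $\mathrm{SG}_0^\pm$ are the lightlike covectors $\mp\frac23\,(0,0,1,1)$, which lie on opposite halves of the light cone --- a distinction that conjugation by an element of $\mathrm{ISO}^+(3,1)$ cannot erase (and which, the torsion dual being spacelike in the massive case, simultaneously explains why the analogous statement fails for $\mathrm{SG}_m$). You instead work with the adjoint action on the Lie algebras: you isolate the invariant subspaces $T$, $T^\perp$, $W$ and the null line $W\cap T^\perp$, use properness and orthochronicity to force $\Lambda\vert_{T^\perp}$ to be a proper boost and $\Lambda\vert_T\in\mathrm{SO}(2)$, and extract the contradiction from the fact that $\Lambda^{-1}$ must then act on the one-dimensional quotient $\mathbb{R}^4/W$ by a \emph{positive} scalar, whereas matching the screw structures of $\mathfrak{sg}_0^+$ and $\mathfrak{sg}_0^-$ demands the scalar $-1$. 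I verified the key computations ($\mathfrak{sg}_0^\pm=\operatorname{span}\{P_1,P_2,P_3\pm J,P_4\pm J\}$, the adjoint formula, the classification of orthochronous elements of $\mathrm{O}(1,1)$ fixing a null line, and the congruence $\equiv-e_3\pmod W$ for elements of $\mathfrak{sg}_0^-$ with Lorentz part $+J$); all are sound. It is worth noting that the two arguments are dual to one another: the annihilator of your $W$ is precisely the line spanned by the paper's axial-torsion covector $(0,0,1,1)$, and your positivity statement on $\mathbb{R}^4/W$ is exactly the statement that $\mathrm{ISO}^+(3,1)$ cannot move a lightlike covector from one half of the light cone to the other. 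What your route buys is self-containedness --- elementary linear algebra, with no connection or torsion machinery; what the paper's buys is economy and uniformity, since the torsion computation is performed once in the appendix and serves both the massless and the massive discussions.
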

\begin{proof}
The result follows from Lemma~\ref{lemma about torsion}:
the Hodge dual of axial torsion associated with the two groups
lies on opposite sides of the light cone and conjugation by an element of
$\mathrm{ISO}^+(3,1)$ cannot change this.
\end{proof}

Lemma~\ref{Lemma about difference under conjugation of massless screw groups}
tells us that the groups $\mathrm{SG}_0^+$ and  $\mathrm{SG}_0^-$ are genuinely different,
in that one cannot be turned into the other by conjugation.

Let us now examine what happens when we conjugate the massive screw group.
It turns out that the situation here is completely different.
Namely, choose $\xi=\operatorname{diag}(1,-1,-1,1,1)$.
Then
\begin{multline*}
\label{massive screw group equation 2}
\xi^{-1}\,\mathrm{SG}_m\,\xi=
\left\{
{
\left.
\begin{pmatrix}
\cos(2mq^4)&\sin(2mq^4)&0&0&q^1\\
-\sin(2mq^4)&\cos(2mq^4)&0&0&-q^2\\
0&0&1&0&-q^3\\
0&0&0&1&q^4\\
0&0&0&0&1
\end{pmatrix}
\quad
\right|
\quad q\in\mathbb{R}^4
}
\right\}
\\
=
\left\{
{
\left.
\begin{pmatrix}
\cos(2mq^4)&\sin(2mq^4)&0&0&q^1\\
-\sin(2mq^4)&\cos(2mq^4)&0&0&q^2\\
0&0&1&0&q^3\\
0&0&0&1&q^4\\
0&0&0&0&1
\end{pmatrix}
\quad
\right|
\quad q\in\mathbb{R}^4
}
\right\}.
\end{multline*}
This means that a different choice of signs in
\eqref{massive screw group equation 1}
does not yield
a different family of subgroups.
The argument presented in this paragraph is in agreement with
Lemma~\ref{lemma about torsion}:
the Hodge dual of axial torsion associated with the massive group
is spacelike and conjugation moves this covector
without encountering obstructions.

\section{Explicit massless solutions of nonlinear field equations}
\label{Explicit massless solutions of nonlinear field equations}

Working in Minkowski space $\mathbb{M}$,
we will describe our diffeomorphism $\varphi$ by a vector field of displacements
\begin{equation}
\label{diffeormorphism in terms of A}
\varphi:x^\alpha\mapsto x^\alpha+A^\alpha(x).
\end{equation}
The concept of a vector field of displacements was introduced in
Section~\ref{Displacements and rotations}.
The special feature of  Minkowski space is that
we do not need to assume that our diffeomorphism is sufficiently close to the identity map.
The only restriction on the choice of vector field $A$ is
\begin{equation}
\label{restricion on A}
\det(D^\alpha{}_\beta)\ne0,
\end{equation}
where
\begin{equation}
\label{deformation gradient in Minkowski space}
D^\alpha{}_\beta=\delta^\alpha{}_\beta+\partial A^\alpha/\partial x^\beta
\end{equation}
is the deformation gradient, see formula \eqref{deformation gradient as a linear map}
and associated discussion.
Condition \eqref{restricion on A} ensures that we do indeed have a diffeomorphism,
a smooth invertible map.

We seek volume preserving solutions.
Examination of formula \eqref{perturbed metric via deformation gradient}
shows that in Minkowski space the volume preservation condition
\eqref{incompressibility 1}
reduces to $\left|\det(D^\alpha{}_\beta)\right|=1$,
which means that we either have
\begin{subequations}
\begin{equation}
\label{determinant of deformation gradient plus 1}
\det(D^\alpha{}_\beta)=+1
\end{equation}
or
\begin{equation}
\label{determinant of deformation gradient minus 1}
\det(D^\alpha{}_\beta)=-1.
\end{equation}
\end{subequations}
Solutions presented in this section and the next one will possess the property
\eqref{determinant of deformation gradient plus 1}.

We say that a real lightlike covector $p=(p_1,p_2,p_3,p_4)$
lies on the forward light cone if $p_4>0$.
We say that a complex vector $u=(u^1,u^2,u^3,u^4)$ is isotropic if
$u_\alpha\bar u^\alpha>0$
and
$u_\alpha u^\alpha=0$.

The use of the term `isotropic' is motivated by Cartan who used it in the 3-dimensional Euclidean setting.
If we choose a coordinate system such that $u^4=0$
our definition is equivalent to that in
\cite[Chapter III, Section I]{cartan}.

\begin{theorem}
\label{theorem explicit massless}
Let $p$ be a real lightlike covector on the forward light cone,
let $u$ be a complex isotropic vector orthogonal to $p$
and let
\begin{equation}
\label{A mathbb for massless solution}
\mathbb{A}^\alpha(x)=
u^\alpha\,e^{ip_\beta x^\beta}.
\end{equation}
Then the diffeomorphism \eqref{diffeormorphism in terms of A} with
\begin{equation}
\label{A for massless solution}
A(x)=\operatorname{Re}
\left[
\mathbb{A}(x)
\right]
\end{equation}
is volume preserving and satisfies the 
nonlinear field equations \eqref{temp5}.
\end{theorem}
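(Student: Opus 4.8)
The strategy is to verify the two claims---volume preservation and the nonlinear field equations---separately, exploiting the homogeneity machinery of Theorem~\ref{theorem about homogeneous diffeomorphisms} to reduce the PDE to a single algebraic identity. First I would record the deformation gradient explicitly. From \eqref{A for massless solution}, writing $A^\alpha = \operatorname{Re}(u^\alpha e^{ip_\beta x^\beta})$, one has $\partial A^\alpha/\partial x^\gamma = \operatorname{Re}(i\,u^\alpha p_\gamma\,e^{ip_\beta x^\beta})$, so by \eqref{deformation gradient in Minkowski space}
\begin{equation*}
D^\alpha{}_\gamma = \delta^\alpha{}_\gamma + \operatorname{Re}\bigl(i\,u^\alpha p_\gamma\,e^{ip_\beta x^\beta}\bigr).
\end{equation*}
The crucial structural fact is orthogonality $u^\alpha p_\alpha = 0$: the rank-one correction is nilpotent in its relevant direction, which should make $\det D$ collapse to a constant.

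For volume preservation, I would compute $\det(D^\alpha{}_\gamma)$ directly. Because the perturbation has the form $\delta^\alpha{}_\gamma + \operatorname{Re}(v^\alpha p_\gamma)$ with $v^\alpha = i\,u^\alpha e^{ip_\beta x^\beta}$, and because $p_\alpha v^\alpha = i\,(p_\alpha u^\alpha)e^{\cdots} = 0$, the covector $p$ annihilates the image of the perturbation. Using the matrix determinant lemma (or expanding $\det(I + \operatorname{Re}(v\,p^T))$ and tracking that all traces of powers of $v\,p^T$ vanish thanks to $p_\alpha v^\alpha = 0$), every nontrivial correction to the determinant vanishes, leaving $\det(D^\alpha{}_\gamma) = 1$. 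This confirms \eqref{determinant of deformation gradient plus 1} and hence volume preservation via \eqref{incompressibility 1}.

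For the field equations, the key is to show the diffeomorphism is homogeneous, then invoke Theorem~\ref{theorem about homogeneous diffeomorphisms} so that verifying $E(\varphi)=0$ at one point suffices. I expect $\varphi$ to be equivariant with respect to (a conjugate of) one of the massless screw groups $\mathrm{SG}_0^\pm$: the screw structure---a simultaneous translation and rotation in the plane transverse to $p$---is exactly engineered so that $A$ is carried into itself under the group action, with the rotation in the $(1,2)$-block matching the phase rotation $e^{ip_\beta x^\beta}$ of the isotropic vector $u$. Establishing $H\circ\varphi = \varphi\circ H$ for such an $H$ acting transitively on $\mathbb{M}$ is the conceptual heart of the argument; the right-handed/left-handed dichotomy in the theorem statement should correspond to the two sign choices in \eqref{massless screw group equation 1}. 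Once homogeneity is established, the scalar invariants $e_k(\varphi)$ are constant (indeed one should check they all vanish, since $S$ is built from the nilpotent-type correction and $u_\alpha u^\alpha = 0$, $u_\alpha\bar u^\alpha>0$ govern its eigenvalues), so $L$ and its derivatives are constant, and $E(\varphi)$ reduces to an algebraic expression. Evaluating $E(\varphi)$ at a single convenient point---say $x=0$, where the phase is $1$---and showing it vanishes then completes the proof.

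The main obstacle I anticipate is the verification of the equivariance condition \eqref{Homogeneous diffeomorphisms equation 1}: one must identify the correct transitive isometry subgroup $H$ and check the intertwining relation, matching the geometric screw rotation against the oscillatory phase of $\mathbb{A}$. The determinant computation and the constancy of the invariants are essentially linear-algebraic consequences of the isotropy conditions $u_\alpha u^\alpha=0$ and $u_\alpha p^\alpha=0$, but the group-theoretic reduction is where the real content lies, and it is what makes it possible to avoid ever writing down the explicit nonlinear operator $E$.
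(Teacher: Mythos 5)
Your volume-preservation argument is correct: since $p$ is real and $p_\alpha u^\alpha=0$, the perturbation $\operatorname{Re}\bigl(i\,u^\alpha p_\gamma\,e^{ip_\beta x^\beta}\bigr)$ of the identity is a rank-one matrix whose image is annihilated by $p$, so the matrix determinant lemma gives $\det D=1$ exactly; this is equivalent to (and slightly slicker than) the paper's verification, which is done in adapted coordinates after a Lorentz transformation. Your identification of equivariance under $\mathrm{SG}_0^\pm$, with the sign matching handedness, is also correct and coincides with the discussion the paper gives \emph{after} its proof.

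The gap is in the final step of the field-equation part. Theorem~\ref{theorem about homogeneous diffeomorphisms} only localizes the problem: it says that if $E(\varphi)$ vanishes at one point it vanishes everywhere, and that the invariants are constant. It does not say that $E(\varphi)$ vanishes at that one point, and it cannot: there exist homogeneous volume-preserving diffeomorphisms with constant invariants which are not critical points (e.g.\ a constant anisotropic volume-preserving stretch, equivariant under the transitive translation group), so a genuine verification is still required. Your plan ends with ``evaluate $E(\varphi)$ at $x=0$ and show it vanishes'', but gives no means of doing so; carrying out that evaluation would require the explicit form of the second-order nonlinear operator $E$ (Appendix~\ref{Explicit field equations of nonlinear elasticity}), which contradicts your closing claim that the group-theoretic reduction avoids ever writing $E$ down. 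The paper closes this gap by a variational argument that genuinely avoids $E$: because the strain is nilpotent, all invariants vanish identically along $\varphi$, so the first variation of the action is $\sum_{j=2}^{4}\left.\frac{\partial L}{\partial e_j}\right|_{e=0}\int\Delta e_j\,dx$ with \emph{constant} coefficients, and it therefore suffices to prove $\int\Delta e_j\,dx=0$ for $j=2,3,4$; this is done by computing the first variations $\Delta e_j$ explicitly and integrating by parts using the identities $\square A=0$, $\partial_\alpha A^\alpha=0$ and $\bigl(p^\alpha\partial_{x^\alpha}\bigr)A=0$. To complete your proof you would need to supplement the homogeneity reduction with precisely such a computation (or with the explicit formulae of the appendix evaluated at one point); as written, the core of the argument is missing.
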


\begin{proof}
We can perform a (unique) proper orthochronous Lorentz transformation of coordinates
so that formula
\eqref{A mathbb for massless solution}
reads
\begin{equation}
\label{A mathbb for massless solution simplified}
\mathbb{A}^\alpha(x)=
a
\begin{pmatrix}
1\\
\mp i\\
0\\
0
\end{pmatrix}
e^{i(x^3+x^4)},
\end{equation}
where $a=\sqrt{u_\alpha\bar u^\alpha/2}\,$.
Then \eqref{A for massless solution} becomes
\begin{equation}
\label{A for massless solution simplified}
A^\alpha(x)=
a
\begin{pmatrix}
\cos(x^3+x^4)\\
\pm\sin(x^3+x^4)\\
0\\
0
\end{pmatrix}.
\end{equation}
Substituting
\eqref{A for massless solution simplified}
into
\eqref{deformation gradient in Minkowski space}
we get the following explicit formula for the deformation gradient:
\begin{equation}
\label{deformation gradient for massless solution}
D^\alpha{}_\beta
=
\begin{pmatrix}
1&0&-a\sin(x^3+x^4)&-a\sin(x^3+x^4)\\
0&1&\pm a\cos(x^3+x^4)&\pm a\cos(x^3+x^4)\\
0&0&1&0\\
0&0&0&1
\end{pmatrix},
\end{equation}
where the first tensor index, $\alpha$, enumerates the rows
and the second, $\beta$, the columns.
It is immediately clear that \eqref{determinant of deformation gradient plus 1}
is satisfied.
Substituting now
\eqref{deformation gradient for massless solution}
into
\eqref{perturbed metric via deformation gradient}
and
\eqref{definition of strain formula}
we get the following explicit formula for the strain tensor:
\begin{equation}
\label{strain tensor for massless solution}
S^\alpha{}_\beta
=
\begin{pmatrix}
0&0&-a\sin(x^3+x^4)&-a\sin(x^3+x^4)\\
0&0&\pm a\cos(x^3+x^4)&\pm a\cos(x^3+x^4)\\
-a\sin(x^3+x^4)&\pm a\cos(x^3+x^4)&a^2&a^2\\
a\sin(x^3+x^4)&\mp a\cos(x^3+x^4)&-a^2&-a^2
\end{pmatrix}.
\end{equation}
It is easy to check that the matrix
\eqref{strain tensor for massless solution}
is nilpotent, so
all our scalar invariants \eqref{scalar invariants}
vanish identically.
Note that the nilpotency index of \eqref{strain tensor for massless solution}
is three, which, according to Lemma~\ref{Lemma about nilpotency index},
is the maximal possible.

We vary the vector field of displacements $A(x)$ as
\begin{equation*}
A^\alpha(x)\mapsto
A^\alpha(x)+\Delta A^\alpha(x).
\end{equation*}
This generates an increment of our scalar invariants
$\Delta e_j$
and an increment of our Lagrangian
\begin{equation*}
\sum_{j=2}^4
\left.
\frac{\partial L}{\partial e_j}
\right|_{e_2=e_3=e_4=0}
\Delta e_j\,.
\end{equation*}
In order to prove that our diffeomorphism
satisfies the 
nonlinear field equations \eqref{temp5}
it is sufficient to prove that
\begin{equation}
\label{sufficient massless}
\int_{\mathbb{R}^4} \Delta e_j
\,dx\,=0\,,
\qquad j=2,3,4.
\end{equation}

Straightforward calculations give
\begin{subequations}
\label{variation of scalar invariants massless}
\begin{align}
\label{variation of scalar invariants massless 1}
\Delta e_1&
=
2
\left(
\delta^\beta{}_\alpha
+
\frac{\partial A_\alpha}{\partial x_\beta}
\right)
\frac{\partial\Delta A^\alpha}{\partial x^\beta}\,,
\\
\label{variation of scalar invariants massless 2}
\Delta e_2&=-2
\left(
a^2\,p^\beta\,p_\alpha
+
\frac{\partial A^\beta}{\partial x^\alpha}
+
\frac{\partial A_\alpha}{\partial x_\beta}
\right)
\left(\frac{\partial\Delta A^\alpha}{\partial x^\beta}
+
\frac{\partial A_\gamma}{\partial x_\alpha}
\,
\frac{\partial\Delta A^\gamma}{\partial x^\beta}\right),
\\
\label{variation of scalar invariants massless 3}
\Delta e_3&=
2\,a^2\,p^\beta\,p_\alpha
\left(\frac{\partial\Delta A^\alpha}{\partial x^\beta}
+
\frac{\partial A_\gamma}{\partial x_\alpha}
\,
\frac{\partial\Delta A^\gamma}{\partial x^\beta}\right)
=
2\,a^2\,p^\beta\,p_\alpha
\frac{\partial\Delta A^\alpha}{\partial x^\beta}\,,
\\
\label{variation of scalar invariants massless 4}
\Delta e_4&=0\,,
\end{align}
\end{subequations}
where $p_\kappa=(0,0,1,1)$.
Integrating
\eqref{variation of scalar invariants massless 2}--\eqref{variation of scalar invariants massless 4}
by parts and using the identities
\[
\square A=0,
\qquad
\frac{\partial A^\alpha}{\partial x^\alpha}=0,
\qquad
\left(p^\alpha\frac{\partial}{\partial x^\alpha}\right)A=0,
\]
we arrive at \eqref{sufficient massless}.
\end{proof}

The crucial element of the above proof is the observation
that the scalar invariants \eqref{scalar invariants}
generated by the diffeomorphism
\eqref{diffeormorphism in terms of A},
\eqref{A for massless solution simplified}
are constant.
We established this fact by means of explicit analytic calculations.
However, at a group-theoretic level this follows from
Theorem~\ref{theorem about homogeneous diffeomorphisms}.
Indeed, take an arbitrary $\xi\in\mathrm{SG}_0^\pm$,
see formula
\eqref{massless screw group equation 1}.
This isometry acts as
\[
\xi:
\begin{pmatrix}
x^1\\
x^2\\
x^3\\
x^4\\
\end{pmatrix}
\mapsto
\begin{pmatrix}
x^1\cos(q^3+q^4)\mp x^2\sin(q^3+q^4)\\
\pm x^1\sin(q^3+q^4)+x^2\cos(q^3+q^4)\\
x^3\\
x^4
\end{pmatrix}
+
\begin{pmatrix}
q^1\\
q^2\\
q^3\\
q^4\\
\end{pmatrix}.
\]
Our diffeomorphism
\eqref{diffeormorphism in terms of A},
\eqref{A for massless solution simplified}
acts as
\[
\varphi_\pm:
\begin{pmatrix}
x^1\\
x^2\\
x^3\\
x^4\\
\end{pmatrix}
\mapsto
\begin{pmatrix}
x^1\\
x^2\\
x^3\\
x^4\\
\end{pmatrix}
+
a
\begin{pmatrix}
\cos(x^3+x^4)\\
\pm\sin(x^3+x^4)\\
0\\
0
\end{pmatrix}
\]
and its inverse acts as
\[
\varphi_\pm^{-1}:
\begin{pmatrix}
x^1\\
x^2\\
x^3\\
x^4\\
\end{pmatrix}
\mapsto
\begin{pmatrix}
x^1\\
x^2\\
x^3\\
x^4\\
\end{pmatrix}
-
a
\begin{pmatrix}
\cos(x^3+x^4)\\
\pm\sin(x^3+x^4)\\
0\\
0
\end{pmatrix}.
\]
Composing $\xi$ with $\varphi_\pm$ we get
\begin{multline*}
\xi\circ\varphi_\pm:
\begin{pmatrix}
x^1\\
x^2\\
x^3\\
x^4\\
\end{pmatrix}
\mapsto
\begin{pmatrix}
x^1\cos(q^3+q^4)\mp x^2\sin(q^3+q^4)\\
\pm x^1\sin(q^3+q^4)+x^2\cos(q^3+q^4)\\
x^3\\
x^4
\end{pmatrix}
+
\begin{pmatrix}
q^1\\
q^2\\
q^3\\
q^4\\
\end{pmatrix}
\\
+a
\begin{pmatrix}
\cos(x^3+q^3+x^4+q^4)\\
\pm\sin(x^3+q^3+x^4+q^4)\\
0\\
0
\end{pmatrix}.
\end{multline*}
Finally, a composition with $\varphi_\pm^{-1}$ gives us
\begin{multline*}
\varphi_\pm^{-1}\circ\xi\circ\varphi_\pm:
\begin{pmatrix}
x^1\\
x^2\\
x^3\\
x^4\\
\end{pmatrix}
\mapsto
\begin{pmatrix}
x^1\cos(q^3+q^4)\mp x^2\sin(q^3+q^4)\\
\pm x^1\sin(q^3+q^4)+x^2\cos(q^3+q^4)\\
x^3\\
x^4
\end{pmatrix}
+
\begin{pmatrix}
q^1\\
q^2\\
q^3\\
q^4\\
\end{pmatrix}
\\
+a
\begin{pmatrix}
\cos(x^3+q^3+x^4+q^4)\\
\pm\sin(x^3+q^3+x^4+q^4)\\
0\\
0
\end{pmatrix}
-a
\begin{pmatrix}
\cos(x^3+q^3+x^4+q^4)\\
\pm\sin(x^3+q^3+x^4+q^4)\\
0\\
0
\end{pmatrix},
\end{multline*}
which means that $\varphi_\pm^{-1}\circ\xi\circ\varphi_\pm=\xi$.
Thus, our diffeomorphism $\varphi_\pm$ is equivariant
as per Definition~\ref{Definition of homogeneous diffeo}
with $H=\mathrm{SG}_0^\pm$.

Observe now that the complex 2-form $p\wedge u^\flat$
is an eigenvector of the Hodge star.
This motivates the following definition.

\begin{definition}
\label{massless handedness}
We say that a solution from
Theorem \ref{theorem explicit massless}
is
\emph{right-handed} if $\,*(p\wedge u^\flat)=i\,(p\wedge u^\flat)\,$
and
\emph{left-handed} if $\,*(p\wedge u^\flat)=-i\,(p\wedge u^\flat)\,$.
\end{definition}

It is easy to see that the upper sign in formula
\eqref{A for massless solution simplified}
corresponds to a right-handed solution
and the lower sign corresponds to a left-handed one.
Note that we defined right/left-handedness for
groups (Definition \ref{massless screw groups})
and massless solutions (Definition \ref{massless handedness})
in such a way that they agree.

\section{Explicit massive solutions of nonlinear field equations}
\label{Explicit massive solutions of nonlinear field equations}

\begin{theorem}
\label{theorem explicit massive}
Let $m$ be a positive real number
and let $p$ be a real timelike covector with
$p_\beta p^\beta=-4m^2$ and $p_4>0$.
Let $u$ be a complex isotropic vector orthogonal to $p$,
and let $v$ be a real vector orthogonal to $p$ and $u$.
Suppose that
\begin{equation}
\label{strength of field condition 1}
4m^2\left(\frac12u_\alpha\bar u^\alpha+v_\beta v^\beta\right)=c\,,
\end{equation}
where $c$ is a critical point from \eqref{condition 2 formula 1},
and put
\begin{equation}
\label{A mathbb for massive solution}
\mathbb{A}^\alpha(x)=
u^\alpha\,e^{ip_\beta x^\beta}.
\end{equation}
Then the diffeomorphism \eqref{diffeormorphism in terms of A} with
\begin{equation}
\label{A for massive solution}
A(x)=\operatorname{Re}
\left[
\mathbb{A}(x)
\right]
+(p_\gamma x^\gamma)\,v
\end{equation}
is volume preserving and satisfies the
nonlinear field equations \eqref{temp5}.
\end{theorem}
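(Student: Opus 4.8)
The plan is to mirror the proof of Theorem~\ref{theorem explicit massless}, exploiting the fact that the displacement is adapted to the massive screw group $\mathrm{SG}_m$. First I would reduce $p$, $u$, $v$ to a normal form by a (unique) proper orthochronous Lorentz transformation. Since $p$ is forward timelike with $p_\beta p^\beta=-4m^2$, it can be brought to $p_\kappa=(0,0,0,2m)$; the residual stabiliser is $\mathrm{SO}(3)$, and under it the isotropy conditions $u_\alpha u^\alpha=0$, $u_\alpha\bar u^\alpha>0$ together with $p_\alpha u^\alpha=0$ force $u=a(1,\pm i,0,0)$ with $a=\sqrt{u_\alpha\bar u^\alpha/2}$, while the real vector $v$, orthogonal to $p$ and $u$, becomes $v=(0,0,s,0)$ with $s^2=v_\beta v^\beta$. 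The displacement then reads, componentwise, $A=\bigl(a\cos(2mx^4),\mp a\sin(2mx^4),2ms\,x^4,0\bigr)$, a function of $x^4$ alone.

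Because $A$ depends on $x^4$ only, the deformation gradient $D^\alpha{}_\beta=\delta^\alpha{}_\beta+\partial A^\alpha/\partial x^\beta$ differs from the identity in its fourth column alone and is upper triangular with unit diagonal, so $\det D=1$; this is exactly the volume preservation condition~\eqref{determinant of deformation gradient plus 1}. I would then read off the strain from \eqref{perturbed metric via deformation gradient} and \eqref{definition of strain formula}. The resulting matrix $S$ has its first three columns supported in the fourth row and a single nontrivial fourth column, hence has rank two; consequently $e_3=\operatorname{tr}\operatorname{adj}S=0$ and $e_4=\det S=0$ identically. A short trace computation, using $\cos^2(2mx^4)+\sin^2(2mx^4)=1$ and the field strength condition~\eqref{strength of field condition 1} in the form $4m^2(a^2+s^2)=c$, gives $e_1=-c$ and $e_2=c$. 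Thus the invariants are the constants $(e_2,e_3,e_4)=(c,0,0)$, in agreement with Theorem~\ref{theorem about homogeneous diffeomorphisms} since $\varphi$ is homogeneous with respect to $\mathrm{SG}_m$.

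Constancy of the invariants makes the first variation of the action factorise as $\Delta J=\sum_{j=2}^{4}\bigl(\partial L/\partial e_j\bigr)\big|_{(c,0,0)}\int_{\mathbb{R}^4}\Delta e_j\,dx$, the partial derivatives coming out as constants. The term $j=2$ vanishes by Condition~2, because $(c,0,0)$ realises the critical point \eqref{condition 2 formula 1}; this is the single place where that hypothesis enters. For $j=4$, differentiating $e_4=\det S$ gives $\Delta e_4=\operatorname{tr}(\operatorname{adj}(S)\,\Delta S)$, and $\operatorname{adj}(S)=0$ since $S$ has rank two, so $\Delta e_4\equiv 0$ pointwise. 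It therefore remains to prove $\int\Delta e_3\,dx=0$.

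For this final step I would use $\partial e_3/\partial S=e_2 I-e_1 S+S^2$, so that $\Delta e_3=\operatorname{tr}\bigl((e_2 I-e_1 S+S^2)\,\Delta S\bigr)$. Inserting $e_1=-c$, $e_2=c$ and computing $M:=cI+cS+S^2$ explicitly, one finds that its fourth row and fourth column vanish, leaving only the spatial block $M^i{}_j=c\,\delta^i{}_j-\alpha_i\alpha_j$ with $\alpha_i:=S^i{}_4$ ($i=1,2,3$) depending on $x^4$ alone. Since the spatial block of $\Delta S$ is $\Delta S^j{}_i=\partial_i\Delta A_j+\partial_j\Delta A_i$, this reduces the variation to $\Delta e_3=2c\,\partial_i\Delta A^i-2\,\alpha_i\alpha_j\,\partial_j\Delta A_i$, with $i,j$ spatial. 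Every term is a spatial derivative acting on a coefficient that is a function of $x^4$ only; integrating by parts in the spatial variables and using the compact support of $\Delta A$ kills both contributions, giving $\int\Delta e_3\,dx=0$ and hence $\Delta J=0$. I expect the main obstacle to be organising this last computation cleanly — in particular verifying the vanishing of the fourth row and column of $M$ and noticing that the surviving coefficients depend on $x^4$ only, which is what renders the spatial integration by parts trivial; Condition~2 and the rank-two structure dispose of $e_2$ and $e_4$, so $e_3$ is the one invariant demanding genuine work.
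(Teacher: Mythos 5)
Your proposal is correct and follows essentially the same route as the paper's proof: reduction to a normal form by a proper orthochronous Lorentz transformation, explicit computation of the deformation gradient and strain (giving $\det D=1$ and constant invariants $(e_2,e_3,e_4)=(c,0,0)$), Condition~2 disposing of the $e_2$ term in the first variation, and integration by parts for $\int\Delta e_3\,dx$ exploiting that the coefficients depend on $x^4$ alone. If anything, your handling of the residual terms is slightly sharper than the paper's: you justify $\Delta e_4\equiv 0$ via Jacobi's formula together with $\operatorname{adj}S=0$ (rank two), and you exhibit explicitly, as $e_2I-e_1S+S^2$, the tensor that the paper calls $B$ and merely asserts to have vanishing fourth row and $x^4$-only dependence.
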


\begin{remark}
It is easy to see that under the assumptions of
Theorem \ref{theorem explicit massive}
the scalar $\|\mathrm{d}A^\flat\|_g^2$ is constant,
\[
\|\mathrm{d}A^\flat\|_g^2=
-
4m^2\left(\frac12u_\alpha\bar u^\alpha+v_\beta v^\beta\right).
\]
Hence, formula \eqref{strength of field condition 1}
can be equivalently rewritten as
\begin{equation}
\label{strength of field condition 2}
\|\mathrm{d}A^\flat\|_g^2=-c\,,
\end{equation}
which is a condition on the strength of the field $\mathrm{d}A^\flat$.
We see a certain similarity with the Born--Infeld model \cite{born}, \cite[Section 2.1]{jimenez}
which sets constraints on admissible values of $\|\mathrm{d}A^\flat\|_g^2$.
\end{remark}

\begin{proof}[Proof of Theorem \ref{theorem explicit massive}]
Arguing as in the proof of Theorem \ref{theorem explicit massless},
we can perform a (unique) proper orthochronous Lorentz transformation of coordinates
so that formula \eqref{A mathbb for massive solution} reads
\begin{equation}
\label{A mathbb for massive solution simplified}
\mathbb{A}^\alpha(x)=
a
\begin{pmatrix}
1\\
-i\\
0\\
0
\end{pmatrix}
e^{2imx^4}
\end{equation}
and \eqref{A for massive solution} becomes
\begin{equation}
\label{A for massive solution simplified}
A^\alpha(x)=
\begin{pmatrix}
a\cos(2mx^4)\\
a\sin(2mx^4)\\
2mbx^4\\
0
\end{pmatrix}.
\end{equation}
Here
\begin{subequations}
\begin{equation}
\label{a massive}
a=\sqrt{\frac{u_\alpha\bar u^\alpha}2}\,,
\end{equation}
\begin{equation}
\label{b massive}
b=-\frac i{4ma^2}*\!(p\wedge u^\flat\wedge\bar u^\flat\wedge v^\flat).
\end{equation}

Note that $|b|=\sqrt{v_\alpha v^\alpha}$.
However, in defining the scalar invariant
$b$ we used the seemingly more complicated formula
\eqref{b massive} in order to capture information
on the relative orientation of the four covectors
$p$, $\operatorname{Re}u^\flat$, $\operatorname{Im}u^\flat$ and $v^\flat$.
With this notation formula \eqref{strength of field condition 1}
can be rewritten as
\begin{equation}
4m^2(a^2+b^2)=c\,.
\end{equation}
\end{subequations}

The corresponding deformation gradient reads
\begin{equation}
\label{deformation gradient for massive solution}
D^\alpha{}_\beta
=
\begin{pmatrix}
1&0&0&-2ma\sin(2mx^4)\\
0&1&0&2ma\cos(2mx^4)\\
0&0&1&2mb\\
0&0&0&1
\end{pmatrix},
\end{equation}
for which \eqref{determinant of deformation gradient plus 1} is satisfied.
The resulting strain tensor is
\begin{equation}
\label{strain tensor for massive solution}
S^\alpha{}_\beta
=
\begin{pmatrix}
0&0&0&-2ma\sin(2mx^4)\\
0&0&0&2ma\cos(2mx^4)\\
0&0&0&2mb\\
2ma\sin(2mx^4)&-2ma\cos(2mx^4)&-2mb&-c
\end{pmatrix}.
\end{equation}

Unlike
\eqref{strain tensor for massless solution},
the matrix
\eqref{strain tensor for massive solution} is not nilpotent:
its eigenvalues are zero (algebraic and geometric multiplicity two)
and
\[
-\frac c2
\pm\frac{\sqrt{c(c-4)}}2\,.
\]
The matrix is diagonalisable if and only if $c\ne4$.

The fact that the eigenvalues of the strain tensor
\eqref{strain tensor for massive solution}
are constant implies that
all our scalar invariants \eqref{scalar invariants}
are constant:
\[
e_1=-c,
\qquad
e_2=c,
\qquad
e_3=e_4=0.
\]
Arguing as in the proof of Theorem \ref{theorem explicit massless},
we see that
in order to prove that our diffeomorphism
satisfies the 
nonlinear field equations \eqref{temp5}
it is sufficient to show,
in view of \eqref{condition 2 formula 1},
that
\begin{equation*}
\label{sufficient massive}
\int_{\mathbb{R}^4} \Delta e_j
\,dx\,=0\,,
\qquad j=3,4.
\end{equation*}

It is easy to see that $\Delta e_4=0$, which, in essence,
is to do with the fact that zero is a double eigenvalue of
\eqref{strain tensor for massive solution}.

The formula for $\Delta e_3$ reads
\begin{equation*}
\Delta e_3=
B^\beta{}_\alpha
\,\frac{\partial\Delta A^\alpha}{\partial x^\beta}\,,
\end{equation*}
where the $B^\beta{}_\alpha$ is some tensor.
The explicit formulae for the components of this tensor
are complicated, however for our purposes it suffices to observe that
$B^4{}_\alpha=0$ and that the remaining components depend only on the
coordinate $x^4$. Hence, integration by parts yields
\[
\int_{\mathbb{R}^4} \Delta e_3
\,dx\,
=-\int_{\mathbb{R}^4}
\left(
\frac{\partial B^\beta{}_\alpha}{\partial x^\beta}
\right)
\Delta A^\alpha
\,dx\,
=-\int_{\mathbb{R}^4}
\left(
\frac{\partial B^4{}_\alpha}{\partial x^4}
\right)
\Delta A^\alpha
\,dx\,
=\,0\,.
\]
\end{proof}

Group-theoretic arguments apply to the massive case as well.
Taking an arbitrary $\xi\in\mathrm{SG}_m$,
see formula
\eqref{massive screw group equation 1},
we get
$
\varphi^{-1}\circ\xi\circ\varphi
=
\eta
$,
where
\[
\mathrm{SG}_m
\ni
\eta:
\begin{pmatrix}
x^1\\
x^2\\
x^3\\
x^4\\
\end{pmatrix}
\mapsto
\begin{pmatrix}
x^1\cos(2mq^4)-x^2\sin(2mq^4))\\
x^1\sin(2mq^4)+x^2\cos(2mq^4)\\
x^3\\
x^4
\end{pmatrix}
+
\begin{pmatrix}
q^1\\
q^2\\
q^3-2mbq^4\\
q^4\\
\end{pmatrix}.
\]
This means that our diffeomorphism $\varphi$ is homogeneous
as per Definition~\ref{Definition of homogeneous diffeo}
with $H=\mathrm{SG}_m$.
It is equivariant if and only if $b=0$.

Let us discuss the continuum mechanics interpretation of
formula \eqref{A for massive solution simplified}.
We are looking at a translation
(rigid motion without rotation)
of 3-dimensional Euclidean
space which is a function of the time coordinate $x^4$.
Every point of 3-dimensional Euclidean space moves along a helix,
see Figure \ref{fig}(i) for $b>0$
and Figure \ref{fig}(ii) for $b<0$.

The parameter $b$ could be interpreted as electric charge.
Note that for given values of positive parameters $m$ and $a$
the parameter $b$ can take only two values,
\[
b=\pm\sqrt{\frac c{4m^2}-a^2}\,.
\]

\begin{figure}
\centering
\begin{subfigure}{0.48\textwidth}
\begin{tikzpicture}
\begin{axis} [
    view={0}{30},
    axis lines=none,
    ymin=-2,
    ymax=5,
    xmin=-2,
    xmax=2]
    \addplot3 [thick, black, domain=0*pi:1.73*pi, samples = 500, samples y=0] ({sin(deg(-x))}, {cos(deg(-x))}, {x});
    \addplot3 [thick, black, domain=1.83*pi:3.73*pi, samples = 500, samples y=0] ({sin(deg(-x))}, {cos(deg(-x))}, {x});
        \addplot3 [thick, black, domain=3.83*pi:5.73*pi, samples = 500, samples y=0] ({sin(deg(-x))}, {cos(deg(-x))}, {x});
           \addplot3 [thick, black, domain=5.83*pi:7.73*pi, samples = 500, samples y=0] ({sin(deg(-x))}, {cos(deg(-x))}, {x});
  \addplot3 [thick, ->, black, domain=7.83*pi:9.73*pi, samples = 500, samples y=0] ({sin(deg(-x))}, {cos(deg(-x))}, {x});
\end{axis}
\end{tikzpicture}
\caption{$b>0$}
\label{fig1}
\end{subfigure}
\begin{subfigure}{0.48\textwidth}
\begin{tikzpicture}
\begin{axis} [
    view={0}{30},
    axis lines=none,
    ymin=-2,
    ymax=5,
    xmin=-2,
    xmax=2]
    \addplot3 [thick, black, domain=0*pi:-2.18*pi, samples = 500, samples y=0] ({sin(deg(x))}, {cos(deg(x))}, {x});
    \addplot3 [thick, black, domain=-2.28*pi:-4.18*pi, samples = 500, samples y=0] ({sin(deg(x))}, {cos(deg(x))}, {x});
        \addplot3 [thick, black, domain=-4.28*pi:-6.18*pi, samples = 500, samples y=0] ({sin(deg(x))}, {cos(deg(x))}, {x});
           \addplot3 [thick, black, domain=-6.28*pi:-8.18*pi, samples = 500, samples y=0] ({sin(deg(x))}, {cos(deg(x))}, {x});
  \addplot3 [thick, ->, black, domain=-8.28*pi:-9.73*pi, samples = 500, samples y=0] ({sin(deg(x))}, {cos(deg(x))}, {x});
\end{axis}
\end{tikzpicture}
\caption{$b<0$}
\label{fig2}
\end{subfigure}
\caption{Massive solution}
\label{fig}
\end{figure}
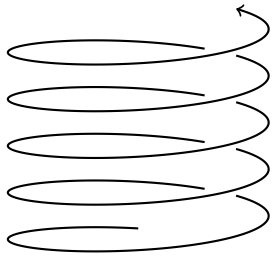
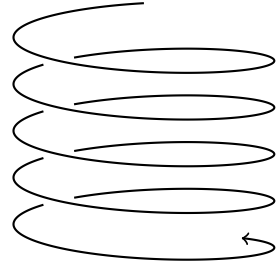

\section{Massless Dirac equation}
\label{Massless Dirac equation}

Let the diffeomorphisms $\varphi_+$ and $\varphi_-$
be right-handed and left-handed massless solutions
as per
Definition \ref{massless handedness}.
In this section we will calculate the corresponding rotation 2-forms,
see Section~\ref{Displacements and rotations},
and show that they are equivalent to spinor fields which satisfy
massless Dirac equations.

The deformation gradient reads
\begin{equation}
\label{deformation gradient massless abstract}
D^\alpha{}_\beta
=
\delta^\alpha{}_\beta
+
\operatorname{Re}
\left[
iu^\alpha p_\beta\,e^{ip_\gamma x^\gamma}
\right].
\end{equation}
In a particular coordinate system the above formula
turns to \eqref{deformation gradient for massless solution}.
Performing a polar decomposition
\eqref{polar decomposition},
we get
\begin{equation}
\label{tensor U massless abstract}
U^\alpha{}_\beta
=
\delta^\alpha{}_\beta
-
\frac12
\operatorname{Re}
\left[
i
\left(
p^\alpha u_\beta
-
u^\alpha p_\beta
\right)
e^{ip_\gamma x^\gamma}
\,
\right]
-
\frac{u_\gamma\bar u^\gamma}{16}p^\alpha p_\beta\,,
\end{equation}
\begin{equation*}
\label{tensor V massless abstract}
V^\alpha{}_\beta
=
\delta^\alpha{}_\beta
+
\frac12
\operatorname{Re}
\left[
i
\left(
p^\alpha u_\beta
+
u^\alpha p_\beta
\right)
e^{ip_\gamma x^\gamma}
\,
\right]
+
\frac{3u_\gamma\bar u^\gamma}{16}p^\alpha p_\beta\,.
\end{equation*}
On account of formula \eqref{U via F}
one can compute the logarithm of
\eqref{tensor U massless abstract},
lower the first index
and obtain
the following explicit formula for the rotation 2-form:
\begin{equation}
\label{tensor F massless abstract}
F
=
-\frac12
\operatorname{Re}
\left[
i
(p\wedge u^\flat)
e^{ip_\gamma x^\gamma}
\,
\right]
=
-
\frac12\mathrm{d}A^\flat.
\end{equation}

We see that the formula for our rotation 2-form is remarkably simple.
Recall that for a general diffeomorphism we have
$F=-\frac12\mathrm{d}A^\flat+O(|A|^2)$,
see formulae
\eqref{F linearised}
and
\eqref{without covariant derivatives 1}.
However the deformation gradient generated by our massless
solutions is very special and turns out to be linear in displacements,
without any second (or higher) order terms and without any assumptions
on the amplitude.
The underlying reason for such simplicity is that
at any given point of $\mathbb{M}$
one can identify a 2-dimensional invariant subspace
of the tangent fibre
in which
the deformation gradient \eqref{deformation gradient massless abstract}
differs from the identity map.
Furthermore, the restriction of the
Minkowski metric to this subspace is degenerate.

Put
\begin{equation}
\label{tensor F mathbb massless abstract}
\mathbb{F}
:=
-
\frac12\mathrm{d}\mathbb{A}^\flat
=-\frac i2(p\wedge u^\flat)\,e^{ip_\gamma x^\gamma},
\end{equation}
so that $F=\operatorname{Re}\mathbb{F}$.
In the remainder of this section we examine the structure of the
complex-valued 2-form $\mathbb{F}$.

The 2-form $\mathbb{F}$ is
polarised
\begin{equation}
\label{F mathbb massless polarised}
*\mathbb{F}=\pm i\,\mathbb{F}
\end{equation}
(cf.~Definition \ref{massless handedness})
and degenerate
\begin{equation*}
\det\mathbb{F}=0.
\end{equation*}
It is known, see Appendix \ref{Spinor representation of 2-forms},
that such a 2-form is equivalent, modulo sign,
to a spinor field which is, effectively, the square root of $\mathbb{F}$.
This spinor field is undotted, $\xi=\xi^a$, in the left-handed case
(lower sign in \eqref{F mathbb massless polarised})
and dotted, $\eta=\eta_{\dot a}\,$, in the right-handed case
(upper sign in~\eqref{F mathbb massless polarised}).

\begin{theorem}
\label{theorem Dirac massless}
The spinor field $\xi$ associated with a
left-handed massless solution
satisfies the massless Dirac equation
\begin{equation}
\label{massless Dirac left-handed}
\sigma^\alpha{}_{\dot ab}\,\partial_{x^\alpha}\xi^b=0.
\end{equation}
The spinor field $\eta$ associated with a
right-handed massless solution
satisfies the massless Dirac equation
\begin{equation}
\label{massless Dirac right-handed}
\sigma^{\alpha\dot ba}\,\partial_{x^\alpha}\eta_{\dot b}=0.
\end{equation}
\end{theorem}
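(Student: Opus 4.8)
The plan is to pass to the spinor dictionary of Appendix~\ref{Spinor representation of 2-forms} and reduce the two Dirac equations to a single algebraic fact about the null covector $p$. First I would invoke the correspondence recalled there: a polarised, degenerate complex $2$-form is, modulo sign, the square of a spinor. Applied to $\mathbb{F}$ in \eqref{tensor F mathbb massless abstract}, the self-dual block (for one choice of sign in \eqref{F mathbb massless polarised}) is a symmetric, rank-one undotted spinor $\phi_{ab}=\xi_a\xi_b$, while for the opposite sign only the anti-self-dual block survives, giving a dotted $\eta_{\dot a}\eta_{\dot b}$. Since $\mathbb{F}=\mathbb{F}_0\,e^{ip_\gamma x^\gamma}$ with $\mathbb{F}_0$ constant, the square root carries half the phase, so $\xi^b=\xi_0^b\,e^{\frac{i}{2}p_\gamma x^\gamma}$ (and likewise for $\eta$) with $\xi_0$ constant; hence $\partial_{x^\alpha}\xi^b=\frac{i}{2}\,p_\alpha\,\xi^b$.

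This already reduces the PDE to an algebraic identity. Substituting into the left-handed equation \eqref{massless Dirac left-handed} gives $\sigma^\alpha{}_{\dot a b}\,\partial_{x^\alpha}\xi^b=\frac{i}{2}\,p_\alpha\,\sigma^\alpha{}_{\dot a b}\,\xi^b$, so it suffices to prove $p_\alpha\,\sigma^\alpha{}_{\dot a b}\,\xi^b=0$; the right-handed case reduces in the same way to $p_\alpha\,\sigma^{\alpha\dot b a}\,\eta_{\dot b}=0$.

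The remaining step is to identify $\xi$ (respectively $\eta$) with the principal spinor of $p$. Writing $p_{a\dot a}=\pi_a\bar\pi_{\dot a}$ for the null covector $p$ and $u_{a\dot a}=\alpha_a\beta_{\dot a}$ for the complex isotropic vector $u$ (which is complex-null, hence of spinor rank one), the self-dual block of $p\wedge u^\flat$ is proportional to $(\bar\pi_{\dot c}\beta^{\dot c})\,\pi_{(a}\alpha_{b)}$ and the anti-self-dual block to $(\pi_c\alpha^c)\,\bar\pi_{(\dot a}\beta_{\dot b)}$. Orthogonality $p_\gamma u^\gamma=0$ reads $(\pi_c\alpha^c)(\bar\pi_{\dot c}\beta^{\dot c})=0$, so exactly one block survives; the polarisation \eqref{F mathbb massless polarised} tells us which, and the non-vanishing of that block forces $\alpha\propto\pi$ in the undotted (left-handed) case and $\beta\propto\bar\pi$ in the dotted (right-handed) case. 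Hence $\xi_a\propto\pi_a$, respectively $\eta_{\dot a}\propto\bar\pi_{\dot a}$. Since $p_\alpha\sigma^\alpha{}_{\dot a b}=\bar\pi_{\dot a}\pi_b$ up to a constant, we obtain $p_\alpha\sigma^\alpha{}_{\dot a b}\xi^b\propto\bar\pi_{\dot a}\,\pi_b\pi^b=0$ because $\pi_b\pi^b=\epsilon^{bc}\pi_b\pi_c=0$; the dotted computation is identical with $\bar\pi_{\dot b}\bar\pi^{\dot b}=0$. This closes both equations.

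The main obstacle I anticipate is not the vanishing itself---which is forced by nullity of $p$ once $\xi\propto\pi$---but the convention bookkeeping needed to make the identification precise: matching the Hodge-star eigenvalue $\pm i$ in \eqref{F mathbb massless polarised} to the correct self-dual versus anti-self-dual block, ensuring that the left-handed solution produces the undotted $\xi$ appearing in \eqref{massless Dirac left-handed} and the right-handed one the dotted $\eta$ in \eqref{massless Dirac right-handed}, and fixing the Infeld--van der Waerden normalisations so that $p_\alpha\sigma^\alpha{}_{\dot a b}=\bar\pi_{\dot a}\pi_b$ holds with the correct constant. As a fully explicit alternative, one may instead exploit Lorentz covariance of $\sigma^\alpha\partial_{x^\alpha}$ to pass to the simplified frame $p=(0,0,1,1)$, $u^\flat=a(1,\mp i,0,0)$ of Theorem~\ref{theorem explicit massless}, read off the constant spinor from \eqref{tensor F mathbb massless abstract}, and verify \eqref{massless Dirac left-handed}--\eqref{massless Dirac right-handed} by a direct two-by-two matrix computation, using $\partial_{x^1}\xi=\partial_{x^2}\xi=0$ and $\partial_{x^3}\xi=\partial_{x^4}\xi=\frac{i}{2}\xi$.
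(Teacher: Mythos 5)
Your proposal is correct, but it takes a genuinely different route from the paper. The paper's proof is a direct computation in a single frame: invoking Lorentz covariance, it works in the coordinate system where \eqref{A mathbb for massless solution simplified} holds, writes $\mathbb{F}_{\alpha\beta}$ as an explicit $4\times4$ matrix, reads off the explicit spinors \eqref{explicit xi massless}--\eqref{explicit eta massless} (with phase $e^{i(x^3+x^4)/2}$) via the appendix formulae \eqref{F minus to undotted}--\eqref{F plus to dotted}, and verifies the two Dirac equations by multiplying out the Pauli matrices \eqref{Pauli matrices covariant}--\eqref{Pauli matrices contravariant}; this is precisely the ``fully explicit alternative'' you sketch in your last paragraph. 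Your main argument is instead frame-independent: the plane-wave structure of $\mathbb{F}$ gives $\partial_{x^\alpha}\xi=\tfrac{i}{2}p_\alpha\xi$, reducing the PDE to the algebraic identity $p_\alpha\sigma^\alpha{}_{\dot ab}\,\xi^b=0$; then the factorisations $p_{a\dot a}=\pi_a\bar\pi_{\dot a}$ and $u_{a\dot a}=\alpha_a\beta_{\dot a}$, together with orthogonality $p_\gamma u^\gamma=0$ and the polarisation, force $\xi\propto\pi$ (resp.\ $\eta\propto\bar\pi$), so the equation collapses to $\pi_b\pi^b=0$. Both arguments are sound, and your intermediate claims (exactly one spinor block of $p\wedge u^\flat$ survives, and the surviving block is the square of the principal spinor of $p$) check out: if both contractions $(\pi_c\alpha^c)$ and $(\bar\pi_{\dot c}\beta^{\dot c})$ vanished one would get $u\propto p$, contradicting $u_\alpha\bar u^\alpha>0$. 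What your approach buys is an explanation of \emph{why} the theorem holds --- the spinor square root of a null plane wave is the principal spinor of its momentum, and the Weyl equation is nothing but the nullity of that momentum --- and it generalises verbatim to any polarised degenerate plane wave; the cost is that it requires the Infeld--van der Waerden dictionary (the identity $p_\alpha\sigma^\alpha{}_{\dot ab}\propto\bar\pi_{\dot a}\pi_b$ and the two-block decomposition of 2-forms) in a form going slightly beyond what Appendix~\ref{Spinor representation of 2-forms} supplies, so the ``convention bookkeeping'' you flag is genuine, if routine, work within the paper's conventions. The paper's frame computation avoids that machinery altogether and yields explicit spinor formulae that are then reused as the template for the massive case in Section~\ref{Massive Dirac equation}.
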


\begin{proof}
It is sufficient to
establish the identities
\eqref{massless Dirac left-handed}
and
\eqref{massless Dirac right-handed}
in one coordinate system, so let us work in the coordinate system in
which we have \eqref{A mathbb for massless solution simplified}.
Plugging
\eqref{A mathbb for massless solution simplified}
into
\eqref{tensor F mathbb massless abstract}
we get
\begin{equation*}
\mathbb{F}_{\alpha\beta}
=
-\frac{ia}2
\begin{pmatrix}
0&0&-1&-1\\
0&0&\pm i&\pm i\\
1&\mp i&0&0\\
1&\mp i&0&0
\end{pmatrix}
e^{i(x^3+x^4)},
\end{equation*}
where the upper/lower sign corresponds to right-/left-handedness respectively.
Using formulae from Appendix \ref{Spinor representation of 2-forms}
we conclude that

\begin{equation}
\label{explicit xi massless}
\xi^a=
\pm
\sqrt{\frac a2}\,
\begin{pmatrix}
0\\
i
\end{pmatrix}
e^{i(x^3+x^4)/2},
\end{equation}
\begin{equation}
\label{explicit eta massless}
\eta_{\dot a}=
\pm
\sqrt{\frac a2}\,
\begin{pmatrix}
1\\
0
\end{pmatrix}
e^{i(x^3+x^4)/2}.
\end{equation}
It remains only to substitute
\eqref{Pauli matrices covariant}
and
\eqref{explicit xi massless}
into
\eqref{massless Dirac left-handed},
and
\eqref{Pauli matrices contravariant}
and
\eqref{explicit eta massless}
into
\eqref{massless Dirac right-handed}.
\end{proof}

\section{Massive Dirac equation}
\label{Massive Dirac equation}

Let the diffeomorphism $\varphi$
be a massive solution
as per
Theorem \ref{theorem explicit massive}.
The corresponding deformation gradient reads
\begin{equation}
\label{deformation gradient massive abstract}
D^\alpha{}_\beta
=
\delta^\alpha{}_\beta
+
\operatorname{Re}
\left[
iu^\alpha p_\beta\,e^{ip_\gamma x^\gamma}
\right]
+v^\alpha p_\beta.
\end{equation}
In a particular coordinate system the above formula
turns to
\eqref{deformation gradient for massive solution}.
Explicit calculations show that \eqref{deformation gradient massless abstract}
admits a polar decomposition if and only if $c<4$. Assuming that $c<4$
and arguing as in Section~\ref{Massless Dirac equation}
we arrive at
the following explicit formula for the rotation 2-form:
\begin{multline}
\label{tensor F massive abstract}
F
=
-
\,
\frac1{\sqrt c}
\,
\operatorname{arctanh}
\left(
\frac{\sqrt c}2
\,
\right)
\left(
\operatorname{Re}
\left[
i
(p\wedge u^\flat)
e^{ip_\gamma x^\gamma}
\,
\right]
+
(p\wedge v^\flat)
\right)
\\
=
-
\,
\frac1{\sqrt c}
\,
\operatorname{arctanh}
\left(
\frac{\sqrt c}2
\,
\right)
\mathrm{d}A^\flat.
\end{multline}
Observe that
unlike the massless case \eqref{tensor F massless abstract}
the prefactor in the RHS of \eqref{tensor F massive abstract}
brings about, effectively, contributions nonlinear in $A$,
see 
\eqref{strength of field condition 2}.
But apart from the prefactor formula \eqref{tensor F massive abstract} is quite simple.
Here the underlying reason is the same as in the massless case:
at any given point of $\mathbb{M}$
one can identify a 2-dimensional invariant subspace
of the tangent fibre
in which
the deformation gradient \eqref{deformation gradient massive abstract}
differs from the identity map.

Put
\begin{equation*}
\mathbb{F}
:=
-
\,
\frac1{\sqrt c}
\,
\operatorname{arctanh}
\left(
\frac{\sqrt c}2
\,
\right)
\mathrm{d}\mathbb{A}^\flat
=
-
\,
\frac i{\sqrt c}
\,
\operatorname{arctanh}
\left(
\frac{\sqrt c}2
\,
\right)
(p\wedge u^\flat)\,e^{ip_\gamma x^\gamma},
\end{equation*}
which captures information about the oscillating part of $F$.
As in the previous section, we will now
examine the geometric content of $\mathbb{F}$.

Unlike the massless case, $\mathbb{F}$ is not polarised. However,
it can be decomposed into a sum of polarised pieces
\begin{equation*}
\mathbb{F}=\mathbb{F}_++\mathbb{F}_-\,,
\end{equation*}
\begin{equation*}
\mathbb{F}_+
=
\frac{\mathbb{F}-i*\mathbb{F}}{2}\,,
\qquad
\mathbb{F}_-
=
\frac{\mathbb{F}+i*\mathbb{F}}{2}\,,
\end{equation*}
\begin{equation}
\label{decomposition of F mathbb 3}
*\mathbb{F}_\pm
=
\pm i\mathbb{F}_\pm\,.
\end{equation}
In our case the two polarised pieces are degenerate, i.e.
\begin{equation}
\label{decomposition of F mathbb 4}
\det\mathbb{F}_\pm=0.
\end{equation}
The latter follows easily from the observation that the pair of identities
\eqref{decomposition of F mathbb 4} is equivalent to
\begin{equation*}
\det\mathbb{F}=0,
\qquad
\mathbb{F}_{\alpha\beta}\,\mathbb{F}^{\alpha\beta}=0.
\end{equation*}

The 2-form $\mathbb{F}_-$ is equivalent, modulo sign,
to an undotted spinor field $\xi=\xi^a$
and the 2-form $\mathbb{F}_+$ is equivalent, modulo sign,
to a dotted spinor field $\eta=\eta_{\dot a}\,$.
Since in our case the scalar $\xi^a\bar\eta_a$ is real and nonzero,
one can choose the relative sign of $\xi$ and $\eta$
so that $\xi^a\bar\eta_a>0$.
Thus, our complex-valued 2-form $\mathbb{F}$
is equivalent to a bispinor field $(\xi,\eta)$.
This bispinor field is defined uniquely up to sign and is,
effectively, the square root of $\mathbb{F}$.

\begin{theorem}
The bispinor field $(\xi,\eta)$ associated with a
massive solution
satisfies the massive Dirac equation
\begin{equation}
\label{massive Dirac equation}
-i\sigma^\alpha{}_{\dot ab}\,\partial_{x^\alpha}\xi^b=m\,\eta_{\dot a}\,,
\qquad
-i\sigma^{\alpha\dot ba}\,\partial_{x^\alpha}\eta_{\dot b}=m\,\xi^a\,.
\end{equation}
\end{theorem}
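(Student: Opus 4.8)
The plan is to follow verbatim the strategy of the proof of Theorem~\ref{theorem Dirac massless}. Both sides of \eqref{massive Dirac equation} transform covariantly (as dotted/undotted spinors) under the identity component of the Lorentz group, and the passage $\mathbb{A}\mapsto\mathbb{F}\mapsto(\xi,\eta)$ is built out of Lorentz-equivariant operations. Hence it suffices to verify the two identities in \eqref{massive Dirac equation} in a single conveniently chosen coordinate frame, namely the one in which our massive solution takes the simplified form \eqref{A mathbb for massive solution simplified}. In that frame $p_\kappa=(0,0,0,2m)$, so that the complex 2-form $\mathbb{F}$ of \eqref{tensor F massive abstract} carries the oscillation $e^{2imx^4}$.

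First I would write $\mathbb{F}$ down explicitly by substituting \eqref{A mathbb for massive solution simplified} into \eqref{tensor F massive abstract}, obtaining a constant complex antisymmetric matrix times $e^{2imx^4}$, exactly as in the displayed computation of $\mathbb{F}_{\alpha\beta}$ inside the proof of Theorem~\ref{theorem Dirac massless}. I would then split $\mathbb{F}=\mathbb{F}_++\mathbb{F}_-$ according to \eqref{decomposition of F mathbb 3} and, using the degeneracy \eqref{decomposition of F mathbb 4} together with the square-root correspondence of Appendix~\ref{Spinor representation of 2-forms}, extract the two constant spinors. Because a spinor is the square root of the corresponding 2-form, the oscillatory factor halves, so I expect
\[
\xi^a=\xi_0^a\,e^{imx^4},\qquad
\eta_{\dot a}=\eta_{0,\dot a}\,e^{imx^4},
\]
with $\xi_0,\eta_0$ constant and their relative sign fixed by the requirement $\xi^a\bar\eta_a>0$ recorded just before the theorem.

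With these explicit forms the verification becomes purely algebraic. Differentiation produces $\partial_{x^\alpha}\xi^b=im\,\delta^4{}_\alpha\,\xi^b$, and likewise for $\eta$, so the two equations in \eqref{massive Dirac equation} reduce to the constant intertwining relations $\sigma^4{}_{\dot ab}\,\xi_0^b=\eta_{0,\dot a}$ and $\sigma^{4\dot ba}\,\eta_{0,\dot b}=\xi_0^a$; the factor $m$ brought down by the derivative matches the $m$ on the right-hand side automatically, since it is one and the same parameter encoded in $p_4=2m$. The last step is to check these two relations against the explicit Pauli matrices \eqref{Pauli matrices covariant}, \eqref{Pauli matrices contravariant} and the explicit $\xi_0,\eta_0$, exactly as in the massless case. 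Note that the overall real prefactor $\frac{1}{\sqrt c}\operatorname{arctanh}\!\bigl(\frac{\sqrt c}2\bigr)$ in \eqref{tensor F massive abstract} rescales $\mathbb{F}_+$ and $\mathbb{F}_-$ by the same factor, hence $\xi_0$ and $\eta_0$ by a common factor, and therefore drops out of the homogeneous relations above.

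The main obstacle I anticipate is not any single hard computation but the bookkeeping that guarantees both equations hold with the \emph{same} mass and no spurious relative constant: one must confirm that the magnitudes of the spinors extracted from $\mathbb{F}_-$ and $\mathbb{F}_+$ are correctly balanced. This balance is precisely what is enforced by the degeneracy $\det\mathbb{F}_\pm=0$ together with $\mathbb{F}_{\alpha\beta}\mathbb{F}^{\alpha\beta}=0$, which makes the two polarised pieces have equal norm, so that $\sigma^4{}_{\dot ab}$ and $\sigma^{4\dot ba}$ act as mutually inverse intertwiners between $\xi_0$ and $\eta_0$. Verifying this compatibility, and pinning down the sign via $\xi^a\bar\eta_a>0$, is the delicate point; everything else is a routine substitution mirroring Theorem~\ref{theorem Dirac massless}.
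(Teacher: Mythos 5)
Your proposal follows essentially the same route as the paper's own (very terse) proof: the paper likewise invokes Lorentz covariance to reduce to the coordinate system of \eqref{A mathbb for massive solution simplified}, computes $\mathbb{F}$, splits it into polarised pieces as in \eqref{decomposition of F mathbb 3}, extracts the bispinor via the square-root correspondence, and records the outcome
\[
\xi^a=\eta_{\dot a}
=
\pm
\sqrt{
\frac{ma}{\sqrt c}
\,
\operatorname{arctanh}
\left(
\frac{\sqrt c}2
\right)
}
\begin{pmatrix}
0\\
i
\end{pmatrix}
e^{imx^4},
\]
which visibly satisfies \eqref{massive Dirac equation}. Your reduction of the two equations to the constant relations $\sigma^4{}_{\dot ab}\,\xi_0^b=\eta_{0,\dot a}$ and $\sigma^{4\dot ba}\,\eta_{0,\dot b}=\xi_0^a$ — which, since both $\sigma^4$ matrices in \eqref{Pauli matrices covariant} and \eqref{Pauli matrices contravariant} are the identity, amount to $\xi_0=\eta_0$ — is exactly the content of that displayed formula.

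One assertion in your closing paragraph is, however, incorrect, and it concerns precisely the point you single out as delicate. The degeneracy conditions $\det\mathbb{F}_\pm=0$, equivalently $\det\mathbb{F}=0$ together with $\mathbb{F}_{\alpha\beta}\,\mathbb{F}^{\alpha\beta}=0$ (see \eqref{decomposition of F mathbb 4} and the discussion following it), cannot ``make the two polarised pieces have equal norm'': these conditions are invariant under \emph{independent} rescalings $\mathbb{F}_-\mapsto\lambda\,\mathbb{F}_-$, $\mathbb{F}_+\mapsto\mu\,\mathbb{F}_+$, so they carry no information about the relative size of the two pieces. The massless solutions themselves furnish a counterexample: there $\mathbb{F}$ is polarised, both degeneracy conditions hold, and yet one of the two pieces vanishes identically — the balance fails in the most extreme way possible, which is exactly why one obtains a single Weyl spinor and a massless equation rather than a bispinor with a mass term. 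The equality $\xi_0=\eta_0$ is instead a specific feature of the massive ansatz $\mathbb{F}\sim(p\wedge u^\flat)\,e^{ip_\gamma x^\gamma}$ with $p$ timelike and $u$ isotropic and orthogonal to $p$: in the frame of \eqref{A mathbb for massive solution simplified} the nonzero components of $\mathbb{F}$ and of $*\mathbb{F}$ occupy disjoint index pairs with matching magnitudes, so the self-dual and anti-self-dual parts come out equally loaded. This must be read off from the explicit computation — which is the final step of your own plan and the step the paper actually performs — so your proof does go through; but the heuristic you offer as a guarantee of the balance is not valid and cannot replace that computation.
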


\begin{proof}
Arguing along the same lines as that of
Theorem \ref{theorem Dirac massless}, in the special coordinate system
in which we have \eqref{A mathbb for massive solution simplified} we get
\[
\xi^a=\eta_{\dot a}
=
\pm
\sqrt{
\frac{ma}{\sqrt c}
\,
\operatorname{arctanh}
\left(
\frac{\sqrt c}2
\,
\right)
}
\begin{pmatrix}
0\\
i
\end{pmatrix}
e^{imx^4}.
\]
The above bispinor field clearly satisfies
\eqref{massive Dirac equation}.
\end{proof}

\begin{remark}
In writing the massive Dirac equation \eqref{massive Dirac equation}
we adopted the spinor representation,
cf.~\cite[formula (20.2)]{berestetskii},
as opposed to the standard representation,
cf.~\cite[formulae (21.19), (21.17)]{berestetskii}.
\end{remark}

\section{Acknowledgements}

We are grateful to
Z.~Avetisyan,
C.~G.~B\"ohmer,
C.~Dappiaggi,
E.~R.~Johnson,
M.~Levitin,
N.~Saveliev
and E.~Shargorodsky
for valuable suggestions and stimulating discussions.



\begin{appendices}

\section{Notation and conventions}

\subsection{Exterior calculus}
\label{Exterior calculus notation}

In this paper we identify differential forms with covariant antisymmetric tensors.
Henceforth $M$ is a 4-manifold equipped with Lorentzian metric $g$
and Levi-Civita connection $\nabla$.

It is well known that the metric $g$ induces a canonical isomorphism
between the tangent bundle $TM$
and the contangent bundle $T^*M$, the so-called \emph{musical isomorphism}.
We denote it by $\flat:TM\to T^*M$ and its inverse by $\sharp:T^*M\to TM$.

Given a scalar field $f\in C^\infty(M)$, its
exterior derivative $\mathrm{d}f$ is defined as the gradient.
Given a 1-form $A\in\Omega^1(M)$, its
exterior derivative $\mathrm{d}A\in\Omega^2(M)$
is defined, componentwise, as
\[
(\mathrm{d}A)_{\alpha\beta}
=
\partial_{x^\alpha}A_\beta-\partial_{x^\beta}A_\alpha\,.
\]

Given a pair of rank $k$ covariant antisymmetric tensors $Q$ and $T$
we define their pointwise inner product as
\begin{equation*}
\langle Q,T\rangle_g
:=
\frac{1}{k!}\,
\overline{Q}_{\alpha_1\dots\alpha_k}
\,
T_{\beta_1\dots\beta_k}
\,
g^{\alpha_1\beta_1}\cdots g^{\alpha_k\beta_k}
\,,
\end{equation*}
and, accordingly,
\begin{equation*}
\|Q\|_g^2
:=
\langle Q,Q\rangle_g
\,.
\end{equation*}

We define the $L^2$ inner product
\begin{equation*}
(Q,T)_{L^2}
:=
\int
\langle Q,T\rangle_g
\,\sqrt{-\det g_{\mu\nu}}
\ dx\,.
\end{equation*}

Given $U\in\Omega^k(M)$ and $V\in\Omega^{k-1}(M)$
we define the action of the codifferential 
$\delta:\Omega^k(M)\to\Omega^{k-1}(M)$ in accordance with
\[
\langle U,\mathrm{d}V\rangle=\langle\delta U,V\rangle.
\]
In particular, when $A\in\Omega^1(M)$ and $F\in\Omega^2(M)$,
we get in local coordinates
\[
\delta A=-\nabla^\alpha A_\alpha,
\]
\[
(\delta F)_\alpha=\nabla^\beta F_{\alpha\beta}.
\]

For the sake of clarity, let us mention that the wedge product of 1-forms
reads
\[
(A\wedge B)_{\alpha\beta}=A_\alpha B_\beta-A_\beta B_\alpha\,.
\]

We define the action of the Hodge star on a rank $k$ antisymmetric tensor as
\begin{equation*}
(*Q)_{\mu_{k+1}\ldots\mu_4}\!:=\frac1{k!}\,
\sqrt{-\det g_{\alpha\beta}}
\ Q^{\mu_1\ldots\mu_k}\,\varepsilon_{\mu_1\ldots\mu_4}\,,
\end{equation*}
where $\varepsilon$ is the totally antisymmetric symbol,
$\varepsilon_{1234}:=+1$.

\subsection{Spinors}

In this appendix as well as in Appendix~\ref{Spinor representation of 2-forms}
we restrict ourselves to
the special case of Minkowski space~$\mathbb{M}$.
We work with 2-component Weyl spinors as opposed to 4-component Dirac spinors.
We recall below the basic ideas and conventions, referring the reader to
\cite[Section 18]{berestetskii}
and \cite[Section 1.2]{buchbinder}
for further details.

In line with
\cite{berestetskii,buchbinder}
we treat spinors as holonomic objects.
This approach simplifies analysis in the case of flat space
and is traditionally used in particle physics.

We adopt the following conventions.
\begin{itemize}
\item
`Metric' spinor:
\begin{equation*}
\epsilon_{ab}=\epsilon_{\dot a\dot b}=
\epsilon^{ab}=\epsilon^{\dot a\dot b}=
\begin{pmatrix}
0&1\\
-1&0
\end{pmatrix}.
\end{equation*}
\item
`Covariant', with respect to spinor indices, Pauli matrices:
\begin{equation}
\label{Pauli matrices covariant}
\sigma^1{}_{\dot ab}:=
\begin{pmatrix}
0&1\\
1&0
\end{pmatrix},
\quad
\sigma^2{}_{\dot ab}:=
\begin{pmatrix}
0&-i\\
i&0
\end{pmatrix},
\quad
\sigma^3{}_{\dot ab}:=
\begin{pmatrix}
1&0\\
0&-1
\end{pmatrix},
\quad
\sigma^4{}_{\dot ab}:=
\begin{pmatrix}
1&0\\
0&1
\end{pmatrix}.
\end{equation}
\item
`Contravariant', with respect to spinor indices, Pauli matrices:
\begin{equation}
\label{Pauli matrices contravariant}
\sigma^{1\dot ab}=
\begin{pmatrix}
0&-1\\
-1&0
\end{pmatrix},
\quad
\sigma^{2\dot ab}=
\begin{pmatrix}
0&-i\\
i&0
\end{pmatrix},
\quad
\sigma^{3\dot ab}=
\begin{pmatrix}
-1&0\\
0&1
\end{pmatrix},
\quad
\sigma^{4\dot ab}=
\begin{pmatrix}
1&0\\
0&1
\end{pmatrix}.
\end{equation}
\end{itemize}
Here
$\sigma^{\alpha\dot ab}=\epsilon^{\dot a\dot c}\epsilon^{bd}\sigma^\alpha{}_{\dot cd}$.

Pauli matrices satisfy the identities
\begin{subequations}
\begin{equation}
\label{Pauli identity 1}
\sigma^{\alpha\dot b a}
\,
\sigma^\beta{}_{\dot bc}
+
\sigma^{\beta\dot b a}
\,
\sigma^\alpha{}_{\dot bc}
=-2g^{\alpha\beta}\delta^a{}_c\,,
\end{equation}
\begin{equation}
\label{Pauli identity 2}
\sigma^\alpha{}_{\dot a b}\,\sigma^{\beta\dot c b}
+
\sigma^\beta{}_{\dot a b}\,\sigma^{\alpha\dot c b}
=-2g^{\alpha\beta}\delta_{\dot a}{}^{\dot c}\,.
\end{equation}
\end{subequations}

\subsection{Spinor representation of 2-forms}
\label{Spinor representation of 2-forms}

Let $\mathbb{F}_-$ and $\mathbb{F}_+$ be polarised complex 2-forms,
see \eqref{decomposition of F mathbb 3}.
Then
$\mathbb{F}_-$ is equivalent to a trace-free undotted rank two spinor
$\zeta^b{}_c\,$,
\begin{subequations}
\begin{equation}
\label{F minus to undotted}
(\mathbb{F}_-)^{\alpha\beta}=
-i
\sigma^\alpha{}_{\dot ab}\,\zeta^b{}_c\,\sigma^{\beta\dot ac}\,,
\end{equation}
and
$\mathbb{F}_+$ is equivalent to a trace-free dotted rank two spinor
$\theta_{\dot b}{}^{\dot c}\,$,
\begin{equation}
\label{F plus to dotted}
(\mathbb{F}_+)^{\alpha\beta}=
i
\sigma^{\alpha\dot ba}\,\theta_{\dot b}{}^{\dot c}\,\sigma^\beta{}_{\dot ca}\,.
\end{equation}
\end{subequations}
The identities
\eqref{Pauli identity 1}
and
\eqref{Pauli identity 2}
ensure that that the right-hand sides of
\eqref{F minus to undotted}
and
\eqref{F plus to dotted},
respectively,
are antisymmetric in $\alpha,\beta$.

\begin{fact}
\label{fact 1}
The following are equivalent.
\begin{enumerate}[(i)]
\item
$\det\mathbb{F}_-=0\,$.
\item
$\det\zeta=0\,$.
\item
There exists a rank one spinor $\xi^a$ such that
$\zeta^b{}_c=\xi^b\,\xi^d\,\epsilon_{dc}\,$.
\end{enumerate}
\end{fact}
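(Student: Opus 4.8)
The plan is to prove the two equivalences (ii)$\Leftrightarrow$(iii) and (i)$\Leftrightarrow$(ii) separately, treating the former as elementary linear algebra on trace-free $2\times2$ matrices and the latter via the Pauli identities \eqref{Pauli identity 1}, \eqref{Pauli identity 2} combined with the polarisation $*\mathbb{F}_-=-i\mathbb{F}_-$. I would begin with (ii)$\Leftrightarrow$(iii), which is the cleaner of the two.

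For (iii)$\Rightarrow$(ii): if $\zeta^b{}_c=\xi^b\xi^d\epsilon_{dc}$ then $\zeta$ has spinor rank at most one, so $\det\zeta=0$ immediately, and moreover $\zeta^b{}_b=\xi^b\xi^d\epsilon_{db}=0$ by antisymmetry of $\epsilon$, consistent with $\zeta$ being trace-free. For (ii)$\Rightarrow$(iii): assuming $\det\zeta=0$, the matrix $\zeta^b{}_c$ has rank at most one, hence factors as an outer product $\zeta^b{}_c=\xi^b\chi_c$ for some spinor $\xi^a$ and some covector $\chi_c$ (the case $\zeta=0$ corresponding to $\xi=0$). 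The trace-free condition $\zeta^b{}_b=\xi^b\chi_b=0$ says that $\chi$ annihilates $\xi$ under the natural pairing. Since the lowered spinor $\xi_c:=\xi^d\epsilon_{dc}$ also annihilates $\xi$ (again by antisymmetry of $\epsilon$), and the annihilator of a nonzero vector in the $2$-dimensional spinor space is one-dimensional, we get $\chi_c=\lambda\,\xi^d\epsilon_{dc}$ for some scalar $\lambda$; absorbing $\lambda$ into $\xi$ by rescaling yields (iii).

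For (i)$\Leftrightarrow$(ii) I would first reduce $\det\mathbb{F}_-$ to a scalar invariant. Since $\mathbb{F}_-$ is an antisymmetric $4\times4$ matrix, $\det\mathbb{F}_-=(\operatorname{Pf}\mathbb{F}_-)^2$, and the Pfaffian is proportional to $(\mathbb{F}_-)_{\alpha\beta}(*\mathbb{F}_-)^{\alpha\beta}$. The polarisation $*\mathbb{F}_-=-i\mathbb{F}_-$ then gives $(\mathbb{F}_-)_{\alpha\beta}(*\mathbb{F}_-)^{\alpha\beta}=-i\,(\mathbb{F}_-)_{\alpha\beta}(\mathbb{F}_-)^{\alpha\beta}$, so $\det\mathbb{F}_-=0$ if and only if $(\mathbb{F}_-)_{\alpha\beta}(\mathbb{F}_-)^{\alpha\beta}=0$; this is exactly the reduction already exploited in the massive-case discussion around \eqref{decomposition of F mathbb 4}. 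It then remains to contract the representation \eqref{F minus to undotted} with itself: substituting $(\mathbb{F}_-)^{\alpha\beta}=-i\,\sigma^\alpha{}_{\dot ab}\zeta^b{}_c\sigma^{\beta\dot ac}$ and using \eqref{Pauli identity 1}, \eqref{Pauli identity 2} (equivalently, the completeness of the $\sigma^\alpha$ as a basis of $2\times2$ matrices) to collapse the four $\sigma$-factors contracted over $\alpha,\beta$ into metric spinors, one obtains $(\mathbb{F}_-)_{\alpha\beta}(\mathbb{F}_-)^{\alpha\beta}=c_0\,\zeta^b{}_c\zeta^c{}_b$ with $c_0\neq0$. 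For the trace-free matrix $\zeta$ one has $\zeta^b{}_c\zeta^c{}_b=\operatorname{tr}(\zeta^2)=(\operatorname{tr}\zeta)^2-2\det\zeta=-2\det\zeta$, whence $(\mathbb{F}_-)_{\alpha\beta}(\mathbb{F}_-)^{\alpha\beta}=-2c_0\det\zeta$, and (i)$\Leftrightarrow$(ii) follows.

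The main obstacle is the index bookkeeping in this last contraction: keeping the dotted and undotted indices and the placements of the metric spinors $\epsilon$ consistent while applying the Pauli identities, so as to confirm that the collapsed expression is genuinely a multiple of $\operatorname{tr}(\zeta^2)$ with a nonzero coefficient $c_0$ (its precise value being irrelevant to the equivalence). Everything else is routine.
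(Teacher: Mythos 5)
Your proof is correct, but it takes a genuinely different route from the paper's in both halves, so a comparison is worth recording. For (ii)$\Leftrightarrow$(iii) the paper does not argue via rank and annihilators: it reduces the statement to the lemma of Appendix~\ref{Nilpotent operators in a 2D symplectic space} (for a trace-free $2\times2$ matrix, $\det\zeta=0$ is precisely nilpotency, and with $\omega=\epsilon$ the representation $Lv=u\,\omega(u,v)$ is exactly $\zeta^b{}_c=\xi^b\xi^d\epsilon_{dc}$), and that lemma is proved there by writing the matrix in components and factorising it explicitly using complex square roots. Your argument --- rank-one factorisation $\zeta^b{}_c=\xi^b\chi_c$, trace-freeness forcing $\chi$ into the one-dimensional annihilator of $\xi$, hence $\chi_c=\lambda\,\xi^d\epsilon_{dc}$, then rescaling by $\sqrt{\lambda}$ --- is the coordinate-free counterpart of the same computation; both hinge on the spinor space being complex, yours through the square root of $\lambda$, the paper's through the square roots in its explicit factorisation. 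For (i)$\Leftrightarrow$(ii) the paper merely calls the equivalence a straightforward consequence of \eqref{F minus to undotted}, the natural reading being multiplicative: \eqref{F minus to undotted} exhibits $\mathbb{F}_-$ as the conjugation, by the fixed invertible map $v_\alpha\mapsto v_\alpha\sigma^\alpha{}_{\dot ab}$, of the map determined by $\zeta$ tensored with $\epsilon$, whence $\det\mathbb{F}_-$ equals a nonzero constant times $(\det\zeta)^2$; that argument needs neither the polarisation $*\mathbb{F}_-=-i\,\mathbb{F}_-$ nor $\operatorname{tr}\zeta=0$. Your Pfaffian route ($\det\mathbb{F}_-=(\operatorname{Pf}\mathbb{F}_-)^2$ and $\operatorname{Pf}\mathbb{F}_-\propto(\mathbb{F}_-)_{\alpha\beta}(*\mathbb{F}_-)^{\alpha\beta}=-i\,(\mathbb{F}_-)_{\alpha\beta}(\mathbb{F}_-)^{\alpha\beta}$) does consume those extra hypotheses, but they are part of the setup, and in exchange you avoid factorising a $4\times4$ determinant and land on the invariant $\mathbb{F}_{\alpha\beta}\mathbb{F}^{\alpha\beta}$ that the paper itself exploits around \eqref{decomposition of F mathbb 4}. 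Your one unverified claim, $c_0\ne0$, does check out: with the conventions \eqref{Pauli matrices covariant}--\eqref{Pauli matrices contravariant} one finds $\sigma_{\alpha\dot ab}\,\sigma^\alpha{}_{\dot cd}=-2\,\epsilon_{\dot a\dot c}\,\epsilon_{bd}$ and $\sigma_\alpha{}^{\dot ab}\,\sigma^{\alpha\dot cd}=-2\,\epsilon^{\dot a\dot c}\,\epsilon^{bd}$, which yield $(\mathbb{F}_-)_{\alpha\beta}(\mathbb{F}_-)^{\alpha\beta}=8\operatorname{tr}(\zeta^2)=-16\det\zeta$, i.e.\ $c_0=8\ne0$, so both implications of your (i)$\Leftrightarrow$(ii) go through.
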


\begin{fact}
\label{fact 2}
The following are equivalent.
\begin{enumerate}[(i)]
\item
$\det\mathbb{F}_+=0\,$.
\item
$\det\theta=0\,$.
\item
There exists a rank one spinor $\eta_{\dot a}$ such that
$\theta_{\dot b}{}^{\dot c}=\eta_{\dot b}\,\eta_{\dot d}\,\epsilon^{\dot d\dot c}\,$.
\end{enumerate}
\end{fact}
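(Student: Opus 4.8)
The plan is to run the same three-way equivalence as in Fact~\ref{fact 1}, with dotted indices replacing undotted ones; indeed the statement is the exact mirror image of Fact~\ref{fact 1} under the interchange of the two spinor types, so at a formal level the proof is obtained by ``dotting'' the argument for Fact~\ref{fact 1}. I would nonetheless carry out the two nontrivial links explicitly.

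First I would dispose of (ii)$\Leftrightarrow$(iii), which is elementary $2\times2$ linear algebra. For the easy direction, note that any $\theta$ of the form $\theta_{\dot b}{}^{\dot c}=\eta_{\dot b}\eta_{\dot d}\epsilon^{\dot d\dot c}$ is a single outer product, hence of rank at most one, so $\det\theta=0$; it is moreover automatically trace-free, since contracting $\dot c$ with $\dot b$ pairs the symmetric quantity $\eta_{\dot b}\eta_{\dot d}$ against the antisymmetric $\epsilon^{\dot d\dot b}$. Conversely, if $\det\theta=0$ then the trace-free matrix $\theta_{\dot b}{}^{\dot c}$ has rank at most one and may be written as an outer product $\theta_{\dot b}{}^{\dot c}=\mu_{\dot b}\nu^{\dot c}$. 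The trace-free condition reads $\mu_{\dot b}\nu^{\dot b}=0$, and since in two spinor dimensions the annihilator of $\mu_{\dot b}$ is spanned by $\epsilon^{\dot d\dot c}\mu_{\dot d}$, we must have $\nu^{\dot c}=\lambda\,\epsilon^{\dot d\dot c}\mu_{\dot d}$ for some $\lambda\in\mathbb{C}$. Absorbing $\lambda$ by rescaling (a square root exists over $\mathbb{C}$) and setting $\eta_{\dot b}\propto\mu_{\dot b}$ yields (iii); the rank-zero case corresponds to $\eta=0$.

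Next I would establish (i)$\Leftrightarrow$(ii) by relating the determinant of $\mathbb{F}_+$ to the scalar invariant $\mathbb{F}_{+\,\alpha\beta}\mathbb{F}_+{}^{\alpha\beta}$. Substituting the correspondence \eqref{F plus to dotted} into this contraction and collapsing the resulting string of Pauli matrices by repeated use of \eqref{Pauli identity 1} and \eqref{Pauli identity 2}, one is left with $\mathbb{F}_{+\,\alpha\beta}\mathbb{F}_+{}^{\alpha\beta}=\kappa\,\theta_{\dot b}{}^{\dot c}\theta_{\dot c}{}^{\dot b}$ for a fixed nonzero constant $\kappa$. Because $\theta$ is trace-free, Cayley--Hamilton gives $\theta_{\dot b}{}^{\dot c}\theta_{\dot c}{}^{\dot b}=\operatorname{tr}(\theta^2)=-2\det\theta$, so the scalar invariant vanishes precisely when $\det\theta=0$. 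Finally, since $\mathbb{F}_+$ is polarised, $*\mathbb{F}_+=i\,\mathbb{F}_+$, the Pfaffian of the $4\times4$ antisymmetric tensor $(\mathbb{F}_+)_{\alpha\beta}$ is proportional to $\mathbb{F}_{+\,\alpha\beta}\mathbb{F}_+{}^{\alpha\beta}$, and $\det\mathbb{F}_+$ equals the square of this Pfaffian up to a nonzero metric factor. Hence $\det\mathbb{F}_+=0\Leftrightarrow\mathbb{F}_{+\,\alpha\beta}\mathbb{F}_+{}^{\alpha\beta}=0\Leftrightarrow\det\theta=0$, closing the cycle. This last chain of proportionalities is exactly the computation underlying the equivalence of \eqref{decomposition of F mathbb 4} with $\det\mathbb{F}=0$, $\mathbb{F}_{\alpha\beta}\mathbb{F}^{\alpha\beta}=0$ quoted earlier.

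The main obstacle I anticipate is the bookkeeping in the second step: carrying out the $\sigma$-matrix contraction so as to land cleanly on $\theta_{\dot b}{}^{\dot c}\theta_{\dot c}{}^{\dot b}$, and above all confirming that the constant $\kappa$ (and the constant in the Pfaffian relation) is genuinely nonzero, since the whole equivalence hinges on these factors not degenerating. A clean way to sidestep explicit index gymnastics, should they prove unwieldy, is to observe that $\theta\mapsto\mathbb{F}_+$ is a linear isomorphism from trace-free dotted spinors onto self-dual complex 2-forms, so that $\det\mathbb{F}_+$ and $(\det\theta)^2$ are both $\mathrm{SL}(2,\mathbb{C})$-invariant homogeneous quartics on the same three-dimensional space and must therefore be proportional; the constant, and its non-vanishing, can then be fixed by evaluating on one convenient nonzero $\theta$.
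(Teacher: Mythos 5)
Your proof is correct; it parallels the paper's in overall structure but diverges in how both steps are executed. For (ii)$\Leftrightarrow$(iii), the paper does not argue inline: it observes that a trace-free $2\times2$ matrix with zero determinant is precisely a nilpotent one and invokes the lemma of Appendix~\ref{Nilpotent operators in a 2D symplectic space}, whose factorisation \eqref{Lemma about symplectic space equation 1} is exactly your outer product $\eta_{\dot b}\,\eta_{\dot d}\,\epsilon^{\dot d\dot c}$; your annihilator-plus-rescaling argument is the same elementary algebra carried out directly and a little more invariantly, including the degenerate case $\theta=0$. The genuine difference is in (i)$\Leftrightarrow$(ii). The paper treats this as immediate from the correspondence itself, with no use of polarisation: reading \eqref{F plus to dotted} as a product of three $4\times4$ matrices — two invertible blocks built from the Pauli matrices and the block $\theta\otimes\mathrm{Id}_2$ — gives $\det\mathbb{F}_+=\kappa'(\det\theta)^2$ with $\kappa'\ne0$ at once. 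Your route through the Pfaffian instead consumes the self-duality hypothesis \eqref{decomposition of F mathbb 3} and needs the completeness relation $\sigma^\alpha{}_{\dot aa}\,\sigma_{\alpha\,\dot bb}\propto\epsilon_{\dot a\dot b}\,\epsilon_{ab}$, which is standard but not literally among \eqref{Pauli identity 1}--\eqref{Pauli identity 2}. What your route buys is the explicit identification, for polarised forms, of $\det\mathbb{F}_+=0$ with $\mathbb{F}_{+\,\alpha\beta}\mathbb{F}_+{}^{\alpha\beta}=0$ — precisely the fact the paper itself relies on in Section~\ref{Massive Dirac equation} — and your invariant-theoretic fallback fixes the constants cleanly, so the non-degeneracy issue you flag as the main obstacle is not an actual gap.
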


Facts \ref{fact 1} and \ref{fact 2} imply that a degenerate polarised
2-form is equivalent to the square of a rank 1 spinor. The latter is defined
uniquely up to sign.

The equivalence between (i) and (ii) in the above statements is
a straightforward consequence of
\eqref{F minus to undotted}
and
\eqref{F plus to dotted},
whereas (iii) is not so obvious.
The relevant arguments are presented in
Appendix~\ref{Nilpotent operators in a 2D symplectic space}.

\section{Some results in linear algebra}

\subsection{Linear algebra involving a pair of quadratic forms}
\label{Linear algebra involving a pair of quadratic forms}

Working in an $n$-dimensional real vector space $V$, consider a pair of
non-degenerate symmetric bilinear forms,
$g:V\times V\to\mathbb{R}$
and
$h:V\times V\to\mathbb{R}$.
These uniquely define an invertible linear operator $L:V\to V$ via the formula
\[
h(u,v)=g(Lu,v),
\qquad\forall u,v\in V.
\]

The eigenvalue problem for the operator $L$
\[
Lu=\lambda u
\]
can be equivalently reformulated in terms of bilinear forms
\[
h(u,v)=\lambda g(u,v),
\qquad\forall v\in V.
\]
The expression $h-\lambda g$ is called a \emph{linear pencil}
of symmetric bilinear forms.

It is well known
\cite[Section X.6]{gantmacher}  
that if at least one of the forms is sign definite,
then $L$ has real eigenvalues and is diagonalisable.
In this case the associated pencil is called \emph{regular}.

If neither $g$ nor $h$ is sign definite, then the operator $L$
may have complex eigenvalues and may not be diagonalisable. In particular,
the \emph{strain} operator
\[
S:=L-\mathrm{Id}
\]
may be nilpotent. This is a fundamental difference with the regular
(sign definite) case where the strain operator cannot be nilpotent.

We now address the question what is
the maximal nilpotency index of $S$.

\begin{lemma}
\label{Lemma about nilpotency index}
Suppose that $n\ge4$ and that both $g$ and $h$
have Lorentzian signature
\[
\underset{n-1}{\underbrace{+\,\cdots\,+}}\,-\,.
\]
Then the nilpotency index of $S$ is less than or equal to three.
\end{lemma}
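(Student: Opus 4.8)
The starting point is the observation that the strain operator $S=L-\mathrm{Id}$ is \emph{self-adjoint} with respect to $g$: since $g(Su,v)=h(u,v)-g(u,v)$ and both $g$ and $h$ are symmetric, we have $g(Su,v)=g(u,Sv)$ for all $u,v\in V$. The plan is to assume that the nilpotency index of $S$ equals some $k$ (so $S^k=0$ but $S^{k-1}\neq0$) and to extract from this a lower bound on the number of negative directions of $g$, which must not exceed $1$ because the signature is Lorentzian. A contradiction for $k\ge4$ then forces $k\le3$.

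First I would produce a cyclic vector of maximal height whose self-pairing is non-degenerate. Consider the bilinear form $B(u,v):=g(u,S^{k-1}v)$; it is symmetric because $S^{k-1}$ is $g$-self-adjoint, and it is not identically zero because $g$ is non-degenerate while $S^{k-1}\neq0$. Over $\mathbb{R}$ a nonzero symmetric form cannot vanish on the whole diagonal, so there exists $w$ with $c_{k-1}:=g(w,S^{k-1}w)\neq0$; in particular $S^{k-1}w\neq0$, so $w$ has height exactly $k$. Setting $c_m:=g(w,S^m w)$ and using self-adjointness gives $g(S^iw,S^jw)=c_{i+j}$, so the Gram matrix of $\{w,Sw,\dots,S^{k-1}w\}$ is the Hankel matrix $(c_{i+j})_{i,j=0}^{k-1}$, with $c_m=0$ for $m\ge k$ and $c_{k-1}\neq0$.

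The key step is then a signature computation. This Gram matrix is anti-triangular with the nonzero constant $c_{k-1}$ along the main anti-diagonal, so its determinant is $\pm\,c_{k-1}^{\,k}\neq0$; hence the $k$ vectors are linearly independent and $g$ restricted to $W:=\operatorname{span}\{w,Sw,\dots,S^{k-1}w\}$ is non-degenerate. Scaling the sub-anti-diagonal entries $c_0,\dots,c_{k-2}$ continuously to zero keeps the matrix anti-triangular, hence non-degenerate, so by Sylvester's law of inertia its signature equals that of $c_{k-1}$ times the anti-diagonal identity matrix $J$. A direct eigenvector computation for $J$ (which pairs $e_i$ with $e_{k-1-i}$) gives negative index $k/2$ when $k$ is even and $(k\mp1)/2$ when $k$ is odd; in every case with $k\ge4$ the negative index of $g|_W$ is at least $2$.

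Finally, since $W$ is non-degenerate we have the $g$-orthogonal decomposition $V=W\oplus W^{\perp}$, so the negative index of $g$ is at least that of $g|_W$. As $g$ has Lorentzian signature, its negative index is $1$, giving $2\le1$, a contradiction. Therefore $k\le3$. I expect the main obstacle to be the signature step: establishing non-degeneracy of $g|_W$ and pinning down its exact negative index via the deformation to the anti-diagonal identity. Once that is in place, the Lorentzian constraint closes the argument immediately.
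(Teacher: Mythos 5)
Your proof is correct, and it takes a genuinely different route from the paper's. The paper passes to the complex setting and invokes the canonical-form theory of self-adjoint operators in indefinite inner product spaces (Gohberg--Lancaster--Rodman, Theorem 8.4.1): inspection of the admissible canonical blocks shows that nilpotency index four or higher cannot occur when the negative index of the inner product is one. Your argument re-derives, elementarily and self-containedly, precisely the piece of that theory which is needed: the symmetric form $B(u,v)=g(u,S^{k-1}v)$ is nonzero, so polarization yields a cyclic vector $w$ of maximal height with $c_{k-1}=g(w,S^{k-1}w)\ne0$; the Gram matrix $\bigl(g(S^iw,S^jw)\bigr)_{i,j}=(c_{i+j})$ is anti-triangular with constant nonzero anti-diagonal, hence has determinant $\pm c_{k-1}^{\,k}\ne0$; the deformation sending $c_0,\dots,c_{k-2}$ to zero stays non-degenerate, so by constancy of signature the restriction $g|_W$ has the inertia of $c_{k-1}$ times the exchange matrix, i.e.\ negative index $\lfloor k/2\rfloor$ or $\lceil k/2\rceil$, which is at least $2$ once $k\ge4$; splitting $V=W\oplus W^{\perp}$ then contradicts the Lorentzian hypothesis. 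Every step checks out. What the paper's route buys is brevity and access to the full classification, which it exploits again in the remark after the lemma (to decide which canonical structures actually realise index three and that index three is attained). What your route buys is self-containedness over $\mathbb{R}$ and a strictly stronger statement: you never use the signature (or even non-degeneracy) of $h$ --- only its symmetry, which makes $S$ $g$-self-adjoint --- and your argument generalises verbatim to show that negative index $r$ bounds the nilpotency index by $2r+1$.
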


\begin{proof}
Observe first that it is sufficient to prove the lemma in the complex setting,
where we can use \cite[Theorem 8.4.1]{gohberg}.
Examination of the latter shows that nilpotency index strictly greater than
four is not possible, whereas nilpotency index equal to
four is possible only if we have an invariant subspace in which our
operator has the structure \cite[formula (8.4.19)]{gohberg}.
But the matrix $N$ from \cite[formula (8.4.19)]{gohberg}
with $\lambda=0$ has nilpotency index at most three.
\end{proof}

\begin{remark}
Closer examination shows that in our setting
the structure \cite[formula (8.4.19)]{gohberg}
cannot be realised because the latter describes
an operator which is Lorentz--normal but not Lorentz--symmetric.
The only way the strain operator can get nilpotency index three
is when it has a Jordan block of the type \cite[formula (8.4.18)]{gohberg}
with $\lambda=r=0$.
As a final observation, let us point out that in dimensions $n=2$ and $n=3$
the maximal nilpotency indices two and three can actually be attained.
\end{remark}

\subsection{Nilpotent operators in a 2D symplectic space}
\label{Nilpotent operators in a 2D symplectic space}

\begin{lemma}
Let $V$ be a 2-dimensional complex vector space
equipped with a symplectic form $\omega$
and let $L:V\to V$ be a linear operator.
Then $L$ is nilpotent if and only if there exists a $u\in V$ such that
\begin{equation}
\label{Lemma about symplectic space equation 1}
Lv=u\,\omega(u,v),
\qquad
\forall v\in V.
\end{equation}
\end{lemma}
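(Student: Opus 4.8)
The plan is to prove the two implications separately, exploiting two elementary facts: in a two-dimensional complex space a linear operator is nilpotent if and only if $L^2=0$ (its characteristic polynomial is then $x^2$, so this follows from Cayley--Hamilton), and the symplectic form satisfies $\omega(u,u)=0$ and is nondegenerate.

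The ``if'' direction is immediate. Assuming $Lv=u\,\omega(u,v)$ for all $v$, I would simply compute
\[
L^2v=L\bigl(u\,\omega(u,v)\bigr)=\omega(u,v)\,Lu=\omega(u,v)\,\omega(u,u)\,u=0,
\]
the last equality holding because $\omega(u,u)=0$ by antisymmetry. Hence $L^2=0$ and $L$ is nilpotent.

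For the converse I would start from $L^2=0$. If $L=0$ the statement holds trivially with $u=0$, so I assume $L\neq0$. Then $\operatorname{im}L\subseteq\ker L$ and, both being one-dimensional, they coincide. Choosing a nonzero $e_1\in\operatorname{im}L=\ker L$ and completing it to a basis $\{e_1,e_2\}$, I get $Le_1=0$ and $Le_2=c\,e_1$ with $c\neq0$ (else $L=0$), while nondegeneracy forces $\omega(e_1,e_2)=\kappa\neq0$. The task is then purely computational: write $u=\alpha e_1+\beta e_2$ and impose the required identity on the basis. Testing against $e_1$ gives $u\,\omega(u,e_1)=-\beta\kappa\,u$, which must equal $Le_1=0$; since $\kappa\neq0$ and $u\neq0$ this forces $\beta=0$. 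With $u=\alpha e_1$ the identity at $v=e_1$ then holds automatically (because $\omega(e_1,e_1)=0$), and at $v=e_2$ it reduces to $\alpha^2\kappa\,e_1=c\,e_1$, i.e. $\alpha^2=c/\kappa$.

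The single genuine subtlety --- the ``obstacle,'' such as it is --- appears only at this last scalar equation: I need a solution $\alpha$ of $\alpha^2=c/\kappa$ with $\alpha\neq0$. This is exactly where the hypothesis that $V$ is complex is used, since $\mathbb{C}$ is algebraically closed and $c/\kappa\neq0$; over $\mathbb{R}$ the lemma would fail whenever $c/\kappa<0$. Taking such an $\alpha$ yields a nonzero $u=\alpha e_1$ satisfying $Lv=u\,\omega(u,v)$ on the basis, hence on all of $V$ by bilinearity, which completes the proof.
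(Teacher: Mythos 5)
Your proof is correct, but it takes a genuinely different route from the paper's. The paper works in a Darboux basis for $\omega$ (so that $\omega(v,w)=\varepsilon_{rs}v^rw^s$), writes $L$ as a generic $2\times2$ matrix, characterises nilpotency as the vanishing of trace and determinant, and then exhibits an explicit rank-one factorisation
$L=\bigl(\begin{smallmatrix}\sqrt{b}\\-\sqrt{-c}\end{smallmatrix}\bigr)\bigl(\begin{smallmatrix}\sqrt{b}&-\sqrt{-c}\end{smallmatrix}\bigr)\bigl(\begin{smallmatrix}0&1\\-1&0\end{smallmatrix}\bigr)$,
reading off $u$ directly in terms of the matrix entries, with some bookkeeping to choose the complex square roots consistently so that $\sqrt{b}\,\sqrt{-c}=\sqrt{-bc}$. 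You instead adapt the basis to $L$ rather than to $\omega$: from $L^2=0$ and $L\ne0$ you get $\operatorname{im}L=\ker L$, pick $e_1$ spanning it, and reduce the whole problem to the single scalar equation $\alpha^2=c/\kappa$. Your argument is arguably cleaner --- it avoids the square-root sign conventions entirely, isolates the one place where algebraic closedness of $\mathbb{C}$ is used, and makes explicit that the lemma fails over $\mathbb{R}$ (a point the paper leaves implicit); what the paper's computation buys in exchange is a closed-form expression for $u$ in terms of the entries of $L$ in a symplectic basis, which is convenient for the explicit spinor calculations elsewhere in the text. One cosmetic remark: when you argue that testing against $e_1$ forces $\beta=0$, the hypothesis $u\ne0$ you invoke is not an assumption but a consequence of $L\ne0$ (a zero $u$ would give $L=0$); it would be worth saying so explicitly.
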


\begin{proof}
An operator of the form \eqref{Lemma about symplectic space equation 1}
is clearly nilpotent. So we only need to prove the converse statement.

Let $L$ be nilpotent. Choose a basis in $V$ so that the symplectic form reads
\begin{equation}
\label{Lemma about symplectic space equation 2}
\omega(v,w)=\varepsilon_{rs}\,v^rw^s,
\end{equation}
where $\varepsilon$ is the totally antisymmetric symbol, $\varepsilon_{12}=+1$.
The linear operator $L$ is represented in this basis by the matrix
\begin{equation}
\label{Lemma about symplectic space equation 3}
L^r{}_s=
\begin{pmatrix}
a&b\\
c&d
\end{pmatrix}.
\end{equation}
The nilpotency condition is equivalent to the trace and the determinant of
$L$ both being zero.
Hence, \eqref{Lemma about symplectic space equation 3}
can be rewritten as
\begin{equation}
\label{Lemma about symplectic space equation 4}
L^r{}_s=
\begin{pmatrix}
\sqrt{-bc}&b\\
c&-\sqrt{-bc}
\end{pmatrix}
\end{equation}
with appropriate choice of complex square root.
The matrix
\eqref{Lemma about symplectic space equation 4}
can be factorised as
\begin{equation}
\label{Lemma about symplectic space equation 5}
L^r{}_s
=
\begin{pmatrix}
\sqrt b\\
-\sqrt{-c}
\end{pmatrix}
\begin{pmatrix}
\sqrt b
&
-\sqrt{-c}
\end{pmatrix}
\begin{pmatrix}
0&1\\
-1&0
\end{pmatrix},
\end{equation}
where the square roots are chosen in such a way that
$\sqrt b\ \sqrt{-c}\,=\,\sqrt{-bc}\,$.
Formulae 
\eqref{Lemma about symplectic space equation 5}
and
\eqref{Lemma about symplectic space equation 2}
give us
\eqref{Lemma about symplectic space equation 1}
with
\[
u=
\begin{pmatrix}
\sqrt b\\
-\sqrt{-c}
\end{pmatrix}.
\]
\end{proof}

\section{Differential geometric characterisation of screw groups}

Let $\mathrm{SG}$ be one of the screw groups
$\mathrm{SG}_0^+$, $\mathrm{SG}_0^-$ or $\mathrm{SG}_m$
defined in Section~\ref{Special subgroups of the Poincare group}. In what follows, the (global) isomorphism $T\mathbb{M}\simeq \mathbb{M} \times \mathbb{M}$ will be tacitly understood. In particular, we will not distinguish between points of $M$ and vectors in the tangent fibres.

Direct inspection shows that for any $P,Q\in\mathbb{M}$
there exists a unique $\xi\in\mathrm{SG}$
such that $\xi(P)=Q$. This allows us to define a map
\begin{eqnarray*}
\Upsilon: T_P\mathbb{M} \to  T_Q\mathbb{M}\,,\\
V \mapsto \xi(P+V)-Q,
\end{eqnarray*}
depending only on $P$ and $Q$, which, in turn, determine $\xi$.
The map $\Upsilon$ is linear and defines a metric compatible affine
connection with vanishing curvature and nonvanishing torsion.
Such connections are known
as Weitzenb\"ock (teleparallel) connections.

We define the covariant derivative of a vector field as
\[
\frac{\partial v^\alpha}{\partial x^\beta}
+\Upsilon^\alpha{}_{\beta\gamma}v^\gamma
\]
and torsion as
\begin{equation}
\label{definition of torsion}
T^\alpha{}_{\beta\gamma}
:=
\Upsilon^\alpha{}_{\beta\gamma}
-
\Upsilon^\alpha{}_{\gamma\beta}\,.
\end{equation}
It is known \cite[formula (7.34)]{nakahara}
that a metric compatible affine
connection is determined by metric and torsion,
so torsion provides a convenient tensorial description of a connection.

Torsion has three irreducible pieces \cite[formulae (4.1)--(4.4)]{mccrea}
\[
T=T^\mathrm{ax}+T^\mathrm{vec}+T^\mathrm{ten},
\]
\begin{equation}
\label{definition of axial torsion}
T^\mathrm{ax}_{\alpha\beta\gamma}
=
\frac13
(
T_{\alpha\beta\gamma}
+
T_{\beta\gamma\alpha}
+
T_{\gamma\alpha\beta}
),
\end{equation}
\begin{equation}
\label{definition of vector torsion}
T^\mathrm{vec}_{\alpha\beta\gamma}
=
\frac13
(
g_{\alpha\beta}T^\mu{}_{\mu\gamma}
-
g_{\alpha\gamma}T^\mu{}_{\mu\beta}
),
\end{equation}
labelled by the adjectives \emph{axial}, \emph{vector} and \emph{tensor} respectively.
We remind the reader that we raise and lower tensor indices using the metric $g$.

\begin{lemma}
\label{lemma about torsion}
For all three groups
$\mathrm{SG}_0^+$, $\mathrm{SG}_0^-$ and $\mathrm{SG}_m$
torsion is constant and vector torsion is zero.
The corresponding formulae for axial torsion read
\[
(*T^\mathrm{ax}_\pm)_\alpha=
\mp
\,
\frac23
\,
(\,
0\,,\,0\,,\,1\,,\,1
\,)\,,
\]
\[
(*T^\mathrm{ax}_m)_\alpha=
-\,
\frac43
\,
(\,
0\,,\,0\,,\,m\,,0
\,)\,.
\]
\end{lemma}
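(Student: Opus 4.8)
The plan is to compute the Weitzenb\"ock connection explicitly from the parallel transport map $\Gamma$ and then read off torsion and its irreducible pieces. First I would solve, for each screw group, the equation $\xi(P)=Q$ for the group parameter $q$: the third and fourth rows of the matrix representation give $q^3=Q^3-P^3$ and $q^4=Q^4-P^4$ at once, which fixes the rotation angle (namely $q^3+q^4$ for the massless groups and $2mq^4$ for the massive group), and the first two rows then determine $q^1,q^2$. Substituting into $\Gamma(V)=\xi(P+V)-Q$, the translation parts cancel and one finds that $\Gamma$ is simply a rotation acting in the $(x^1,x^2)$-plane,
\[
\Gamma(V)^\alpha=R^\alpha{}_\beta(\theta)\,V^\beta,
\]
with angle $\theta=(Q^3+Q^4)-(P^3+P^4)$ in the massless case and $\theta=2m(Q^4-P^4)$ in the massive case.

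Next I would linearise around $Q=P$ to extract the connection coefficients. Setting $Q=P+dx$ makes $\theta$ linear in $dx$ with a constant infinitesimal generator $J_\pm$ (resp.\ $J$), the antisymmetric matrix generating rotations in the $(x^1,x^2)$-plane, so that $\Gamma(V)=V+\theta\,J_\pm V+O(\theta^2)$. Comparing with the parallel-transport law $dv^\alpha=-\Gamma^\alpha{}_{\gamma\beta}\,v^\beta\,dx^\gamma$ coming from the covariant-derivative convention of Appendix~\ref{Exterior calculus notation} yields \emph{constant} coefficients: $\Gamma^\alpha{}_{3\beta}=\Gamma^\alpha{}_{4\beta}=-J_\pm{}^\alpha{}_\beta$ for the massless groups and $\Gamma^\alpha{}_{4\beta}=-2m\,J^\alpha{}_\beta$ for the massive group, all other components vanishing. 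Since these are independent of the basepoint, torsion is constant, which proves the first claim.

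The computation of torsion and its decomposition is then essentially bookkeeping. From $T^\alpha{}_{\beta\gamma}=\Gamma^\alpha{}_{\beta\gamma}-\Gamma^\alpha{}_{\gamma\beta}$ one finds that in every nonzero component the upper index lies in $\{1,2\}$ while the first lower index lies in $\{3,4\}$; in particular the upper index never coincides with the first lower index, so the trace $T^\mu{}_{\mu\gamma}$ vanishes identically and hence the vector torsion \eqref{definition of vector torsion} is zero. Lowering the upper index with $g=\operatorname{diag}(1,1,1,-1)$ (no sign cost, since that index is spatial) and antisymmetrising over all three indices via \eqref{definition of axial torsion}, I would find that the only surviving totally antisymmetric components are $T^{\mathrm{ax}}_{123}$ and $T^{\mathrm{ax}}_{124}$ in the massless case and $T^{\mathrm{ax}}_{124}$ alone in the massive case. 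A final application of the Hodge star of Appendix~\ref{Exterior calculus notation} (a $3$-form dualising to a $1$-form, with $\sqrt{-\det g}=1$) produces the stated covectors $\mp\frac23(0,0,1,1)$ and $-\frac43(0,0,m,0)$.

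The only genuine obstacle is sign discipline rather than any conceptual difficulty: the unconventional index ordering in the covariant-derivative convention, the orientation-dependent sign $\varepsilon_{1243}=-1$ entering the Hodge dual, and the factor $g^{44}=-1$ incurred when raising the time index all conspire, so the main risk is an arithmetic sign slip. I would note in particular that the chirality-distinguishing $\mp$ in the massless answer—the feature on which Lemma~\ref{Lemma about difference under conjugation of massless screw groups} depends—descends directly from the $\pm$ in the generator $J_\pm$ and is carried unchanged through the entire calculation.
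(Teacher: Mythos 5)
Your proposal is correct and takes essentially the same route as the paper, which simply asserts the nonzero Weitzenb\"ock connection coefficients ($\Gamma^1{}_{32}=\Gamma^1{}_{42}=\pm1$, $\Gamma^2{}_{31}=\Gamma^2{}_{41}=\mp1$ for $\mathrm{SG}_0^\pm$; $\Gamma^1{}_{42}=2m$, $\Gamma^2{}_{41}=-2m$ for $\mathrm{SG}_m$) as the outcome of ``straightforward calculations'' and then substitutes them into \eqref{definition of torsion}--\eqref{definition of vector torsion}; your explicit linearisation of $\Gamma(V)=\xi(P+V)-Q$ reproduces exactly these coefficients and the same final bookkeeping. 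One cosmetic point: after antisymmetrisation the first lower index of a nonzero torsion component need not lie in $\{3,4\}$ (e.g.\ $T^1{}_{23}\ne0$), but the inference you actually need---that the upper index never coincides with either lower index, so $T^\mu{}_{\mu\gamma}=0$ and vector torsion vanishes---still holds.
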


\begin{proof}
Straightforward calculations give the following expressions
for the nonzero connection coefficients.
\begin{itemize}
\item
For $\mathrm{SG}_0^\pm$
\[
\Upsilon^1{}_{32}=\pm 1,
\qquad
\Upsilon^2{}_{31}=\mp 1,
\]
\[
\Upsilon^1{}_{42}=\pm 1,
\qquad
\Upsilon^2{}_{41}=\mp 1.
\]
\item
For $\mathrm{SG}_m$
\[
\Upsilon^1{}_{42}=2m,
\qquad
\Upsilon^2{}_{41}=-2m.
\]
\end{itemize}
It remains only to substitute the above expressions
into formulae
\eqref{definition of torsion}--\eqref{definition of vector torsion}.
\end{proof}

\section{Explicit formulae for our field equations}
\label{Explicit field equations of nonlinear elasticity}

In this appendix we sketch out an algorithm for the derivation
of the explicit form of the differential operator $E(\varphi)$
introduced in Section~\ref{Nonlinear field equations}.
We will do this for the special case
of a Lagrangian of the form \eqref{L mathcal quadratic} from
Example~\ref{example of Quadratic Lagrangian}
and in Minkowski space.
Throughout this appendix we shall use the notation
$\partial_\alpha=\partial/\partial x^\alpha$.

Substituting \eqref{L mathcal quadratic}
into \eqref{relation between two Ls} we get
\begin{equation}
\label{L quadratic}
L(e_2,e_3,e_4)=\alpha(e_2+e_3+e_4)^2+\beta\,e_2\,.
\end{equation}

To begin with, let us rewrite the scalars $e_3$ and $e_4$ in terms of
$\operatorname{tr}(S^k)$, $k=1,2,3,4$:
\begin{subequations}\label{scalar invariants appendix}
\begin{align}
\label{scalar invariant 3 appendix}
e_3&=
\frac16
\left[
(\operatorname{tr}S)^3
-
3(\operatorname{tr}S)\operatorname{tr}(S^2)
+
2\operatorname{tr}(S^3)
\right],
\\
\label{scalar invariant 4 appendix}
e_4&=
\frac1{24}
\left[
(\operatorname{tr}S)^4
-
6(\operatorname{tr}S)^2\operatorname{tr}(S^2)
+
3(\operatorname{tr}(S^2))^2
+
8(\operatorname{tr}S)\operatorname{tr}(S^3)
-
6\operatorname{tr}(S^4)
\right].
\end{align}
\end{subequations}

Substituting
\eqref{scalar invariant 2},
\eqref{scalar invariant 3 appendix}
and
\eqref{scalar invariant 4 appendix}
into
\eqref{L quadratic}
we get a representation of our Lagrangian $L$
as a linear combination of terms
\begin{equation}
\label{term of Lagrangian}
\prod_{j=1}^kS^{\alpha_j}{}_{\beta_j},
\end{equation}
where
$\{\beta_1,\ldots,\beta_k\}$
is some permutation of
$\{\alpha_1,\ldots,\alpha_k\}$.
The number $k$ takes values from two to eight.
In what follows we write down the contribution to $E(\varphi)$
coming from a single term~\eqref{term of Lagrangian}.

The explicit formula for the strain tensor reads
\begin{equation*}
S^\alpha{}_\beta
=
\partial_\beta A^\alpha
+
\partial^\alpha A_\beta
+
(\partial^\alpha A_\gamma)(\partial_\beta A^\gamma).
\end{equation*}
Variation $A^\alpha(x)\mapsto A^\alpha(x)+\Delta A^\alpha(x)$
gives us
\begin{multline*}
\Delta S^\alpha{}_\beta
=
\partial_\beta(\Delta A^\alpha)
+
\partial^\alpha(\Delta A_\beta)
+
(\partial^\alpha(\Delta A_\gamma))(\partial_\beta A^\gamma)
+
(\partial^\alpha A_\gamma)(\partial_\beta(\Delta A^\gamma))
\\
=
\delta^\alpha{}_\gamma\,
\partial_\beta(\Delta A^\gamma)
+
g_{\beta\gamma}\,
\partial^\alpha(\Delta A^\gamma)
+
(\partial^\alpha(\Delta A^\gamma))(\partial_\beta A_\gamma)
+
(\partial^\alpha A_\gamma)(\partial_\beta(\Delta A^\gamma)).
\end{multline*}
We define the linear differential operator
\begin{equation*}
D^\alpha{}_{\beta\gamma}
:=
[g_{\beta\gamma}+(\partial_\beta A_\gamma)]\,\partial^\alpha
+
[\delta^\alpha{}_\gamma+(\partial^\alpha A_\gamma)]\,\partial_\beta
+
2
(\partial^\alpha\partial_\beta A_\gamma).
\end{equation*}
The contribution to $E(\varphi)$
coming from \eqref{term of Lagrangian}
reads
\begin{equation*}
\label{contribution to E}
-
\sum_{l=1}^k
D^{\alpha_l}{}_{\beta_l\gamma}
\prod_{
\substack{j=1\\
j\ne l}}^kS^{\alpha_j}{}_{\beta_j}.
\end{equation*}

The above algorithm can be easily generalised to spacetimes with
$x$-dependent metric and to Lagrangians of general form.

\end{appendices}


\end{document}